\newcommand{\norm}[1]{\lVert#1\rVert}
\newcommand{\normlll}[1]{{\left\vert\kern-0.25ex\left\vert\kern-0.25ex\left\vert #1 
    \right\vert\kern-0.25ex\right\vert\kern-0.25ex\right\vert}}
\newcommand{\im}{\mathrm{i}}
\newcommand{\dee}{\mathrm{d}}
\newcommand{\ztwo}{\{\pm 1\}}
\newcommand{\hs}{\mathscr{H}}
\newtheorem{theorem}{Theorem}[section]
\newtheorem{corollary}[theorem]{Corollary}
\newtheorem{lemma}[theorem]{Lemma}
\newtheorem{proposition}[theorem]{Proposition}
\theoremstyle{remark}
\newtheorem{remark}[theorem]{Remark}
\theoremstyle{definition}
\newtheorem{definition}[theorem]{Definition}
\theoremstyle{definition}
\newtheorem{example}{Example}[section]
\begin{document}

\title{On the $K$-theoretic classification of topological phases of matter}
\author{Guo Chuan Thiang}
\address{Mathematical Institute, University of Oxford, UK}
\email{guo.thiang@balliol.ox.ac.uk}
\date{\today}

\begin{abstract}
We present a rigorous and fully consistent $K$-theoretic framework for studying gapped topological phases of free fermions such as topological insulators. It utilises and profits from powerful techniques in operator $K$-theory. From the point of view of symmetries, especially those of time reversal, charge conjugation, and magnetic translations, operator $K$-theory is more general and natural than the commutative topological theory. Our approach is model-independent, and only the symmetry data of the dynamics, which may include information about disorder, is required. This data is completely encoded in a suitable $C^*$-superalgebra. From a representation-theoretic point of view, symmetry-compatible gapped phases are classified by the super-representation group of this symmetry algebra. Contrary to existing literature, we do not use $K$-theory to classify phases in an absolute sense, but only relative to some arbitrary reference. $K$-theory groups are better thought of as groups of obstructions between homotopy classes of gapped phases. Besides rectifying various inconsistencies in the existing literature on $K$-theory classification schemes, our treatment has conceptual simplicity in its treatment of all symmetries equally. The Periodic Table of Kitaev is exhibited as a special case within our framework, and we prove that the phenomena of periodicity and dimension shifts are robust against disorder and magnetic fields.
\end{abstract}

\maketitle

\section{Introduction}\label{section:introduction}
In his short and influential paper \cite{kitaev2009periodic}, Kitaev proposed that gapped phases of non-interacting fermions can be classified using the techniques of topological $K$-theory. In his approach, there are $2+8=10$ classes of systems to consider in each spatial dimension $d$, based on the presence or absence of time reversal and/or $\mathrm{U}(1)$ symmetry. Their classification groups exhibit a certain periodicity with respect to $d$ and was attributed, somewhat mysteriously, to Bott periodicity. A Periodic Table was partially drawn up, with each symmetry class in each spatial dimension having one of the $K$-theory groups of a point as its classification group. Little detail was provided in the original paper, which led to a series of authors providing their own accounts \cite{ryu2010topological,stone2011symmetries,abramovici2012clifford,freed2013twisted}. Upon careful investigation, these contain inequivalent treatments of distinct families of free-fermion systems. Furthermore, subsequent work on crystalline \cite{chiu2013classification} and weak topological insulators revealed the existence of phases which are not directly accounted for by the Periodic Table. Despite the lack of a proper proof of (or even the necessary definitions or assumptions in) the Periodic Table, a consensus that it unambiguously provides a complete $K$-theoretic classification of free-fermion phases appears to have been reached; for instance, see the review papers \cite{hasan2010colloquium,qi2011topological}.

Unfortunately, there seems to be a number of inconsistencies in the mathematics, and more crucially in the physical interpretation of $K$-theory groups, in the existing literature on topological phases. With the exception of the excellent Freed--Moore paper \cite{freed2013twisted} (which does not address the matter of dimension shifts in the Periodic Table), there has been very little attempt to put $K$-theoretic classification ideas on a firm mathematical footing. Consequently, the full machinery of $K$-theory has yet to be substantially utilised. This is not due to an incompatibility between the mathematics of $K$-theory and the physics, as rigorous work on the Integer Quantum Hall Effect \cite{bellissard1994noncommutative}, bulk-edge correspondence \cite{kellendonk2004boundary,kellendonk2004quantization} and Fermi surfaces \cite{horava2005stability} demonstrate. In addition, the Chern numbers commonly used in the physics literature as topological invariants can more fundamentally be understood as $K$-theory invariants \cite{prodan2014non}. Thus $K$-theoretic objects have really been lurking in the background in condensed matter physics for a long time.

This paper seeks to address these issues by providing a complete and consistent framework for the use of $K$-theory in the study of gapped topological phases. Our treatment of quantum mechanical symmetries borrows heavily from the comprehensive analysis in \cite{freed2013twisted}. Subsequently, this paper diverges from existing work in two very important ways. First, we utilise operator $K$-theory\footnote{More precisely, a version due to Karoubi \cite{karoubi1978k,karoubi1968algebres,karoubi1970algebres,karoubi2008clifford}.} rather its commutative (topological) version, which makes available powerful theorems such as the Connes--Thom isomorphisms, the Packer--Raeburn decomposition and stabilisation theorems, and various exact sequences for the $K$-theory of crossed product algebras. The second difference is physical: our representation spaces for the symmetries and Hamiltonians are single-particle Hilbert spaces for charged free-fermions, as opposed to Dirac--Nambu spaces (see Item \ref{item:diracnambu} below). 

Next, the Clifford algebras are presented as twisted group algebras of time-reversal and/or charge-conjugation symmetries. Therefore Clifford algebras enter our $K$-theoretic framework in a fundamental way, rather than by a separate ad-hoc analysis of time-reversal and charge-conjugation. Furthermore, our important physical definitions are completely new and relates to Clifford algebras and $K$-theory in a mathematically precise way. For instance, Definition \ref{definition:symmetrycompatiblehamiltonianshomotopy} gives a precise notion of homotopic phases, and Definition \ref{definition:differencegroup} illustrates how a $K$-theoretic \emph{difference-group} $\mathbf{K}_0$ classifies obstructions in passing between phases. No unnatural Grothendieck completion is carried out, and inverses arise simply by taking differences in the opposite order. 

Thus the collection of topological phases should really be thought of as a torsor for the $K$-theoretic difference group, in the same way that an affine space has forgotten its origin. Once a fixed Hamiltonian has been chosen as a standard reference, other phases are measured with respect to it. Indeed, the very idea of classifying phases up to homotopy in an absolute sense is problematic (see Example \ref{example:isomorphicnothomotopic}). The philosophy of using $K$-theory to classify differences between phases appears, in any case, to be the original intention of Kitaev in \cite{kitaev2009periodic}. Furthermore the concept of a \emph{relative index} has already been studied in the context of the Quantum Hall Effect in \cite{avron1990quantum,avron1994charge}. Having set up the crucial definitions, the machinery of $K$-theory takes over and allows us to derive the Periodic Table of Kitaev (when appropriately interpreted) as a simple corollary. Our main result is Theorem \ref{theorem:generaldimensionshift}, which demonstrates that the phenomenon of ``dimension shifts'' and periodicity in the $K$-theoretic classification remains even in the presence of disorder and magnetic fields.

Conceptually, our starting point is Wigner's theorem, which says that the topological symmetry group $G$ (which is not assumed to be compact) for the dynamics of a quantum mechanical system must be represented projectively on a complex Hilbert space $\hs$ as unitary or antiunitary operators. A continuous homomorphism $\phi:G\rightarrow\ztwo$, distinguishes the unitarily-implemented subgroup $G^u\coloneqq\mathrm{ker}(\phi)$ from the antiunitarily implemented subset $G^a=G - G^u$. For any two $x,y\in G$, their representatives $\theta_x, \theta_y$ satisfy $\theta_x\theta_y=\sigma(x,y)\theta_{xy}$, with $\sigma:G\times G\rightarrow\mathrm{U}(1)$ a generalised 2-cocycle,
\begin{equation}
	\sigma(x,y)\sigma(xy,z)=\sigma(y,z)^x\sigma(x,yz),\label{generalised2cocycle}
\end{equation}
where for $\lambda\in\mathrm{U}(1)$, $\lambda^x\coloneqq\lambda$ if $\phi(x)=+1$ and $\lambda^x\coloneqq\overline{\lambda}$ if $\phi(x)=-1$. Thus, interesting topology resides not only in the group of symmetries, but also in the cohomological data of projective unitary-antiunitary representations (PUA-reps). The PUA-representation theory of $(G,\phi,\sigma)$ is a special case of a twisted covariant representation of a twisted $C^*$-dynamical system as explained in Section \ref{section:twistedcovariantrepresentations}. We go a step further and consider charge-conjugating symmetries on the same fundamental level as other symmetries, leading to $\mathbb{Z}_2$-graded versions of twisted covariant representations. This step is already suggested by the central role of charge-conjugation in relativistic quantum theories, and is ultimately vindicated in our context by the unifying role of super-algebra in $K$-theory.

Our main novel physical insight is then the following: \emph{the topology which appears in free-fermion topological phases such as topological insulators has its origin in symmetry}. Associated to the algebra of symmetries is a derived ``space'' (e.g.\ a Brillouin torus) which is noncommutative in general, and is of secondary importance. We remark that we are not considering topology arising from, e.g., putting physical systems on topologically interesting physical spaces (commutative or otherwise) --- the latter is more closely related to \emph{topological order} \cite{chen2010local}. It may be interesting to investigate whether these two sources of interesting topology can be combined.

\subsection{Outline}
In Section \ref{section:inconsistencies}, we point out some important inconsistencies in the existing literature. In Section \ref{section:positiveenergyquantization}, we set our conventions for describing symmetry-compatible free-fermion dynamics. They motivate the definitions of (graded) covariant representations, twisted $C^*$-dynamical systems, and twisted crossed products in Sections \ref{section:twistedcovariantrepresentations}--\ref{section:twistedcrossedproducts}. These latter two sections review material which may not be familiar to researchers in topological phases, and can be skimmed over by experts. The intimate relation between Clifford algebras, twisted group algebras, and the tenfold way is explained in Section \ref{section:cliffordtenfoldway}. We move on to the $K$-theoretic classification of symmetry-compatible gapped Hamiltonians proper in Sections \ref{section:superrepresentationgroups} and \ref{section:karoubidifferenceconstruction}, whose computation is illustrated by examples in Section \ref{section:decompositioncrossedproducts}. We treat the special case of topological band insulators in Section \ref{section:bandinsulators}. A Periodic Table in the general sense of Kitaev is derived in Section \ref{section:dimensionshift}, and we prove that periodicity and dimension shifts persist under very general conditions in Theorem \ref{theorem:generaldimensionshift}.

\section{Remarks on the existing literature and some inconsistencies}\label{section:inconsistencies}
We begin with a list of some inconsistencies in both the mathematics and physical interpretation in the existing literature on $K$-theoretic free-fermion classification schemes. They will be further elaborated upon and rectified in the main body of the paper.

\begin{enumerate}
\item A common definition of a symmetry-compatible topological phase begins with the space $Y$ of Hamiltonians (gapped or otherwise) which are compatible with a certain given representation of some symmetry data. This space is sometimes called the ``classifying space'', which should be distinguished from the mathematical notion with the same name. Two Hamiltonians which are path-connected within this space are identified. The phases, up to homotopy, are then given by the path-components $\pi_0(Y)$. This is merely a \emph{set}, with no distinguished identity element, composition law, or inverse. On the other hand, a $K$-theory invariant has the crucial and useful additional structure of an abelian \emph{group}. 

\begin{example} In a ``no symmetry'' situation in zero spatial dimensions, one has a bare Hilbert space $\mathbb{C}^N$, and a spectrally-flattened compatible gapped Hamiltonian is just a grading operator on $\mathbb{C}^N$. The space of such Hamiltonians is the union of the Grassmanians of $k$-planes in $\mathbb{C}^N$, with $k=0,1,\ldots,N$, each of which is connected. Therefore the phases, up to homotopy, form an $(N+1)$-element \emph{set}. A ``large-$N$'' limit is often taken so that the set of phases becomes a countably infinite set. This set is then conferred the status of the free abelian group $\mathbb{Z}\cong K^0(\star)$ in an unclear manner.
\end{example}

\item The (mathematical) classifying space $C_0$ for complex $K^0$ is $\mathbb{Z}\times \mathrm{U}/(\mathrm{U}\times \mathrm{U})$, where $\mathrm{U}/(\mathrm{U}\times \mathrm{U})$ is the infinite complex Grassmanian. This means that $\tilde{K}^0(X)$ is isomorphic to $[X,C_0]$, the homotopy classes of (based) maps from $X$ to $C_0$. The $\mathbb{Z}$ factor in the classifying space does not arise directly from finite Grassmanians, but is related to Bott periodicity. Writing $S^0$ for the $0$-sphere,
\begin{equation*}
	\mathbb{Z}\cong\pi_0(C_0)=[S^0,C_0]\cong\tilde{K}^0(S^0)\cong K^0(\star),
\end{equation*}
as befits an extraordinary cohomology theory. In general, one needs to be careful when defining topological phases in zero dimensions.

A related issue arises for topological phases in higher spatial dimensions. A popular, if somewhat mathematically misguided, point of view is to regard $\tilde{K}^{-n}(X)$ as the ``homotopy group'' of (based) maps $[X,C_n]$, where $C_n$ is a classifying space for complex $K$-theory; similarly for $R_n$ in the real case. Then if $X$ is the $m$-sphere, $[S^m,C_n]$ is identified with $\pi_m(C_n)$. However, it is only possible to form a group from the \emph{set} of homotopy classes of (based) maps $[X,Y]$, when $Y$ is a $H$-group (homotopy-associative) or when $X$ is a $H$-cogroup. The two possible ways of defining a composition on $[X,Y]$ from the $H$-group or $H$-cogroup structures are a priori different. Nevertheless, if both choices are available, the resulting groups are isomorphic (pp.\ 44 of \cite{spanier1966algebraic}). Not all $X$ are $H$-cogroups, but all the $C_n$ are $H$-groups, so $\tilde{K}^{-n}(X)=[X,C_n]$ is well-defined as a group for all $X$, with respect to the $H$-group structure on $C_n$. In particular, \emph{the group structure on $K^0(X)$ is the direct sum of virtual bundles over $X$, whereas the $\pi_m(C_0)=[X,C_0]$ interpretation of $\tilde{K}^0(S^m)$ only makes sense for $X=S^m$}; yet the case of $X=\mathbb{T}^d$ is most certainly of interest in the study of topological insulators. Well-established topological invariants, such as the Chern numbers associated to filled Landau bands in the Integer Quantum Hall Effect, have the important property of being \emph{additive} with respect to the direct sum of bands (vector bundles), whereas the group structure of the homotopy groups $\pi_m(C_0)$ is a priori a distinct one.

\item When there is a discrete lattice of translational symmetries, one is led to the study of vector bundles over the Brillouin torus $X=\mathbb{T}^d$. For consistency with the zero-dimensional case, one should look for the \emph{unreduced} $K$-theory groups of $\mathbb{T}^d$. For convenience, it is often assumed that the ``interesting'' Brillouin zone $X'$ is a $d$-sphere, upon which the \emph{reduced} $K$-theory of $X'=S^d$ may be identified with a homotopy group of one of the classifying spaces $C_n$ or $R_n$. The reduced theory does not have the same physical interpretation as the unreduced theory. In fact, $\tilde{K}^{-n}(S^d)\cong K^{-n}(\mathbb{R}^d)$, and the latter has an interpretation in terms of vector bundles over $\mathbb{R}^{d+n}$ trivialised outside a compact set, which are in turn related to systems with \emph{continuous} translational symmetry. Thus there is a conflation of the reduced and unreduced theories. The discrepancy in the $K$-theory groups in the passage from $\mathbb{T}^d$ to $S^d$ is then attributed to the physical difference between ``strong'' and ``weak'' topological insulators. 

\item Reduced $K$-theory is sometimes motivated by restricting attention to stable isomorphism classes of vector bundles over $X$. Physically, stabilization entails identifying systems which differ by some ``topologically trivial" subsystem. For band insulators (without symmetries beyond the discrete translations $\mathbb{Z}^d, d\geq 1$), this means that adding trivial bands should not affect the topological classification. However, in the zero-dimension case, the reduced $K$-theory of a point is trivial, because the complex vector spaces in question are all trivially stably equivalent! This contradicts the $\mathbb{Z}$-classification which appears in many tables \cite{ryu2010topological,kitaev2009periodic}. Furthermore, the higher $K$-theory functors (reduced or otherwise) do not have a direct analogous formulation in terms of vector bundles over $X$. A consistent notion of ``triviality'' is thus lacking.

\item For topological insulators, i.e.\ graded vector bundles, the precise equivalence relation defining a topological phase/class is seldom consistently chosen or clearly defined. Vector bundles, possibly with extra structure dictated by symmetries, can be organised into isomorphism classes, graded or otherwise. Intuitively, a notion of ``homotopy classes of bundles'' is desired, but this cannot be in the trivial sense of deforming the total bundle space since a vector bundle can always be retracted onto its base space. Isomorphism classes of ungraded vector bundles correspond to homotopy classes of \emph{maps} from the base space to an appropriate classifying space, not those of the bundle itself. On the other hand, a gapped Hamiltonian determines a grading on a vector bundle, and it is homotopy within the space of allowed gradings (i.e.\ deformations of possible Hamiltonians) which actually captures the physical intuition of ``homotopic gapped phases''.

\item Within a single fixed realisation of relevant symmetries on a given representation space, it makes sense to consider homotopies between symmetry-compatible Hamiltonians. For two \emph{different} representation spaces (corresponding to two different physical systems), a choice of isomorphism (if available) is required before the question of whether a Hamiltonian on the first space can be deformed into a Hamiltonian on the second space can be asked. However, two candidate symmetry-compatible Hamiltonians on the same space can be isomorphic without being homotopic (see Example \ref{example:isomorphicnothomotopic}). Consequently, it is not straightforward to define a notion of homotopy between two Hamiltonians defined on different spaces.

\item A detailed treatment of the Integer Quantum Hall Effect (IQHE), which does not make the assumption of rational flux and includes the effects of disorder, utilises tools from non-commutative geometry and operator $K$-theory \cite{bellissard1994noncommutative}. Despite this, the IQHE is included in Kitaev's Periodic Table, which actually assumes the presence of a meaningful commutative Brillouin zone.

The Periodic Table of \cite{kitaev2009periodic}, when applied to topological insulators, should therefore be interpreted more carefully. Furthermore, it has already been recognised that the presence of point symmetries leads to different classification groups from those in his table. This indicates that vital topological information is present in the symmetry group itself.

\item \label{item:altlandzirnbauer}The Altland--Zirnbauer (AZ) classification of disordered femionic systems \cite{altland1997nonstandard,heinzner2005symmetry} is based on the compact classical symmetric spaces which provide spaces of symmetry-compatible time evolutions. While large-$N$ versions of symmetric spaces also feature in the classifying spaces of $K$-theory, the AZ classification (and indeed its Wigner--Dyson predecessor) makes no explicit reference to time evolutions generated by \emph{gapped} Hamiltonians, whereas $K$-theory is supposed to classify gapped phases.

An attempt to reconcile this disconnect was made in \cite{stone2011symmetries}. The general approach there and elsewhere in the literature (with \cite{freed2013twisted} being an exception) is to keep only the data of the ground states (or valence bands of topological insulators) for the purposes of classification. As long as charge conjugating (i.e.\ Hamiltonian reversing) symmetries are not present, this makes good sense and can even be motivated physically. In such cases, the valence and conduction bands \emph{separately} determine $K$-theory invariants of the insulating system, with the former usually more interesting. The availability of an interpretation of $K$-theory groups referring only to the valence band, distinguishes the A, AI and AII classes from the other classes in the tenfold way (see Section \ref{subsection:finiteversusinfinite}). These are the three classes whose topological invariants (the Chern numbers and Kane--Mele $\mathbb{Z}_2$ invariants) have been studied most closely in the condensed matter literature.

In general, a charge-conjugation symmetry may constrain the topology of \emph{both} the conduction and valence bands. Two gapped Hamiltonians which are non-homotopic may posses homotopic valence bands; indeed, remembering only the data of the valence band means that one loses information about the presence or absence of a charge-conjugation symmetry.

\item \label{item:diracnambu} The precise treatment of charge-conjugation symmetry differs among authors, and the related notion of particle-hole symmetry in a Dirac--Nambu formalism is often thrown into the mix as well \cite{freed2013twisted,heinzner2005symmetry,abramovici2012clifford,ryu2010topological}. The Dirac--Nambu space is a vector space of \emph{second-quantized} creation operators and annihilation operators, and is a useful auxiliary space often used for studying Bogoliubov de Gennes Hamiltonians. However, it also comes with extra structure such as a canonical conjugation which exchanges the creation operators with the annihilation operators. Furthermore, it is necessarily even-dimensional. These structures are not referred to in $K$-theory. For instance, not all elements of $K^0(X)$ or $KO^0(X)$ are represented by even-rank vector bundles over $X$. Despite this, $K$-theory is sometimes claimed to classify gapped systems in the Dirac--Nambu formalism.

Whether symmetries commute with the Hamiltonian (e.g.\ \cite{heinzner2005symmetry}), or are allowed to anticommute with the Hamitonian (e.g.\ \cite{ryu2010topological,freed2013twisted}), depends on whether one is in a first-quantized or second-quantized setting. This is related to whether charge-conjugation is implemented antiunitarily or unitarily. We explain this ambiguity in Section \ref{section:positiveenergyquantization}.

\item The use of $KR$-theory to account for the effect of time-reversal and charge-conjugation symmetries is actually redundant in our operator $K$-theory approach. In fact, band insulators with these discrete symmetries are not the actual Real bundles required for a $KR$-theory analysis, and the proper construction turns out to be quite involved (see Corollary 10.25 of \cite{freed2013twisted}).

\begin{example}\label{example:isomorphicnothomotopic}[Homotopic versus isomorphic band insulators]
Consider a band insulator in one spatial dimension which has one valence band and one conduction band. Suppose that there is also a sublattice symmetry $S$ which is unitary and squares to the identity (also called a \emph{chiral} symmetry). Physically, this is a Class $AIII$ insulator; mathematically, it is a rank-two $\mathbb{Z}_2$-graded complex hermitian vector bundle $E$ over $S^1$, with an \emph{odd} fibrewise action of $S$. Let us ignore the grading into valence and conduction bands for now. Let $\sigma_1,\sigma_2,\sigma_3$ be the standard Pauli matrices, and suppose that $S$ acts via the matrix $\sigma_3$ on each fibre with respect to global coordinates $(\theta,v)\in S^1\times\mathbb{C}^2\cong E$. According to the usual prescription in the literature, we should look for gradings $\Gamma$ of $E$, such that the grading operator $\Gamma_\theta$ on each fibre $E_\theta$ anticommutes with $\sigma_3$. Since $\Gamma_\theta$ is traceless and hermitian, it must be of the form $\Gamma_\theta=\cos(f(\theta))\sigma_1+\sin(f(\theta))\sigma_2$, for some $f(\theta)\in [0,2\pi]=S^1$. In other words, the set of gradings of $E$ which are $S$-compatible corresponds to the set of continuous functions $f:S^1\rightarrow S^1$, and a homotopy between two such functions is precisely a homotopy between the two bundle gradings ($AIII$ band insulator structures) that they determine. Therefore, the set of homotopic phases is $\pi_1(S^1)\cong \mathbb{Z}$.

Define $\Gamma^0$ and $\Gamma^1$ by $\Gamma^0_\theta=\sigma_1$ and $\Gamma^1_\theta=\cos(2\theta)\sigma_1+\sin(2\theta)\sigma_2$, then $\Gamma^0$ and $\Gamma^1$ are not homotopic. However, $(E,\Gamma^0)$ and $(E,\Gamma^1)$ are \emph{isomorphic} (in the graded sense), via the unitary bundle map $\Phi:(\theta,v)\mapsto(\theta,e^{-\im\theta \sigma_3}v)$, which is just a change of coordinates. One checks that $\Phi\circ \Gamma^0=\Gamma^1\circ \Phi$ and $\Phi\circ S=S\circ\Phi$, so $\Phi$ is indeed an even bundle map respecting the action of $S$. When $E$ and the $S$-action are fixed as in this example, writing $\Gamma_\theta$ in the manner that we did picks out a canonical choice of reference, namely $\Gamma^0$, which corresponds to the function $f\equiv 0\in[0]\in\pi_1(S^1)$. Any other compatible $\Gamma$ determined by some other $f'$ is measured against $\Gamma^0$ through $[f']\in \pi_1(S^1)$.

Suppose $(E',\Gamma')$ encodes the band structure of another two-band insulating system with sublattice symmetry, and is isomorphic to $(E,\Gamma^0)$ (hence to $(E,\Gamma^1)$ as well). \emph{Whether $\Gamma'$ should be considered to be homotopic to $\Gamma^0$ or $\Gamma^1$ (or neither) is dependent on the choice of identification between $E$ and $E'$}. Thus suggests that in a homotopy classification, it is more meaningful to classify \emph{relative} phases.

Incidentally, the valence band is always a trivial line bundle, so we do lose something if we forget about the conduction band and the $S$-action.

\end{example}

\begin{example}\label{example:assumetrivialAIII}
Example \ref{example:isomorphicnothomotopic} is related to the general construction of band insulators with $S$ symmetry (Class $AIII$) in \cite{ryu2010topological}. A trivial rank-$2N$ bundle $E$ over $X$ is fixed, and $S$ acts fibrewise as diag($1_N,-1_N$) with respect to some global coordinates. Then the possible compatible gradings $\Gamma$ are continuous choices over $X$ of $\Gamma_x=\begin{pmatrix}0 & q_x \\ q_x^\dagger& 0\end{pmatrix}$ with $q_x\in \mathrm{U}(N)$. Up to homotopy, these are given by $[X,\mathrm{U}(N)]$. For $X=S^d$ the phases (for a fixed \emph{trivial} $E$ and $S$) are thereby given by a homotopy group of the unitary group, which stabilises if $N$ is large enough relative to $d$. This is related to the fact that homotopy classes of continuous maps $X\rightarrow \mathrm{U}$ provide a model for $K^{-1}(X)\cong K_1(C(X))$. However, a map $X\rightarrow \mathrm{U}(N)$ only determines a band insulator structure when given some given trivial rank-$2N$ vector bundle $E$ with $S$-action. For completeness, non-trivial bundles $E$ with $S$-symmetry should also be treated.
\end{example}

\subsection{Some remarks on the Freed--Moore approach}
In the Freed--Moore approach, higher $K$-theoretic groups are constructed using bundles of graded Clifford modules, quotiented by a certain algebraic relation, in analogy to the Atiyah--Bott--Shapiro construction of the $K$-theory ring of a point. This is closely related to our construction of super-representation groups in Section \ref{section:superrepresentationgroups}. However, we make the important observation that there are two inequivalent ways of taking parity reversals in the construction, with each choice leading to \emph{opposite} orderings of the classification groups. Furthermore, the super-representation groups do not coincide with standard $K$-theory groups except for certain spaces (e.g.\ a point). An example is $K^{-1}(S^1)\cong\mathbb{Z}$, whereas all bundles of graded $\mathbb{C}l_1$-modules over $S^1$ are ``trivial'' in the sense of Definition \ref{definition:superrepresentationgroup} and Definition 8.5 in \cite{freed2013twisted}.

Usual assumptions on the group of symmetries are: (i) distinguished time-reversing and charge-conjugating elements which are involutary in the full symmetry group, and (ii) there is a direct product factorisation into translational symmetries, point group symmetries, and time-reversal or charge-conjugation symmetries. As emphasized by Freed--Moore, these assumptions do not hold in many realistic systems. Because of this, they are led to twisted equivariant $K$-theory, although only some special twistings occur. In our approach, twistings appear in the form of twisted group algebras, and only the ordinary $K$-theory of these algebras enters. Furthermore, abstract results of Packer--Raeburn \cite{packer1989twisted} allow these twistings to be untwisted without compromising the $K$-theory (see Section \ref{section:decompositioncrossedproducts}).
\end{enumerate}

\section{Symmetries, spectral-flattening, and positive energy quantization}\label{section:positiveenergyquantization}
Following the general arguments of \cite{freed2013twisted}, elements of the symmetry group $G$ for the dynamics of a quantum mechanical system are presumed to be endowed with Hamiltonian and/or time preserving/reversing properties, which are encoded by a pair of continuous homomorphisms $c,\tau:G\rightarrow\ztwo$. An element $g\in G$ preserves (resp.\ reverses) the arrow of time if $\tau(g)=+1$ (resp.\ $\tau(g)=-1$); similarly, it commutes (resp.\ anticommutes) with the Hamiltonian if $c(g)=+1$ (resp.\ $c(g)=-1$). A third homomorphism $\phi:G\rightarrow \ztwo$ specifies whether $g$ is implemented unitarily ($\phi(g)=+1$) or antiunitarily ($\phi(g)=-1$). Writing $u_t$ for the unitary dynamics generated by the Hamiltonian $H$, and $\mathsf{g}$ for the unitary/antiunitary representative of $g$, the time-reversal equation 
\begin{equation*}
	\mathsf{g}u_t\mathsf{g}^{-1}=u_{\tau(g)t}
\end{equation*}
leads to $\phi\circ \tau\circ c\equiv 1$, so any two of $\phi,\tau,c$ specifies the third. Often, \mbox{$c\equiv 1$} is assumed (i.e.\ all symmetries commute with the Hamiltonian), then $\phi=\tau$ and antiunitarity becomes synonymous with time-reversal. However, in our description of free-fermion dynamics, we want to consider symmetries that effect charge-conjugation (see Section \ref{section:unitaryquantization}), so we allow for $c(g)=-1$. Then any two of $\phi,\tau,c$ may be independently specified. We also allow the symmetries to be projectively realised, i.e.,\ there may be a non-trivial cocycle $\sigma$.

The possibility of charge-reversing symmetries (present or otherwise) for free-fermion dynamics requires, logically, a notion of \emph{charged} dynamics and \emph{charged} representations of the canonical anticommutation relations (CARs), as opposed to their \emph{neutral} counterparts. The latter more correctly describes neutral (Majorana) fermions. Non-degeneracy of the dynamics (or a gapped Hamiltonian) allows us to distinguish between particle and antiparticle sectors, and we would like both species to have positive energy in second quantization. For instance, the Fermi level of a band insulator (which may be set to $0$) lies in a gap of the Hamiltonian, providing the particle-hole distinction. We recall the algebraic formalism of positive energy charged field quantization, and refer to \cite{derezinski2010positive,derezinski2013mathematics,gracia2001elements} for the neutral case and technical details. Then, we establish our conventions for dynamical symmetries, including time and charge reversal.

\subsection{CAR representations}
Let $\mathcal{Y}$ be a complex Hilbert space with inner product $h$. A \emph{charged CAR representation} over $(\mathcal{Y},h)$ in a complex Hilbert space $\hs$ is a complex-linear map $\mathrm{a}^*$ from $\mathcal{Y}$ to the bounded operators on $\hs$, such that
\begin{eqnarray}
	\{\mathrm{a}(y_1),\mathrm{a}(y_2)\}&=&0\,\,\,=\{\mathrm{a}^*(y_1),\mathrm{a}^*(y_2)\},\nonumber\\
	\{\mathrm{a}(y_1),\mathrm{a}^*(y_2)\}&=&h(y_1,y_2)1_\hs,\quad\qquad\qquad y_1,y_2\in\mathcal{Y}.\nonumber
\end{eqnarray}
Here, $\mathrm{a}(y)$ is the adjoint of $\mathrm{a}^*(y)$, and thus $\mathrm{a}(\cdot)$ is an anti-linear map from $\mathcal{Y}$ to the bounded operators on $\hs$. A charged CAR representation gives rise to a \emph{neutral} CAR representation over $(\mathcal{Y},b=\mathrm{Re\,} h)$ with a charge-$1$ $\mathrm{U}(1)$-symmetry, and conversely.

\subsection{Quantization of non-degenerate unitary dynamics}\label{section:unitaryquantization}
Let $u_t$ be a strongly continuous $1$-parameter unitary group on $(\mathcal{Y},h)$, with self-adjoint generator $H$ (the Hamiltonian). We assume that $u_t$ is non-degenerate, meaning that $\mathrm{ker}(H)=\{0\}$. We may define
\begin{equation}
	Q=\mathrm{sgn}(H),\quad J=\im\,\mathrm{sgn}(H)=\im Q,\quad |H|=\sqrt{H^2}>0,\nonumber
\end{equation}
and rewrite $u_t$ as $e^{tJ|H|}$. Note that $J$ is unitary, skew-adjoint and commutes with $H, Q$ and $|H|$. Furthermore, $\mathcal{Y}$ is graded by the charge operator $Q$ (``spectrally-flattened'' Hamiltonian) into $\mathcal{Y}^+\oplus \mathcal{Y}^-$, where $\mathcal{Y}^{\pm}$ is the $\pm 1$ eigenspace of $Q$. Writing $\mathcal{Z}$ for the space $\mathcal{Y}$ equipped with the modified complex unit $J$ instead of $\im$, we have $\mathcal{Z}=\mathcal{Y}^+\oplus\overline{\mathcal{Y}^-}$, where $\overline{\mathcal{Y}^-}$ is given the inner product dual to $h|_{\mathcal{Y}^-}$. The subspaces $\mathcal{Y}^{\pm}$ are invariant for $Q, H$ and $|H|$, so we may regard these operators as self-adjoint operators on $\mathcal{Z}$, in which case a subscript is appended, e.g., $H_\mathcal{Z}$.

On the Fock space $\bigwedge^*\mathcal{Z}$, the \emph{charged fields} are
\begin{equation}
	\mathrm{a}^*(y)=a^*(y^+)+a(\overline{y^-}),\quad	\mathrm{a}(y)=a(y^+)+a^*(\overline{y^-}),\qquad y=(y^+,y^-)\in\mathcal{Y}^+\oplus \mathcal{Y}^-,\nonumber
\end{equation}
where $a^*$ and $a$ are the standard creation and annihilation operators on $\bigwedge^*\mathcal{Z}$. The maps $\mathrm{a}^*$ and $\mathrm{a}$ furnish a charged CAR representation over $(\mathcal{Y},h)$, called the \emph{positive energy Fock quantization} for the non-degenerate unitary dynamics $u_t$. There are second quantized versions of the Hamiltonian and charge operators,
\begin{equation}
	\mathcal{H}=\mathrm{d}\Lambda(|H|_\mathcal{Z})\geq 0,\qquad \mathcal{Q}=\mathrm{d}\Lambda(Q_\mathcal{Z}),\nonumber
\end{equation}
which implement the dynamics and charge symmetry on Fock space,
\begin{equation}
	e^{\im t\mathcal{H}}\mathrm{a}(y)e^{-\im t\mathcal{H}}=\mathrm{a}(u_t y),\quad e^{\im \theta\mathcal{Q}}\mathrm{a}(y)e^{-\im \theta\mathcal{Q}}=\mathrm{a}(e^{\im\theta} y).\nonumber
\end{equation}

\subsubsection{Charge and/or time reversal in non-degenerate unitary dynamics}\label{unitarychargetimereversal}
A symmetry operator $\mathsf{g}$ on $(\mathcal{Y},h)$ is required to be unitary or antiunitary according to $\phi(g)$, and time preserving or reversing according to $\tau(g)=c\circ\phi(g)$, i.e.,\ $\mathsf{g}u_t=u_{\tau(g)t}\mathsf{g}$. A short computation leads to the following commutation relations
\begin{equation}
	\mathsf{g}H\mathsf{g}^{-1}=c(g)H,\quad \mathsf{g}|H|\mathsf{g}^{-1}=|H|,\quad\mathsf{g}Q\mathsf{g}^{-1}=c(g)Q,\quad	\mathsf{g}J\mathsf{g}^{-1}=\tau(g)J,\nonumber
\end{equation}
from which we find that $\mathsf{g}_\mathcal{Z}$ (i.e.\ the map $\mathsf{g}$ considered as an operator on $\mathcal{Z}$) is unitary or antiunitary according to $\tau(g)$. We may then amplify $\mathsf{g}_\mathcal{Z}$ to an (anti)unitary operator $\hat{\mathsf{g}}=\Lambda(\mathsf{g}_\mathcal{Z})$ on Fock space.

\begin{remark}
The modified imaginary unit $J$ is determined by the dynamics $u_t=e^{\im t H}$ only through the spectrally-flattened Hamiltonian $\mathrm{sgn}(H)$.
\end{remark}

We stress that presence of a time/charge revers\emph{ing} symmetry does not imply that of a distinguished charge/time revers\emph{al} operator. Indeed, Freed--Moore \cite{freed2013twisted} have pointed out that there are physically relevant examples that do not fit into the tenfold way \cite{heinzner2005symmetry,ryu2010topological}, which requires distinguished involutary charge/time reversal operators $\mathsf{C},\mathsf{T}$. We prefer to work more generally, and think of time/charge reversal as \emph{properties} of a symmetry $g\in G$. Under certain splitting assumptions on $G$, we can recover the usual $\mathsf{T}$ and/or $\mathsf{C}$ operators, see Section \ref{section:cliffordtenfoldway}.

\subsection{Remarks on conventions for free-fermion dynamics}\label{subsection:remarkschargedversusneutral} In many treatments of the tenfold way \cite{abramovici2012clifford,altland1997nonstandard,freed2013twisted,heinzner2005symmetry,kitaev2009periodic,ryu2010topological,schnyder2008classification}, the single-particle ``Hamiltonian'' in certain symmetry classes is taken to act on a \emph{Nambu space} $W=V\oplus\overline{V}$ rather than a single-particle Hilbert space $V$. A Nambu space has a canonical real structure $\Sigma:(v_1,\overline{v_2})\mapsto(v_2,\overline{v_1})$. The fixed points of $\Sigma$ form the real mode space $\mathcal{M}$ of Majorana operators, and $\mathcal{M}$ inherits a real inner product from $W$ by restriction. The operator $J=\im\oplus-\im$ on $W=V\oplus\overline{V}=\mathcal{M}\otimes\mathbb{C}$ restricts to an orthogonal complex structure on $\mathcal{M}$, and $(\mathcal{M},J|_\mathcal{M})\cong V$. One begins with second-quantized dynamics on Fock space $\bigwedge^* V$, generated by a Hamiltonian $H_\mathrm{F}$ which is required to be quadratic in the creation and annihilation operators. Such dynamics can be reformulated on Nambu space $V\oplus \overline{V}$, with generating ``Hamiltonian'' $H_\mathrm{N}$ subject to certain symmetry constraints. Alternatively, the dynamics can be specified by a skew-symmetric operator $A$ on $\mathcal{M}$, whose complexification is $\im H_\mathrm{N}$. The gapped condition is sometimes imposed on $H_N$. An example is the Bogoliubov--de Gennes (BdG) Hamiltonian for the quasi-particle dynamics of a superconducting system. It is important to note that the polarization $J$, and thus the Fock space in second quantization, are already implicit in the Nambu space formulation, whereas they are determined by $H$ in positive energy Fock quantization. Also, particle number is not necessarily conserved (because $a(v)a(v')$ and $a^*(v)a^*(v')$ terms are allowed in the second-quantized Hamiltonian $H_\mathrm{F}$), so $A$ may not have a $\mathrm{U}(1)$-symmetry (i.e.\ it may not commute with $J$). The definition of symmetries of a Hamiltonian, especially those of charge-conjugation and time-reversal, also differ between authors.

In our approach, the Hamiltonian $H$ generating the non-degenerate unitary dynamics on $(\mathcal{Y},h)$ determines the particle-antiparticle distinction in second quantization. In practice, we impose a stronger \emph{gapped} condition on $H$; namely, we require $0\not\in\mathrm{spec}(H)$. In this case, we call $H$ a \emph{gapped} Hamiltonian. We allow for antiunitary symmetries, as well as charge-reversing symmetries which reverse $Q=\mathrm{sgn}(H)$. Two symmetry-compatible gapped Hamiltonians are identified if they have the same spectral flattening, i.e., if they result in the same grading operator $Q$ on $\mathcal{Y}$. Then, the specification of a homotopy class\footnote{The Hamiltonian $H$ is unbounded in general, and care must be taken in order to interpret spectral-flattening as a homotopy in a precise sense, see Appendix D of \cite{freed2013twisted}.\label{fn:unboundedhomotopy}} of charged free-fermion dynamics respecting the symmetry data $(G,c,\phi,\sigma)$ is precisely that of a \emph{graded} projective unitary-antiunitary representation for $(G,c,\phi,\sigma)$, as defined in Section \ref{subsection:gradedpua}). This provides the physical motivation for the constructions in Section \ref{section:twistedcovariantrepresentations}.

We remark that a graded PUA-rep for $(G,c,\phi,\sigma)$ may also be interpreted as an ordinary quantum mechanical system. We usually combine such quantum mechanical systems using the tensor product. On the other hand, at the one-particle level, we combine free-fermion systems using the direct sum operation, which gets translated into the tensor product at the Fock space level. We are only interested in describing free-fermion dynamics and its symmetries at the one-particle level, so the direct sum applies. This allows us to construct commutative monoids of free-fermion systems, paving the way for the use of $K$-theoretic methods in their classification.

\section{The general notion of twisted covariant representations}\label{section:twistedcovariantrepresentations}
We give an outline of the basic definitions and constructions of twisted covariant representations of twisted $C^*$-dynamical systems \cite{busby1970representations,leptin1965verallgemeinerte,packer1989twisted}. We make a simple generalisation to $\mathbb{Z}_2$-graded twisted covariant representations, and show that they arise naturally as (graded) PUA-reps in the context of quantum systems with time/charge-reversing symmetries. All gradings will be $\mathbb{Z}_2$-gradings unless otherwise stated.

\subsection{Ungraded covariant representations}\label{subsection:ungradedcovariantrepresentation}
Let $\mathcal{A}$ be a separable, possibly non-unital, real or complex $C^*$-algebra\footnote{A reference for basic facts about real $C^*$-algebras is Chapter 1 of \cite{schroder1993k}.}. We denote its multiplier algebra by $\mathcal{M}\mathcal{A}$, and its group of unitary elements  $\{u\in\mathcal{M}\mathcal{A}:u^*u=uu^*=1_{\mathcal{M}\mathcal{A}}\}$ by $\mathcal{U}\mathcal{M}\mathcal{A}$. If $\mathbb{F}$ is the ground field of $\mathcal{A}$, we write  $\mathrm{Aut}_{\mathbb{F}}(\mathcal{A})$ for the group of $\mathbb{F}$-linear $*$-automorphisms of $\mathcal{A}$. Let $G$ be a locally compact, second countable, amenable\footnote{Amenability holds in all the physical examples that we consider in this paper, and is made in order to avoid having to distinguish between reduced and full crossed products later on.} group, with left Haar measure $\mu$ and identity element $e$. As in  Section 2 of \cite{packer1989twisted}, we give $\mathcal{U}\mathcal{M}\mathcal{A}$ the strict topology, and $\mathrm{Aut}_{\mathbb{F}}(\mathcal{A})$ the point-norm topology.
\begin{definition}[Twisted $C^*$-dynamical system \cite{busby1970representations,leptin1965verallgemeinerte}]
A pair $(\alpha,\sigma)$ of Borel maps \mbox{$\alpha: G\rightarrow \mathrm{Aut}_{\mathbb{F}}(\mathcal{A})$} and $\sigma: G\times G\rightarrow \mathcal{U}\mathcal{M}\mathcal{A}$ satisfying
\begin{subequations}	
\begin{align}
	\alpha(x)\alpha(y)&=\mathrm{Ad}(\sigma(x,y))\circ\alpha(xy), \label{multidentity}\\
	\sigma(x,y)\sigma(xy,z)&=\alpha(x)(\sigma(y,z))\sigma(x,yz)\label{multidentity2},\\
	\sigma(x,e)&=1=\sigma(e,x),\label{cocycleassumption}\\
	\alpha(e)&=\mathrm{id}_\mathcal{A},\qquad\qquad\qquad\qquad x,y,z\in G,\label{automorphismassumption}
\end{align}
\end{subequations}
is called a \emph{twisting pair} for $(G,\mathcal{A})$. The map $\sigma$ is called a \emph{2-cocycle} with values in $\mathcal{U}\mathcal{M}\mathcal{A}$, or simply a \emph{cocycle}, and the quadruple $(G,\mathcal{A},\alpha,\sigma)$ is called a \emph{twisted $C^*$-dynamical system}.
\end{definition}
For notational ease, we will often write $\alpha_x\equiv \alpha(x)$ and $a^x\equiv \alpha_x(a)\equiv \alpha(x)(a)$.

\begin{definition}[Twisted covariant representation]
A \emph{twisted covariant representation} of a twisted $C^*$-dynamical system $(G,\mathcal{A},\alpha,\sigma)$ is a non-degenerate $*$-representation of $\mathcal{A}$ as bounded operators on a separable Hilbert space $\hs$ over $\mathbb{F}$, along with a compatible Borel map $\theta:x\mapsto\theta_x$ from $G$ to the unitary\footnote{When $\mathbb{F}=\mathbb{R}$, we also use ``orthogonal'' for emphasis.} operators on $\hs$, in the sense that 
\begin{subequations}
\begin{align}
	\theta_x\theta_y &=\sigma(x,y)\theta_{xy}, \label{gencovrep1}\\
	\theta_x(am) &=a^x(\theta_x m),\qquad\qquad x,y,\in G, a\in\mathcal{A},\, m\in \hs. \label{gencovrep2}
\end{align}
\end{subequations}
\end{definition}
Note that \eqref{gencovrep2} can be restated as $a^x=\mathrm{Ad}(\theta_x)(a)$, and then we see that \eqref{gencovrep1} is consistent with \eqref{multidentity}. In the untwisted case, i.e.\ $\sigma\equiv 1$, the Borel map $\alpha$ is a homomorphism, hence continuous (Theorem D.11 of \cite{williams2007crossed}). Then $(G,\mathcal{A},\alpha,1)$ is a (untwisted) $C^*$-dynamical system $(G,\mathcal{A},\alpha)$ in the usual sense (e.g.\ 7.4.1 of \cite{pedersen1979c}, 2.1 of \cite{williams2007crossed}, or 10.1 of \cite{blackadar1998k}). Similarly, $\theta$ becomes a strongly-continuous homomorphism from $G$ to the unitary group of $\hs$. Thus, $\theta$ is a (untwisted) covariant representation of $(G,\mathcal{A},\alpha)$ in the usual sense (e.g.\ 7.4.8 of \cite{pedersen1979c} or 10.1 of \cite{blackadar1998k}), and no harm is done by dropping the adjective ``twisted'' when $\sigma\equiv 1$. We say that two twisted covariant representations $(\theta,\hs),(\theta',\hs')$ of $(G,\mathcal{A},\alpha,\sigma)$ are $\emph{equivalent}$ if there is a unitary $\mathcal{A}$-linear intertwiner $U:\hs\rightarrow \hs'$ such that $U\theta_x U^{-1}=\theta'_x$ for all $x\in G$.

There is an action of the group of Borel functions $\lambda:G\rightarrow\mathcal{U}\mathcal{M}\mathcal{A}$ on twisting pairs (Section 3 of \cite{packer1989twisted}), defined by
\begin{subequations}
\begin{align}
	\alpha'(x)		&=	\mathrm{Ad}(\lambda(x))\circ\alpha(x),\label{possibletwistingautomorphism}\\
	\sigma'(x,y)	&=	\lambda(x)\alpha_x(\lambda(y))\sigma(x,y)\lambda(xy)^{-1}.\label{possibletwistingcocycle}
\end{align}
\end{subequations}
Two twisting pairs $(\alpha,\sigma)$ and $(\alpha',\sigma')$ are \emph{exterior equivalent} if they are related by such a transformation, and there is a $1\textrm{--}1$ correspondence between the covariant representations of $(G,\mathcal{A},\alpha,\sigma)$ and those of $(G,\mathcal{A},\alpha',\sigma')$, via the adjustments $\theta_x\mapsto\lambda(x)\theta_x$. This generalises the familiar notion of equivalence of cocycles for projective unitary group representations (i.e.\ $\mathcal{A}=\mathbb{C}$). If the cocycle $\sigma$ is assumed to be central in $\mathcal{A}$, there is no effect of $\lambda$ on $\alpha$ in \eqref{possibletwistingautomorphism}. The conjugation in \eqref{multidentity} and the condition \eqref{automorphismassumption} are then redundant, and we also have $\alpha(x^{-1})=\alpha(x)^{-1}$. A central cocycle is said to be \emph{trivial} if there is a Borel function $\lambda:G\rightarrow\mathcal{Z}(\mathcal{U}\mathcal{M}\mathcal{A})$ such that $\sigma(x,y)=\lambda(x)\lambda(y)^x\lambda(xy)^{-1}$, i.e.,\ $\sigma$ is a coboundary in the sense of cohomology. We say that two central cocycles $\sigma_1,\sigma_2$ are \emph{equivalent}, or in the same cocycle class, if $\sigma_1\sigma_2^{-1}$ is a trivial cocycle. In many special cases of physical interest, the representative $\sigma$ in a cocycle class can be chosen to make certain computations more convenient, e.g. Proposition \ref{proposition:parityoperatorstandardform} and Lemma \ref{lemma:pm1cocycle}. Note that if $\sigma$ is not necessarily central, $\alpha$ and $\sigma$ must be considered \emph{concurrently} when making an adjustment $\theta_x\mapsto\lambda(x)\theta_x$.

\subsection{Graded covariant representations}\label{subsection:gradedcovariantrepresentation}
Let $\mathcal{A}$ be a \emph{graded} real or complex $C^*$-algebra, i.e. $\mathcal{A}$ has a direct sum decomposition into two self-adjoint closed subspaces $\mathcal{A}=\mathcal{A}_0\oplus\mathcal{A}_1$, satisfying $\mathcal{A}_i\mathcal{A}_j\subset\mathcal{A}_{i+j\,\mathrm{(mod\,2)}}$. Let $\mathrm{Aut}_{\mathbb{F}}(\mathcal{A})$ now denote its group of \emph{even} $\mathbb{F}$-linear  $*$-automorphisms, i.e., $*$-automorphisms that preserve the decomposition $\mathcal{A}=\mathcal{A}_0\oplus\mathcal{A}_1$. We assume that the cocycles $\sigma$ take values in the \emph{even} elements $\mathcal{U}\mathcal{M}\mathcal{A}_0$ of $\mathcal{U}\mathcal{M}\mathcal{A}$. These restrictions are consistent with equations \eqref{multidentity} and \eqref{multidentity2} for a twisting pair $(\alpha,\sigma)$. Suppose that the group $G$ is also equipped with a continuous homomorphism $c:G\rightarrow\ztwo$. The quintuple $(G,c,\mathcal{A},\alpha,\sigma)$ is called a \emph{graded twisted $C^*$-dynamical system}.
\begin{definition}[Graded twisted covariant representation] A \emph{graded} covariant representation of a graded twisted $C^*$-dynamical system $(G,c,\mathcal{A},\alpha,\sigma)$ is a graded $*$-representation of $\mathcal{A}$ on a graded Hilbert space $\hs=\hs_0\oplus \hs_1$ over $\mathbb{F}$ (i.e.\ $\mathcal{A}_i\hs_j\subset \hs_{i+j\,\mathrm{(mod\,2)}}$), along with a compatible Borel map $\theta:G\rightarrow\mathrm{U}(\hs)$, in the sense of \eqref{gencovrep1}--\eqref{gencovrep2}, with the additional condition that $\theta_x$ is an even (resp.\ odd) operator if $c(x)=+1$ (resp.\ $c(x)=-1$).
\end{definition}
Two graded covariant representations $(\theta,\hs),(\theta',\hs')$ for $(G,c,\mathcal{A},\alpha,\sigma)$ are \emph{graded equivalent}, or simply \emph{equivalent}, if there is an \emph{even} unitary $\mathcal{A}$-linear map $U:\hs\rightarrow \hs'$ intertwining $\theta$ with $\theta'$. For instance, a graded Hilbert space over $\mathbb{F}$ and its parity-reverse are equivalent as ungraded representations of $\mathbb{F}$, but are inequivalent in the graded sense unless the two homogeneous subspaces have the same dimension. There is also the notion of \emph{super-equivalence} of graded covariant representations, which we will consider in Section \ref{superrepresentationdefinitionsection}. In many of our applications, $\mathcal{A}$ is trivially graded, i.e., purely even, and the only complication comes from the data of $c:G\rightarrow\ztwo$.

\subsection{Special cases I: Projective unitary-antiunitary representations}\label{subsection:ungradedpua}
A complex Hilbert space $(\hs,h)$ is equivalently a real Hilbert space $(\hs,b)$ with real inner product $b=\mathrm{Re}(h)$, along with a $b$-orthogonal complex structure $J$ (i.e.\ $J^2=-1$) playing the role of multiplication by $\im$. The complex inner product $h$ may be recovered from $b$ and $J$ by setting $h(u,v)=b(u,v)+\im b(Ju,v)$. Note that $h$ induces the same norm on $\mathcal{Y}$ as $b$ does. An orthogonal operator on $(\mathcal{Y},b)$ is (anti)-unitary as an operator on $(\mathcal{Y},h)$, iff it (anti)-commutes with $J$.

Let $\phi:G\rightarrow\ztwo$ be a continuous homomorphism, and $\sigma$ be a $\mathrm{U}(1)$-valued 2-cocycle as in \eqref{generalised2cocycle}. A projective unitary-antiunitary representation\footnote{See \cite{parthasarathy1969projective,williams2007crossed} for some topological matters.} (PUA-rep) $\theta$ of $(G,\phi,\sigma)$ on a complex Hilbert space $(\hs,h)$ is a Borel map $x\mapsto\theta_x$ such that $\theta_x$ is a unitary (resp.\ antiunitary) operator on $(\hs,h)$ if $\phi(x)=+1$ (resp.\ $\phi(x)=-1$), and \mbox{$\theta_x\theta_y=\sigma(x,y)\theta_{xy}$}. By regarding $(\hs,b)$ as a real Hilbert space, and $\im$ as a complex structure $J$ as above, we can equivalently define a PUA-rep of $(G,\phi,\sigma)$ as a map $\theta$ from $G$ to the orthogonal operators on $(\hs,b)$, subject to
\begin{eqnarray}
	\theta_x\theta_y&=&\sigma(x,y)\theta_{xy},\quad x,y\in G,\nonumber\\
	\theta_x J&=& \phi(x)J\theta_x.\nonumber
\end{eqnarray}

Suppose $\phi$ is surjective, and let $\mathcal{A}=\mathbb{C}$ as an ungraded real $C^*$-algebra. Thus $\mathcal{A}=\mathbb{R}\oplus\im\mathbb{R}$ as a real vector space, with basis $\{1,\im\}$, $\im^2=-1$, and the $*$-operation taking $\im$ to $-\im$. There are two elements of $\mathrm{Aut}_{\mathbb{R}}(\mathbb{C})$, namely complex conjugation $K$ and the identity $\mathrm{id}_{\mathbb{C}}$. A $*$-representation of $\mathcal{A}=\mathbb{C}$ is a real Hilbert space $(\hs,b)$ along with a linear operator $J$ representing $\im$, such that $J^2=-1$ and $J^*=-J$, i.e., $J$ is an orthogonal complex structure. Define the map $\alpha:G\rightarrow\mathrm{Aut}_{\mathbb{R}}(\mathbb{C})$ by
\begin{equation}
	\alpha_x\equiv\alpha(x)\coloneqq\begin{cases} \mathrm{id}_\mathbb{C} & \text{if } \phi(x)=+1, \\
													K & \text{if } \phi(x)=-1.\label{conjugationautomorphism}\end{cases}
\end{equation}
Equations \eqref{gencovrep1}-\eqref{gencovrep2} say that a covariant representation $\theta$ of $(G,\mathbb{C},\alpha,\sigma)$ on a real Hilbert space $(\hs,b)$, is precisely a PUA-rep of $(G,\phi,\sigma)$ on $\hs$.

\subsection{Special case II: Gapped Hamiltonians and graded projective unitary-antiunitary representations}\label{subsection:gradedpua}
Following the discussion in Section \ref{subsection:remarkschargedversusneutral}, we assume that a PUA-rep for $(G,\phi,\sigma)$ has, additionally, a gapped self-adjoint Hamiltonian $H$, and that $G$ has a second continuous homomorphism $c:G\rightarrow\ztwo$ such that
\begin{equation}
	\theta_xH=c(x)H\theta_x,\qquad\forall x\in G.\label{definitionofc}
\end{equation}
Thus $H$ is a gapped Hamiltonian compatible with the symmetries specified by the data $(G,c,\phi,\sigma)$. Note that $\mathrm{sgn}(\cdot)$ is a continuous function on the spectrum of $H$ homotopic to the identity function. We can deform $H$ to its spectral flattening $\Gamma=\mathrm{sgn}(H)$ while preserving the commutation/anticommutation relations with $G$ as expressed in \eqref{definitionofc} (but see Footnote \ref{fn:unboundedhomotopy}). A $(G,c,\phi,\sigma)$-compatible Hamiltonian is then identified with its spectrally-flattened version for the purposes of a homotopy classification.

A \emph{graded} PUA-rep $\theta$ of $(G,c,\phi,\sigma)$ on a graded complex Hilbert space $\hs=\hs_0\oplus\hs_1$ is a PUA-rep of $(G,\phi,\sigma)$, along with a self-adjoint grading operator $\Gamma$ satisfying $\Gamma^2=1_\hs$, such that $\theta_x$ is even or odd according to whether $c(x)=+1$ or $-1$. The grading operator $\Gamma$ is exactly a representative of the class of $(G,c,\phi,\sigma)$-compatible Hamiltonians on $\hs$ whose spectral flattening is $\Gamma$. Suppose $\phi$ is surjective. Let $\mathcal{A}$ be the purely even real $C^*$-algebra $\mathbb{C}$, and define the (even) automorphisms $\alpha_x$ as in \eqref{conjugationautomorphism}. Then a graded covariant representation of $(G,c,\mathbb{C},\alpha,\sigma)$ is precisely a graded PUA-rep of $(G,c,\phi,\sigma)$.

\subsection{Special case III: Disordered systems and covariant representations}
Disordered systems are often modelled on a disorder space $\Omega$, on which the group $G$ acts by homeomorphisms. More generally, the disorder space can be noncommutative, and so $G$ acts as automorphisms on an algebra $\mathcal{A}$. We can generalise PUA-reps to include disorder, by replacing $\mathbb{C}$ with the algebra $\mathcal{A}$ and working with twisted dynamical systems and their covariant representations. Such objects were considered in the analysis of the IQHE in \cite{bellissard1994noncommutative}, but without the additional data of $\phi$ or $c$.

\section{Graded twisted crossed products and covariant representations}\label{section:twistedcrossedproducts}
In the previous sections, we explained how the implementation of symmetry and compatible gapped Hamiltonians leads to graded twisted $C^*$-dynamical systems $(G,c,\mathcal{A},\alpha,\sigma)$ and their covariant representations. In this section, we explain how all the symmetry data can be completely and faithfully encoded in a graded twisted crossed product $C^*$-algebra $\mathcal{A}\rtimes_{(\alpha,\sigma)} G$, which we may simply call the \emph{symmetry algebra}. This device will be very convenient for the application of $K$-theory in the later sections.

\subsection{Twisted crossed products and covariant representations}\label{twistedsection}
Let $L^1(G,\mathcal{A},\alpha,\sigma)$ be the Banach $*$-algebra of integrable functions\footnote{The integral of a $\mathcal{A}$-valued function on $G$ is a Bochner integral, see Appendix B of \cite{williams2007crossed} and the preliminary section of \cite{packer1989twisted}.} $F:G\rightarrow \mathcal{A}$ with the $L^1$-norm $\norm{F}_1=\int_G \norm{F(x)}\,\dee x$, equipped with a $(\alpha,\sigma)$-twisted convolution product $\star$ and involution $*$,
\begin{subequations}
\begin{align}
	(F_1\star F_2)(y)&\coloneqq \int_G F_1(x)\left(F_2(x^{-1}y)\right)^x\sigma(x,x^{-1}y)\,\dee x,\label{twistedconvolutionproduct}\\
	F^*(x)&\coloneqq \sigma(x,x^{-1})^*\left(F(x^{-1})^*\right)^x\Delta(x^{-1}),\label{twistedinvolution}
\end{align}
\end{subequations}
where $\Delta$ is the modular function on $G$. There is a $1\textrm{--}1$ correspondence between covariant representations $\theta$ of $(G,\mathcal{A},\alpha,\sigma)$, and non-degenerate $*$-representations of $L^1(G,\mathcal{A},\alpha,\sigma)$, given by taking the ``integrated form'' $\tilde{\theta}$ of $\theta$,
see Theorem 3.3 of \cite{busby1970representations} and Remark 2.6 of \cite{packer1989twisted}. A pre-$C^*$-norm is defined on $L^1(G,\mathcal{A},\alpha,\sigma)$ by
\begin{equation}
	\norm{F}_\mathrm{max} =\mathrm{sup}\{\norm{\tilde{\theta}(F)}\,:\,\theta \mathrm{\,\,is\,a\,covariant\,representation\,of\,}(G,\mathcal{A},\alpha,\sigma)\}.\nonumber
\end{equation}
\begin{definition}[Twisted crossed product $C^*$-algebra \cite{busby1970representations}]
Let $(G,\mathcal{A},\alpha,\sigma)$ be a twisted dynamical system. The \emph{twisted crossed product $C^*$-algebra} associated with $(G,\mathcal{A},\alpha,\sigma)$, denoted by $\mathcal{A}\rtimes_{(\alpha,\sigma)} G$, is defined to be the completion of $L^1(G,\mathcal{A},\alpha,\sigma)$ in the norm $\norm{\cdot}_\mathrm{max}$.
\end{definition}
The group $G$ is embedded, not necessarily homomorphically, in the multiplier algebra of the twisted crossed product via the Borel map \mbox{$j_G:G\rightarrow\mathcal{U}\mathcal{M}(\mathcal{A}\rtimes_{(\alpha,\sigma)}G)$}, where the multipliers $j_G(g)$ are defined on functions \mbox{$F\in L^1(G,\mathcal{A},\alpha,\sigma)$} by 
\begin{equation}
	(j_G(g)(F))(x)\coloneqq(F(g^{-1}x))^g\sigma(g,g^{-1}x),\qquad x \in G.\nonumber
\end{equation}
Likewise, the algebra $\mathcal{A}$ is embedded in $\mathcal{M}(\mathcal{A}\rtimes_{(\alpha,\sigma)} G)$ through the homomorphism $j_\mathcal{A}:\mathcal{A}\rightarrow\mathcal{M}(\mathcal{A}\rtimes_{(\alpha,\sigma)} G)$, defined by
\begin{equation}
	(j_\mathcal{A}(a)(F))(x)\coloneqq a(F(x)),\qquad F\in L^1(G,\mathcal{A},\alpha,\sigma), x\in G.\nonumber
\end{equation}

When $\sigma\equiv 1$, we recover the untwisted crossed product $C^*$-algebra associated with the untwisted $C^*$-dynamical system $(G,\mathcal{A},\alpha)$. If $\alpha\equiv 1$, we call $\mathbb{R}\rtimes_{(1,\sigma)} G$ (resp.\ $\mathbb{C}\rtimes_{(1,\sigma)} G$) the real (resp.\ complex) \emph{twisted group $C^*$-algebra} of $(G,\sigma)$. If $\sigma\equiv 1$ as well, we use a shortened notation\footnote{It is also standard to write $\mathbb{R}\rtimes G$ for a semi-direct product of the \emph{group} $\mathbb{R}$ with $G$. Nevertheless, the correct meaning should be clear from the context.} $\mathbb{R}\rtimes G$ (resp.\ $\mathbb{C}\rtimes G$) for the real (resp.\ complex) \emph{group $C^*$-algebra} of $G$. More generally, when $\alpha\equiv 1$ and $\sigma\equiv 1$, we will write $\mathcal{A}\rtimes G\coloneqq \mathcal{A}\rtimes_{(1,1)} G$ to ease notation.

Although there are universal characterisations of twisted crossed products, (see Definition 2.4 of \cite{packer1989twisted}, as well as \cite{packer1990twisted}), we only need the following important result.
\begin{proposition}[\cite{busby1970representations,packer1989twisted,packer1990twisted}]\label{prop:twisteduniversalproperty}
There is a one-to-one correspondence between the covariant representations of $(G,\mathcal{A},\alpha,\sigma)$ and the non-degenerate $*$-representations of $\mathcal{A}\rtimes_{(\alpha,\sigma)} G$. 
\end{proposition}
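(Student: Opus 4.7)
The plan is to establish the correspondence in both directions explicitly, with the ``integrated form'' construction going one way and the multiplier-algebra embedding going the other, then verify that the two assignments are mutually inverse.

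For the forward direction, given a covariant representation $(\pi,\theta)$ of $(G,\mathcal{A},\alpha,\sigma)$ on $\hs$, I would define $\tilde{\theta}:L^1(G,\mathcal{A},\alpha,\sigma)\to\mathcal{B}(\hs)$ by the Bochner integral
\begin{equation*}
	\tilde{\theta}(F)=\int_G \pi(F(x))\,\theta_x\,\dee x,
\end{equation*}
interpreted in the strong operator topology. The bound $\norm{\tilde{\theta}(F)}\leq\norm{F}_1$ is immediate. The key calculations are multiplicativity and $*$-preservation: for $\tilde{\theta}(F_1\star F_2)=\tilde{\theta}(F_1)\tilde{\theta}(F_2)$, I would use the covariance relation \eqref{gencovrep2} in the form $\theta_x\pi(a)=\pi(a^x)\theta_x$ together with \eqref{gencovrep1} to rewrite $\pi(F_1(x))\theta_x\pi(F_2(y))\theta_y=\pi(F_1(x)F_2(y)^x)\sigma(x,y)\theta_{xy}$, then change variables $y\mapsto x^{-1}y$ to match \eqref{twistedconvolutionproduct}. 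For $\tilde{\theta}(F^*)=\tilde{\theta}(F)^*$, the substitution $x\mapsto x^{-1}$ introduces the modular function $\Delta$, while the twist factor $\sigma(x,x^{-1})^*$ appears from $\theta_x^{-1}=\sigma(x,x^{-1})^*\theta_{x^{-1}}$ (obtained by evaluating \eqref{gencovrep1} at $y=x^{-1}$ and using \eqref{cocycleassumption}). By the very definition of $\norm{\cdot}_{\max}$ as the supremum over all such $\tilde{\theta}$, this extends continuously to the completion $\mathcal{A}\rtimes_{(\alpha,\sigma)}G$. Non-degeneracy of $\tilde{\theta}$ is obtained by a standard approximate-identity argument applied to bump functions of the form $f_U\otimes a_i$, where $f_U$ is supported near the identity and $\{a_i\}$ is an approximate unit for $\mathcal{A}$.

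For the reverse direction, given a non-degenerate $*$-representation $\rho:\mathcal{A}\rtimes_{(\alpha,\sigma)}G\to\mathcal{B}(\hs)$, I would use non-degeneracy to extend uniquely to a strictly-to-strongly continuous unital $*$-homomorphism $\tilde{\rho}:\mathcal{M}(\mathcal{A}\rtimes_{(\alpha,\sigma)}G)\to\mathcal{B}(\hs)$. Composing with the embeddings defined in the paper,
\begin{equation*}
	\pi\coloneqq\tilde{\rho}\circ j_{\mathcal{A}},\qquad \theta_x\coloneqq\tilde{\rho}(j_G(x)),
\end{equation*}
I would then check that $(\pi,\theta)$ is a covariant representation. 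The relations \eqref{gencovrep1} and \eqref{gencovrep2} follow by direct computation from the multiplier identities $j_G(x)j_G(y)=j_\mathcal{A}(\sigma(x,y))j_G(xy)$ and $j_G(x)j_\mathcal{A}(a)j_G(x)^{-1}=j_\mathcal{A}(a^x)$, which are straightforward consequences of the defining formulas for $j_G$ and $j_\mathcal{A}$ acting on $L^1(G,\mathcal{A},\alpha,\sigma)$.

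The step likely to cause the most trouble is not any single algebraic identity but rather the careful handling of Borel measurability and the extension to the multiplier algebra when $\mathcal{A}$ is non-unital and $\sigma$ is only Borel (not continuous). In particular, showing that $x\mapsto\theta_x=\tilde{\rho}(j_G(x))$ is Borel requires verifying that $j_G$ is Borel as a map into $\mathcal{U}\mathcal{M}(\mathcal{A}\rtimes_{(\alpha,\sigma)}G)$ with the strict topology, and then invoking continuity of $\tilde{\rho}$ for that topology. Once these measurability issues are handled, showing the assignments are mutually inverse reduces to the computation that for $F=f\otimes a\in C_c(G)\otimes\mathcal{A}$, one has $\tilde{\rho}(F)=\int f(x)\pi(a)\theta_x\,\dee x$, which follows from the definitions of $j_\mathcal{A}$ and $j_G$ together with the density of such elementary tensors in $L^1(G,\mathcal{A},\alpha,\sigma)$. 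I would cite Theorem 3.3 of \cite{busby1970representations} and Remark 2.6 of \cite{packer1989twisted} for the technical measurability arguments rather than reprove them.
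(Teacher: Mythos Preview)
The paper does not actually prove this proposition; it is stated as a known result with citations to \cite{busby1970representations,packer1989twisted,packer1990twisted} and no argument is given. Your sketch is correct and follows the standard integrated-form/multiplier-extension argument found in those references, including the point you single out about Borel measurability, which is precisely what Remark~2.6 of \cite{packer1989twisted} addresses.
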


\subsection{Graded twisted crossed products}
For a graded twisted $C^*$-dynamical system $(G,c,\mathcal{A},\alpha,\sigma)$, we assign a grading to $\mathcal{A}\rtimes_{(\alpha,\sigma)}G$ as follows. Let $G_0\coloneqq \mathrm{ker}(c)$ and $G_1\coloneqq G - G_0$. The even subalgebra of $\mathcal{A}\rtimes_{(\alpha,\sigma)} G$ is the completion in $\norm{\cdot}_\mathrm{max}$ of $L^1(G_0,\mathcal{A}_0,\alpha,\sigma)\oplus L^1(G_1,\mathcal{A}_1,\alpha,\sigma)$, while the odd subspace is the completion of $L^1(G_0,\mathcal{A}_1,\alpha,\sigma)\oplus L^1(G_1,\mathcal{A}_0,\alpha,\sigma)$. Note that \eqref{twistedconvolutionproduct} and \eqref{twistedinvolution} respect this grading due to the restriction to even automorphisms $\alpha_x$ and even cocycles $\sigma$. A graded $*$-representation of the graded twisted crossed product $\mathcal{A}\rtimes_{(\alpha,\sigma)} G$ then corresponds one-to-one with a graded covariant representation of $(G,c,\mathcal{A},\alpha,\sigma)$.

\section{$CT$-symmetries, Clifford algebras, and the tenfold way}\label{section:cliffordtenfoldway}
The graded PUA-representation theory of a direct product of parity groups $\ztwo$ can be simplified by fixing certain choices for the representatives of cocycle classes. Let $G=\ztwo^n, n\geq 0$, where $\ztwo^0$ means the trivial group. Suppose $0\leq m \leq n$ of the $\ztwo$ generators $U_i, 1\leq i\leq m$ are to be represented unitarily ($\phi(U_i)=+1$), while the other $n-m$ generators $A_k, 1\leq k\leq n-m$ are to be represented antiunitarily ($\phi(A_k)=-1$). We will write $\mathsf{U}_i\coloneqq\theta_{U_i}$ and $\mathsf{A}_k\coloneqq\theta_{A_k}$ for their representatives in a graded PUA-rep $\theta$ of $(\ztwo^n,c,\phi,\sigma)$. As always, $\mathsf{U}_i$ and $\mathsf{A}_k$ are odd/even operators according to $c$.
\begin{lemma}\label{lemma:atmosttwoantiunitaries}
We may assume, without loss of generality, that there are at most two antiunitaries $\mathsf{A}_1,\mathsf{A}_2$.
\end{lemma}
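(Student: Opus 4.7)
The plan is to exploit the elementary fact that the composition of two antiunitary operators is unitary, and to use the freedom to change the generating set of $\mathbb{Z}_2^n$ while keeping the representation $\theta$ fixed. If $\{U_i\} \cup \{A_k\}$ is the given generating set, any other generating set $\{U_i'\} \cup \{A_k'\}$ of the same group $\mathbb{Z}_2^n$ gives rise to another presentation of the same graded PUA-rep, with updated values of $\phi$ and $c$ on the new generators (determined by the homomorphism properties of $\phi, c$) and a correspondingly adjusted cocycle $\sigma'$ obtained by rewriting $\theta_{U_i'}, \theta_{A_k'}$ in terms of the old representatives. So it suffices to exhibit a generating set of $\mathbb{Z}_2^n$ whose image under $\phi$ contains at most two $-1$'s.

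First I would observe the two basic parity facts: for any antiunitary representatives $\mathsf{A}_i, \mathsf{A}_j$, the product $\mathsf{A}_i \mathsf{A}_j$ is unitary, so $\phi(A_i A_j) = \phi(A_i)\phi(A_j) = +1$; and $c$ being a homomorphism gives $c(A_i A_j) = c(A_i) c(A_j)$, so the product of two antiunitaries of the \emph{same} $c$-parity is an even unitary. Next I would partition the antiunitary generators $\{A_1, \ldots, A_{n-m}\}$ according to their $c$-parity, obtaining $k^+$ even antiunitaries and $k^-$ odd antiunitaries. Within the $c$-even class (supposing it to be nonempty), fix $A_{i_0}$ as a representative and replace every other even antiunitary $A_i$ by $A_{i_0} A_i$; the new generators are unitary (and even). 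The new set $\{A_{i_0}\} \cup \{A_{i_0} A_i\}_{i \neq i_0}$ still generates the same $\mathbb{Z}_2$-subgroup (since $A_i = A_{i_0} \cdot (A_{i_0} A_i)$), so the overall set remains a generating set for $\mathbb{Z}_2^n$. We now have exactly one $c$-even antiunitary generator. Performing the same procedure within the $c$-odd class reduces the number of $c$-odd antiunitary generators to at most one.

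The resulting generating set contains at most two antiunitaries, namely $\mathsf{A}_1$ (if present, even under $c$) and $\mathsf{A}_2$ (if present, odd under $c$). The main subtlety is that the cocycle $\sigma$ is altered by this change of generators; however, since we allow an arbitrary cocycle to begin with, the new presentation simply comes with its own cocycle $\sigma'$, which is permissible data in the sense of Section \ref{subsection:ungradedcovariantrepresentation}. The underlying graded PUA-rep $\theta$, the Hilbert space $\hs$, and the grading $c:\mathbb{Z}_2^n \to \ztwo$ as abstract data are unchanged; only our description of them in terms of preferred generators has been simplified. This justifies the reduction ``without loss of generality.''
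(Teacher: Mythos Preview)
Your proof is correct and is a concrete, hands-on version of the paper's argument. The paper proceeds more abstractly: it views $(\phi,c):\ztwo^n\to\ztwo\times\ztwo$ as an $\mathbb{F}_2$-linear map, takes its kernel $B$ (the even unitarily-represented elements) and image $A$, and uses the splitting $\ztwo^n\cong B\times A$ of $\mathbb{F}_2$-vector spaces. A basis for $B$ supplies even unitary generators, while a basis for the (at most two-dimensional) complement supplies at most two antiunitary generators. Your explicit pairwise products $A_{i_0}A_i$ are precisely the elementary column operations that realise this change of basis, restricted to the antiunitary generators and organised by $c$-parity. The paper's formulation has the slight advantage of simultaneously arranging that \emph{all} the $U_i$ lie in $\ker(\phi,c)$, which is used in the subsequent case analysis; your version leaves the original $U_i$ untouched, but that is harmless for the lemma as stated.

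One minor point: the cocycle $\sigma:G\times G\to\mathrm{U}(1)$ is a function on the group and does not depend on the choice of generating set, so there is no ``adjusted cocycle $\sigma'$''. Your worry here is unnecessary (though harmless, since your argument does not actually use any property of $\sigma'$).
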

\begin{proof}
Let $A$ be the image of the homomorphism $(\phi,c):\ztwo^n\rightarrow \ztwo\times\ztwo$, and $B$ be its kernel. Every non-identity element of $A,B$ and $\ztwo^n$ has order 2. Regarding the groups as finite-dimensional vector spaces over the two-element field $\mathbb{F}_2$, we have $\ztwo^n\cong B\times A$. Any $\mathbb{F}_2$-basis for $B$ provides a set of even unitarily implemented generators $U_i$. Since $\mathrm{dim}(A)\leq 2$, there are at most two $A_k\in A$ with $\phi(A_k)=-1$ providing antiunitary operators $\mathsf{A}_k$.\qed
\end{proof}
A basis for $A$ can be chosen to be one of the following: (i) empty, (ii) \{odd $U_n$\}, (iii) \{even $A_1$\}, (iv) \{odd $A_1$\}, or (v) \{even $A_1$, odd $A_2$\}. We proceed to study the possible cocycles $\sigma$ for $G=\ztwo^n$, and fix representatives for their cocycle classes.

Since $\mathsf{U}_i^2=\sigma(U_i,U_i) \theta_e=\sigma(U_i,U_i)$, we can make the modification $\mathsf{U}_i\mapsto\pm\sigma(U_i,U_i)^{-1/2}\mathsf{U}_i$ to fix $\mathsf{U}_i^2=+1$. This does not work for $\mathsf{A}_k$, since $(\lambda \mathsf{A}_k)(\lambda \mathsf{A}_k)=\lambda\overline{\lambda}\mathsf{A}_k^2=\mathsf{A}_k^2$ for any $\lambda\in\mathrm{U}(1)$. Setting $x=y=z=A_k$ in \eqref{generalised2cocycle} leads to $\sigma(A_k,A_k)=\overline{\sigma(A_k,A_k)}$, so $\mathsf{A}_k^2=\sigma(A_k,A_k)=\pm 1$ are invariants of the cocycle class of $\sigma$. Next, we look at the commutation relations amongst $\mathsf{U}_i$ and $\mathsf{A}_k$. For two unitaries $\mathsf{U}_i,\mathsf{U}_j$, $i<j$, we write $\lambda_{ij}\coloneqq\sigma(U_i,U_j)/\sigma(U_j,U_i)$ so that $\mathsf{U}_i\mathsf{U}_j=\lambda_{ij} \mathsf{U}_j\mathsf{U}_i$. Then
\begin{equation}
	\mathsf{U}_j=\mathsf{U}_i^2\mathsf{U}_j=\lambda_{ij}^2 \mathsf{U}_j\mathsf{U}_i^2=\lambda_{ij}^2 \mathsf{U}_j,\nonumber
\end{equation}
so fixing $\mathsf{U}_i^2=+1$ leads us to $\lambda_{ij}=\pm 1$. For any two complex scalars $a_1, a_2$, we have $(a_1\mathsf{A}_1)(a_2\mathsf{A}_2)=a_1\bar{a}_2\sigma(A_1,A_2)\theta_{A_1A_2}$, as well as $(a_2\mathsf{A}_2)(a_1\mathsf{A}_1)=a_2\bar{a}_1\sigma(A_2,A_1)\theta_{A_1A_2}$. By choosing $a_1=\pm\sigma(A_2,A_1)^{1/2}, a_2=\pm\sigma(A_1,A_2)^{1/2}$, and making the adjustments $\mathsf{A}'_1=a_1\mathsf{A}_1, \mathsf{A}'_2= a_2\mathsf{A}_2$ and $\theta'_{A_1A_2}=a_1a_2\theta_{A_1A_2}$, we obtain $\mathsf{A}'_1\mathsf{A}'_2=\theta_{A_1A_2}'=\mathsf{A}'_2\mathsf{A}'_1$. Finally, we look at $\mathsf{U}_i\mathsf{A}_k=\nu_{ik}\mathsf{A}_k\mathsf{U}_i$, where $\nu_{ik}\coloneqq\sigma(U_i,A_k)/\sigma(A_k,U_i)$. The equation
\begin{equation}
	\mathsf{A}_k=\mathsf{U}_i^2\mathsf{A}_k=\nu_{ik}^2\mathsf{A}_k\mathsf{U}_i^2=\nu_{ik}^2\mathsf{A}_k\nonumber
\end{equation}
means that $\nu_{ik}=\pm 1$, which cannot be fixed by utilising the residual $\pm 1$ phase freedom in $\mathsf{U}_i,\mathsf{A}_k$. We summarise this discussion in the following Proposition:
\begin{proposition}\label{proposition:parityoperatorstandardform}
Let $\theta$ be a graded PUA-rep of $(\ztwo^n,c,\phi,\sigma)$, with $0\leq m\leq n$ unitarily implemented group generators $U_i$ and $0\leq n-m\leq 2$ antiunitarily implemented group generators $A_k$, as in Lemma \ref{lemma:atmosttwoantiunitaries}. We can adjust $\mathsf{U}_i$ and $\mathsf{A}_k$, while staying in the same cocycle class, so that $\mathsf{U}_i^2=+1$, $\mathsf{A}_k^2=\pm 1$, $\mathsf{A}_k\mathsf{A}_l=\mathsf{A}_l\mathsf{A}_k$, $\mathsf{U}_i\mathsf{U}_j=\pm  \mathsf{U}_j\mathsf{U}_i$, and $\mathsf{U}_i\mathsf{A}_k=\pm \mathsf{A}_k\mathsf{U}_i$, for all $1\leq i,j \leq m$ and $1\leq k,l\leq n-m$.
\end{proposition}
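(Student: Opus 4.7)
The plan is to exploit the exterior-equivalence action of Section \ref{subsection:ungradedcovariantrepresentation}: for any Borel map $\lambda:G\to\mathrm{U}(1)\subset\mathcal{Z}(\mathcal{U}\mathcal{M}\mathbb{C})$, the substitution $\theta_x\mapsto\lambda(x)\theta_x$ keeps us inside the same cocycle class of graded PUA-reps. Because $\lambda(x)$ is a central scalar, the even/odd parity of $\theta_x$ under $c$, the unitary/antiunitary character under $\phi$, and the automorphism $\alpha_x$ are all unaffected; only the cocycle is adjusted via \eqref{possibletwistingcocycle}. Thus every claim in the proposition reduces to showing that suitable choices of $\lambda$ on the generators achieve all the stated normalisations simultaneously.

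First I would normalise the unitary generators. Since $\mathsf{U}_i^2=\sigma(U_i,U_i)$ is a $\mathrm{U}(1)$-scalar, taking $\lambda(U_i)$ to be either square root of $\sigma(U_i,U_i)^{-1}$ yields $\mathsf{U}_i^2=+1$ after rescaling. Once this is in force, the identity $\mathsf{U}_j=\mathsf{U}_i^2\mathsf{U}_j=\lambda_{ij}^2\mathsf{U}_j\mathsf{U}_i^2=\lambda_{ij}^2\mathsf{U}_j$, with $\lambda_{ij}\coloneqq\sigma(U_i,U_j)/\sigma(U_j,U_i)$, forces $\lambda_{ij}^2=1$, giving $\mathsf{U}_i\mathsf{U}_j=\pm\mathsf{U}_j\mathsf{U}_i$. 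The same argument applied to $\nu_{ik}\coloneqq\sigma(U_i,A_k)/\sigma(A_k,U_i)$ shows $\mathsf{U}_i\mathsf{A}_k=\pm\mathsf{A}_k\mathsf{U}_i$ without any adjustment of the $\mathsf{A}_k$ being needed.

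Next I would handle the antiunitary sector. The crucial observation is that antiunitary rescaling cannot alter the square: $(\lambda(A_k)\mathsf{A}_k)^2=\lambda(A_k)\overline{\lambda(A_k)}\mathsf{A}_k^2=\mathsf{A}_k^2$. Hence $\mathsf{A}_k^2=\sigma(A_k,A_k)$ is a genuine cocycle invariant; feeding $x=y=z=A_k$ into \eqref{generalised2cocycle} gives $\sigma(A_k,A_k)=\overline{\sigma(A_k,A_k)}$, so this $\mathrm{U}(1)$-scalar is real and hence $\pm 1$. In the case of two antiunitaries, the direct computations $(a_1\mathsf{A}_1)(a_2\mathsf{A}_2)=a_1\bar a_2\sigma(A_1,A_2)\theta_{A_1A_2}$ and $(a_2\mathsf{A}_2)(a_1\mathsf{A}_1)=a_2\bar a_1\sigma(A_2,A_1)\theta_{A_1A_2}$ show that the choices $a_1^2=\sigma(A_2,A_1)$ and $a_2^2=\sigma(A_1,A_2)$ (with the consistent relabelling $\theta'_{A_1A_2}=a_1a_2\theta_{A_1A_2}$) make $\mathsf{A}'_1$ and $\mathsf{A}'_2$ commute.

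The main conceptual obstacle I would then check is that all these rescalings can be applied simultaneously without interfering. Since $\ztwo^n$ is abelian, one may specify $\lambda$ on the generators independently and extend multiplicatively to the remaining group elements. The individual adjustments are mutually compatible: the $\mathsf{U}_i$-rescalings are scalars that commute past everything and leave $\mathsf{A}_k^2$ as well as the antiunitary commutator $\mathsf{A}_1\mathsf{A}_2$ untouched, while the $\mathsf{A}_k$-rescalings, being $\mathrm{U}(1)$-valued, preserve $\mathsf{A}_k^2$ and pass through the $\mathsf{U}_i\mathsf{A}_k$ relation without affecting $\nu_{ik}$ (the scalar $a_k$ and the unitary $\mathsf{U}_i$ commute, and $|a_k|=1$). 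The only real subtlety is distinguishing the true cocycle invariants (the antiunitary squares $\pm 1$) from the gauge-rescalable quantities (the unitary squares, and the $\pm 1$ commutation scalars $\lambda_{ij},\nu_{ik}$); once this dichotomy is clear, the proposition reduces to the elementary computations assembled above.
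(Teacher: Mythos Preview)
Your proposal is correct and follows essentially the same argument as the paper's own discussion preceding the proposition: the same rescalings $\mathsf{U}_i\mapsto\sigma(U_i,U_i)^{-1/2}\mathsf{U}_i$, the same cocycle-identity argument for $\mathsf{A}_k^2=\pm 1$, the same $\lambda_{ij}^2=\nu_{ik}^2=1$ computations, and the same choice $a_k=\sigma(A_{3-k},A_k)^{1/2}$ to force the antiunitaries to commute. One small imprecision in your closing paragraph: the signs $\lambda_{ij},\nu_{ik}\in\{\pm 1\}$ are not ``gauge-rescalable'' --- as the paper notes, the residual $\pm 1$ phase freedom cannot alter them, so they too are invariants of the cocycle class once $\mathsf{U}_i^2=+1$ is fixed.
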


Let $\mathbf{i}=\{i_1,\ldots, i_p\}$ and $\mathbf{k}=\{k_1,\ldots, k_q\}$ be (possibly empty) increasing subsets of $\{1,\ldots,n\}$ and $\{1,\ldots,n-m\leq 2\}$ respectively. Let $U_\mathbf{i}$ and $A_\mathbf{k}$ denote the group elements $U_{i_1}\ldots U_{i_p}$ and $A_{k_1}\ldots A_{k_q}$ respectively, with $U_\emptyset=1=A_\emptyset$. In particular, $U_i=U_{\{i\}}$ and $A_k=A_{\{k\}}$. Every element of $\ztwo^n$ can be uniquely written as $U_\mathbf{i}A_\mathbf{k}$ for some $\mathbf{i}, \mathbf{k}$. We can use the remaining phase freedom for the group elements $U_\mathbf{i}A_\mathbf{k}$ with $|\mathbf{i}|+|\mathbf{k}|\geq 2$ to fix the condition $\theta_{U_\mathbf{i}A_\mathbf{k}}=\mathsf{U}_{i_1}\ldots\mathsf{U}_{i_p}\mathsf{A}_{k_1}\ldots\mathsf{A}_{k_q}$ for their representatives. The cocycle for this standardised $\theta$ is then completely determined by the set of $\pm 1$ in Proposition \ref{proposition:parityoperatorstandardform}.

\subsection{Clifford algebras associated with $CT$-subgroups --- the tenfold way}\label{subsection:CTgrouptenfoldway}
We define the $CT$-group  to be $\ztwo^2=\{1,T,C,S\}$, which has $(\phi,c)(T)=(-1,+1), (\phi,c)(C)=(-1,-1),(\phi,c)(S)=(+1,-1)$. The elements $T,C$ and $S=CT=TC$ refer to time-reversal, charge-conjugation, and sublattice symmetries respectively. We are interested in the graded PUA-reps of $(A,\sigma)$, where $A\subset \{1,T,C,S\}$ is a $CT$-subgroup and the homomorphisms $\phi,c$ on $A$ are implicit. The representatives of $T,\,C$ and $S$ (where present) are denoted by $\mathsf{T},\,\mathsf{C}$ and $\mathsf{S}$ respectively. 

First, we consider the full $CT$-group $A=\{1,T,C,S\}$. In the standard form of Proposition \ref{proposition:parityoperatorstandardform}, there are four choices $\mathsf{T}^2=\pm 1, \mathsf{C}^2=\pm 1$, and we may assume that $\mathsf{T}\mathsf{C}=\mathsf{S}=\mathsf{C}\mathsf{T}$. In each of these four cases, there is an associated real algebra generated by $\{\im, \mathsf{T},\mathsf{C}, \Gamma\}$, with commutation relations determined by $\phi$ and $c$, as well as $\im\Gamma=\Gamma\im$. Equivalently, we can choose the generating set $\{\Gamma, \mathsf{C}, \im \mathsf{C}, \im \mathsf{C}\mathsf{T}\}$, whose elements are mutually anticommuting, and are self-adjoint (resp.\ skew-adjoint) if they square to $+1$ (resp.\ $-1$). The latter choice shows explicitly that a graded PUA-rep for the full $CT$-group is precisely an ungraded $*$-representation of a certain real Clifford algebra $Cl_{r,s}$, where $r$ and $s$ are determined by the squares of the algebra generators in $\{\Gamma, \mathsf{C}, \im \mathsf{C}, \im \mathsf{C}\mathsf{T}\}$.

For $A=\{1,T\}$, there are two choices $\mathsf{T}^2=\pm 1$. The anticommuting set $\{\im,\mathsf{T},\im \mathsf{T}\Gamma\}$ generates $Cl_{1,2}$ when $\mathsf{T}^2=+1$ and $Cl_{3,0}$ when $\mathsf{T}^2=-1$. For $A=\{1,C\}$, there are again two choices $\mathsf{C}^2=\pm 1$. The anticommuting set $\{\Gamma, \mathsf{C}, \im \mathsf{C}\}$ generates $Cl_{0,3}$ if $\mathsf{C}^2=+1$ and $Cl_{2,1}$ if $\mathsf{C}^2=-1$. For the subgroup $A=\{1,S\}$, there is only one standard choice $\mathsf{S}^2=+1$, and $\{\Gamma, \mathsf{S}\}$ generates the \emph{complex} Clifford algebra $\mathbb{C}l_2$. Finally, the trivial subgroup $A=\{1\}$ leads easily to $\mathbb{C}l_1$, generated by $\Gamma$.

An alternative approach excludes $\Gamma$ as a Clifford generator. Except for the case $A=\{1,T\}$, the generating set of \emph{odd}, mutually anticommuting operators is taken from $\{\mathsf{C}, \im \mathsf{C}, \im \mathsf{C}\mathsf{T}\}$ or $\{\mathsf{S}\}$, and it generates a \emph{graded} Clifford algebra with one fewer positive Clifford generator compared to the first approach. Then, a graded PUA-rep for $(A,\sigma)$ is precisely a \emph{graded} $*$-representation of the corresponding \emph{graded} Clifford algebra, with grading operator $\Gamma$. However, the subgroup $A=\{1,T\}$ cannot be handled directly in this way, see Remark \ref{gradedversusungradedclifford}. 

We summarise this discussion in Table \ref{table:ctcliffordalgebras}. The Morita equivalence classes in the last column are obtained by the isomorphisms 
\begin{equation}
	Cl_{r,s}\otimes M_2(\mathbb{R})\cong Cl_{r,s}\otimes Cl_{1,1} \cong  Cl_{r+1,s+1}\nonumber,
\end{equation}	
the periodicities
\begin{eqnarray}
	\mathbb{C}l_{n}\otimes M_2(\mathbb{C})&\cong &\mathbb{C}l_{n+2}\nonumber\\ 
	Cl_{n,0}\otimes M_{16}(\mathbb{R})&\cong & Cl_{n+8,0}\nonumber\\ 
	Cl_{0,n}\otimes M_{16}(\mathbb{R})&\cong & Cl_{0,n+8}\nonumber,
\end{eqnarray}
as well as the $1\textrm{--}1$ correspondence between graded representations of $Cl_{r,s}$ and ungraded representations of $Cl_{r,s+1}$ \cite{atiyah1964clifford,lawson1989spin}.
\begin{table}
\begin{center}
\begin{tabular}{ l |c c | l  l l}
	\parbox[c]{1.7cm}{Generators of $A$} & $\mathsf{C}^2$ & $\mathsf{T}^2$ & \parbox[c]{2.0cm}{Associated algebra} & \parbox[c]{2.0cm}{Ungraded Clifford algebra} & \parbox[c]{2.0cm}{Graded Morita class}\\
	\hline
	$T$	    &	 & $+1$			 & $M_2(\mathbb{R})\oplus M_2(\mathbb{R})$	& $Cl_{1,2}$			& $Cl_{0,0}$ 			\\
	$C,T$ 	& $-1$ & $+1$			 & $M_4(\mathbb{R})$					  						& $Cl_{2,2}$			& $Cl_{1,0}$					\\
	$C$		& $-1$ &			 & $M_2(\mathbb{C})$												& $Cl_{2,1}$			& $Cl_{2,0}$ 			\\
	$C,T$ 	& $-1$ & $-1$			 & $M_2(\mathbb{H})$												& $Cl_{3,1}$			& $Cl_{3,0}$ 			\\
	$T$		&    & $-1$			 & $\mathbb{H}\oplus\mathbb{H}$							& $Cl_{3,0}$			& $Cl_{4,0}$ 			\\
	$C,T$ 	& $+1$  & $-1$		     & $M_2(\mathbb{H})$												& $Cl_{0,4}$			& $Cl_{5,0}$ 			\\
	$C$		& $+1$  &		  	 & $M_2(\mathbb{C})$ 												& $Cl_{0,3}$			& $Cl_{6,0}$ 			\\
	$C,T$ 	& $+1$  & $+1$			 & $M_4(\mathbb{R})$												& $Cl_{1,3}$			& $Cl_{7,0}$ 			\\

	\hline\hline
	N/A		& \multicolumn{2}{c|}{N/A}   & $\mathbb{C}\oplus\mathbb{C}$ 						&	$\mathbb{C}l_1$ & $\mathbb{C}l_0$ \\
	$S$	& \multicolumn{2}{c|}{$\mathsf{S}^2=+1$} & $M_2(\mathbb{C})$ 														& $\mathbb{C}l_2$ & $\mathbb{C}l_1$ \\
\end{tabular}
\caption{The ten classes of graded PUA-reps of $(A,\sigma)$, where $A$ is a subgroup of the $CT$-group, along with their corresponding Clifford algebras.}
\label{table:ctcliffordalgebras}
\end{center}
\end{table}

\begin{remark}\label{gradedversusungradedclifford}
In the two cases for $A=\{1,T\}$, the grading operator is not explicitly given as one of the Clifford generators in the associated ungraded Clifford algebra. In fact, $\{\im,\mathsf{T}\}$ generates the purely even algebra $M_2(\mathbb{R})$ or $\mathbb{H}$, which \emph{commutes} with $\Gamma$ in a graded PUA-rep of $(A,\sigma)$. We can rectify this by recalling the Morita equivalence between $Cl_{r,s}$ and $Cl_{r,s}\otimes Cl_{1,1}\cong Cl_{r+1,s+1}$. We introduce two extra Clifford generators $\mathsf{e},\mathsf{f}$ with $\mathsf{e}^2=-1, \mathsf{f}^2=+1$, so that $\{\im,\mathsf{T},\im \mathsf{T} \Gamma,\mathsf{e},\mathsf{f}\}$ forms a mutually anticommuting set of Clifford generators for either $Cl_{1,2}\otimes Cl_{1,1}=Cl_{2,3}$ (when $\mathsf{T}^2=+1$) or $Cl_{3,0}\otimes Cl_{1,1}=Cl_{4,1}$ (when $\mathsf{T}^2=-1$). An alternative generating set of mutually anticommuting operators is $\{\Gamma,\mathsf{e},\im \mathsf{e}, \mathsf{e}\mathsf{T}, \im \mathsf{f} \mathsf{T}\}$. We can now exclude $\Gamma$ and let $\{\mathsf{e},\im \mathsf{e}, \mathsf{e}\mathsf{T}, \im \mathsf{f} \mathsf{T}\}$ generate the graded Clifford algebra $Cl_{2,2}$ or $Cl_{4,0}$. This is an important subtlety: when we deal with $K$-theory later on, we will be interested in the various ways in which a fixed ungraded action of a graded Clifford algebra can be supplemented with a compatible grading operator $\Gamma$.
\end{remark}

\begin{remark}
An illuminating way to interpret the Clifford algebras constructed in this section, is as twisted group algebras for $\mathbb{Z}_2^n$, where the group generators are taken from a subset of $\{C,T,\im,\Gamma\}$ in the real case, and a subset of $\{S,\Gamma\}$ in the complex case. Writing $\mathbb{Z}_2^n$ in additive notation for now, and redefining generators as discussed above, the Clifford algebras can be written as $Cl_{r,s}\cong\mathbb{R}\rtimes_{(1,\sigma_{r,s})}\mathbb{Z}_2^{r+s}$ and $\mathbb{C}l_n\cong\mathbb{C}\rtimes_{(1,\sigma_n)}\mathbb{Z}_2^n$, where for $\mathbf{x},\mathbf{y}\in\mathbb{Z}_2^n$ or $\mathbf{x},\mathbf{y}\in\mathbb{Z}_2^{r+2}$ as appropriate,
\begin{equation}
	\sigma_{r,s}(\mathbf{x},\mathbf{y})=\big(-1\big)^{\sum \limits_{j<i}x_iy_j +\sum \limits_{i\leq r}x_iy_i},\qquad
	\sigma_n(\mathbf{x},\mathbf{y})=\big(-1\big)^{\sum \limits_{j<i}x_iy_j}.\nonumber
\end{equation}
\end{remark}

\section{Super-representation groups and topological triviality}
\label{section:superrepresentationgroups}
\subsection{Prelude: Representation group of a locally compact group}\label{representationringsection}
Let $G$ be a compact group. Its unitary representation theory can be summarised by the Peter--Weyl theorem. In $C^*$-algebraic language, this says that the group $C^*$-algebra $\mathbb{C}\rtimes G$ decomposes as a (possibly countably infinite) direct sum of matrix algebras over the unitary dual $\hat{G}$,
\begin{equation}
	\mathbb{C}\rtimes G \cong \bigoplus_{[V]\in \hat{G}}M_{\mathrm{dim}(V)}(\mathbb{C}).\label{peterweyl}
\end{equation}
For a proof, see Proposition 3.4 of \cite{williams2007crossed}. The complex representation ring $\mathcal{R}_\mathbb{C}(G)$ of $G$ is the Grothendieck group of the monoid of isomorphism classes of finite-dimensional unitary representations of $G$ under the direct sum. By complete reducibility, $\mathcal{R}_\mathbb{C}(G)$ is freely-generated by the elements of the unitary dual $\hat{G}$. In terms of $K$-theory, there is an isomorphism $\mathcal{R}_\mathbb{C}(G)\cong K_0(\mathbb{C}\rtimes G)\cong K_0^G(\mathbb{C})$ (see Section 11.1 of \cite{blackadar1998k}), where $K_0^G$ denotes the equivariant $K$-theory group. The real representation ring $\mathcal{R}_\mathbb{R}(G)$ requires a modification to the simple matrix algebras appearing in \eqref{peterweyl}, but we still have $\mathcal{R}_\mathbb{R}(G)\cong K_0(\mathbb{R}\rtimes G)\cong K_0^G(\mathbb{R})$. For projective unitary representations of $(G,\sigma)$, we can define the twisted representation group $\mathcal{R}_\mathbb{C}(G,\sigma)\coloneqq K_0(\mathbb{C}\rtimes_{(1,\sigma)} G)$.

If $G$ is locally compact but not compact, we choose the $K$-theoretic option and \emph{define} $\mathcal{R}_\mathbb{C}(G,\sigma)$ to be $K_0(\mathbb{C}\rtimes_{(\mathrm{1},\sigma)}G)$, generalising the compact case\footnote{This is a departure from the unitary representation theory of $G$. For instance, the complex group $C^*$-algebra of $\mathbb{Z}$ is isomorphic to $C(\mathbb{T})$, and its $K_0$-group comprises finitely-generated projective $C(\mathbb{T})$-modules, which are neither finite-dimensional nor unitary representations for $\mathbb{Z}$. However, the Serre--Swan theorem identifies such a $C(\mathbb{T})$-module with the sections of some finite-rank vector bundle over $\mathbb{T}$. This links well with Bloch theory in condensed matter physics (see Section \ref{section:bandinsulators}).}. Finally, for a general twisted dynamical system $(G,\mathcal{A},\alpha,\sigma)$, we \emph{define} the twisted representation group of $(G,\mathcal{A},\alpha,\sigma)$ to be $K_0(\mathcal{A}\rtimes_{(\alpha,\sigma)}G)$, which subsumes all the earlier definitions.

\subsection{Preliminaries on graded modules}
A graded module for a graded algebra $\mathcal{A}$ is an (ungraded) $\mathcal{A}$-module $W$ which admits a direct sum decomposition into $W=W_0\oplus W_1$, such that $\mathcal{A}_iW_j\subset W_{i+j\,(\text{mod}\,2)}$. The \emph{right} parity-reversed module $W^\Pi$ has the same underlying vectors as $W$ but with the reversed grading, $W_0^\Pi=W_1, W_1^\Pi=W_0$, and has the same graded action of $\mathcal{A}$. We write $\pi_R$ for the map $W\rightarrow W^\Pi$ fixing the underlying vectors. The \emph{left} parity reversal ${}^\Pi W$ also has ${}^\Pi W_0=W_1, {}^\Pi W_1=W_0$ but the $\mathcal{A}$-action is, on homogeneous elements,
\begin{equation}
	a\cdot (\pi_L (w))=\pi_L((-1)^{|a|}a\cdot w),\qquad a\in\mathcal{A}_0\cup\mathcal{A}_1,w\in W,\nonumber
\end{equation}
where $|a|\in\mathbb{Z}_2$ denotes the parity of $a$, and $\pi_L:W\rightarrow {}^\Pi W$ fixes the underlying vectors. Both $\pi_R$ and $\pi_L$ are odd maps and involutary operations on graded $\mathcal{A}$-modules: $\pi_R^2=\mathrm{id}=\pi_L^2$. We write $w^\pi\coloneqq\pi_R(w)$ and ${}^\pi w\coloneqq\pi_L(w)$, which distinguishes them from $w\in W$. As graded $\mathcal{A}$-modules, $W^\Pi$ and ${}^\Pi W$ are equivalent, under the even map $\varphi:W^\Pi \ni w^\pi\equiv w_0^\pi+w_1^\pi\mapsto {}^\pi w_0-{}^\pi w_1\in {}^\Pi W$. Despite this, $\pi_R$ commutes with $\mathcal{A}$, whereas $\pi_L$ \emph{graded} commutes with $\mathcal{A}$.

For a graded unital algebra $\mathcal{A}$, a graded finitely-generated free $\mathcal{A}$-module is one of the form $\mathcal{A}^m\oplus (\mathcal{A}^\Pi)^n\eqqcolon \mathcal{A}^{m|n}$, where $\mathcal{A}$ is regarded as a graded $\mathcal{A}$-module by left multiplication on itself, and $\mathcal{A}^\Pi$ is its right parity reverse\footnote{Note that the left parity reverse ${}^\Pi \mathcal{A}$ can also be used in this definition.}. A graded finitely-generated projective (f.g.p.) $\mathcal{A}$-module is defined to be a graded $\mathcal{A}$-module which is a direct summand of $\mathcal{A}^{m|n}$ for some $(m,n)$. It can be shown \cite{hazrat2014graded} that a graded f.g.p.\ $\mathcal{A}$-module is the same thing as a graded $\mathcal{A}$-module which is f.g.p.\ in the ungraded sense. In what follows, all modules are assumed to be f.g.p.\ unless otherwise stated.

\subsection{Super-representation groups of graded algebras}\label{superrepresentationdefinitionsection}
For a graded algebra $\mathcal{A}$, we define $\mathcal{G}\mathcal{V}(\mathcal{A})$ to be the commutative monoid of graded equivalence classes of graded $\mathcal{A}$-modules, under the direct sum operation. 
\begin{definition}[Trivial graded module]\label{definition:trivialgradedmodule}
A graded module for a unital graded $C^*$-algebra $\mathcal{A}$ is \emph{trivial} if it admits an odd involution $\mathcal{I}$ with $\mathcal{I}^2=+1$ which graded commutes with the action of $\mathcal{A}$,
\begin{equation}
	a\cdot (\mathcal{I}w)=(-1)^{|a|}\mathcal{I}(a\cdot w),\qquad a\in\mathcal{A}_0\cup\mathcal{A}_1, w\in W.\nonumber
\end{equation}
We define $\mathcal{T}(\mathcal{A})$ to be the set of graded equivalence classes of trivial graded $\mathcal{A}$-modules.
\end{definition}
The set $\mathcal{T}(\mathcal{A})$ is closed under direct sum and so forms a submonoid of $\mathcal{G}\mathcal{V}(\mathcal{A})$. It generates an equivalence relation on $\mathcal{G}\mathcal{V}(\mathcal{A})$ as follows: $W\sim W'$ iff there exists $T,T'\in\mathcal{T}(\mathcal{A})$ such that $W\oplus T\cong W'\oplus T'$. Note that this relation is a congruence on $\mathcal{G}\mathcal{V}(\mathcal{A})$, i.e.\ $W\sim W'$ and $Y\sim Y'$ implies that $W\oplus Y\sim W'\oplus Y'$.
\begin{definition}[Super-representation group]\label{definition:superrepresentationgroup}
The \emph{super-representation} group $\mathcal{S}\mathcal{R}(\mathcal{A})$ of a graded unital $C^*$-algebra $\mathcal{A}$ is the quotient set $\mathcal{G}\mathcal{V}(\mathcal{A})/\sim$.
\end{definition}
It is easy to see that ${}^\Pi W$ is the inverse of $W$ in the quotient $\mathcal{S}\mathcal{R}(\mathcal{A})$, since $W\oplus {}^\Pi W$ admits the odd trivialising involution $\mathcal{I}=\pi_L\oplus\pi_L$. It follows that $[{}^\Pi W]=-[W]$ in $\mathcal{S}\mathcal{R}(\mathcal{A})$, so $\mathcal{S}\mathcal{R}(\mathcal{A})$ is indeed a group. Definition \ref{definition:superrepresentationgroup} encapsulates the notion that a graded $\mathcal{A}$-module is ``cancelled'' by its left parity reverse. Furthermore, $\mathcal{T}(\mathcal{A})$ is closed under $\pi_L$: suppose $\mathcal{I}$ is an odd trivialising involution for $T$, then the operator $\tilde{\mathcal{I}} \coloneqq \pi_L\circ\mathcal{I}\circ\pi_L$ is an odd trivialising involution for ${}^\Pi T$.

An ungraded $C^*$-algebra $\mathcal{A}$ may be regarded as a graded algebra which is purely even. It is easy to see that $\mathcal{S}\mathcal{R}(\mathcal{A})$ and $K_0(\mathcal{A})$ are isomorphic, where in the latter the grading on $\mathcal{A}$ is ignored. An element in $\mathcal{S}\mathcal{R}(\mathcal{A})$ is represented by a graded $\mathcal{A}$-module $W=W_0\oplus W_1$, where the summands $W_0$ and $W_1$ are ungraded $\mathcal{A}$-modules. The corresponding element in $K_0(\mathcal{A})$ is the (class of the) virtual module $[W_0\ominus W_1]$. When $\mathcal{A}$ has a non-trivial grading, many possible ``representation groups'' exist \cite{landweber2005representation,landweber2006twisted,hazrat2014graded}. Besides $\mathcal{S}\mathcal{R}(\mathcal{A})$ and $K_0(\mathcal{A})$, there is also the \emph{graded representation group} $\mathcal{G}\mathcal{R}(\mathcal{A})$, defined to be the Grothendieck completion of $\mathcal{G}\mathcal{V}(\mathcal{A})$. Note that $\mathcal{G}\mathcal{R}(\mathcal{A})$ reduces to two copies of $K_0(\mathcal{A})$ for a purely even $\mathcal{A}$.

\subsubsection*{Non-unital graded algebras}
If $\mathcal{A}$ is a graded non-unital $C^*$-algebra, we define $\mathcal{S}\mathcal{R}(\mathcal{A})$ through a procedure similar to that of $K_0$ for non-unital ungraded algebras. As an ungraded algebra, $\mathcal{A}$ has its usual $C^*$-algebra unitisation $\mathcal{A}^+=\{(a,\lambda)\,:\,a\in\mathcal{A},\lambda\in\mathbb{F}\}$ with its unique $C^*$-norm, component-wise sum and adjoint, and multiplication given by $(a,\lambda)(b,\mu)=(ab+\lambda b + \mu a, \lambda\mu)$. We assign the grading $(\mathcal{A}^+)_0=\{(a,\lambda)\,:\,a\in\mathcal{A}_0,\lambda\in \mathbb{F}\},\,(\mathcal{A}^+)_1=\{(a,0)\,:\,a\in\mathcal{A}_1\}$, then $\mathcal{A}$ is a graded two-sided ideal in $\mathcal{A}^+$, i.e., $\mathcal{A}$ is a two-sided ideal in the ungraded sense, and $\mathcal{A}_i=\mathcal{A}\cap (\mathcal{A}^+)_i$. The quotient algebra $\mathcal{A}^+/\mathcal{A}=\mathbb{F}$ is purely even. Let $f:\mathcal{A}\rightarrow\mathcal{B}$ be an even homomorphism between unital graded algebras. Then $\mathcal{B}$ becomes a graded $\mathcal{A}$-bimodule via
\begin{equation}
	a\cdot b\coloneqq  f(a)b,\quad 	b\cdot a\coloneqq  bf(a),\qquad a\in\mathcal{A},\,b\in\mathcal{B}.\nonumber
\end{equation}
The map $f$ induces a graded $\mathcal{B}$-module from a graded $\mathcal{A}$-module $W$ via \mbox{$f_*(W)\coloneqq \mathcal{B}\,\hat{\otimes}_\mathcal{A}\,W$}, whose homogeneous subspaces $(f_*(W))_i$ are spanned by \mbox{$\{b\,\hat{\otimes} w\,:\, |b|+|w|=i\}$}. We verify that $\mathcal{G}\mathcal{V}(\cdot)$, along with $f\mapsto f_*$ (on equivalence classes), is a covariant functor from graded unital $C^*$-algebras to abelian monoids, then Grothendieck completion gives an induced homomorphism $\mathcal{G}\mathcal{R}(\mathcal{A})\xrightarrow{f_*}\mathcal{G}\mathcal{R}(\mathcal{B})$. These constructions are consistent with the usual $K$-theory ones when $\mathcal{A},\mathcal{B}$ are regarded as ungraded algebras. We can now define, for a non-unital graded algebra $\mathcal{A}$,
\begin{eqnarray}
K_0(\mathcal{A})&\coloneqq &\mathrm{ker}(K_0(\mathcal{A}^+)\xrightarrow{p_*}K_0(\mathbb{F})),\nonumber\\
\mathcal{G}\mathcal{R}(\mathcal{A})&\coloneqq &\mathrm{ker}(\mathcal{G}\mathcal{R}(\mathcal{A}^+)\xrightarrow{p_*} \mathcal{G}\mathcal{R}(\mathbb{F}))\nonumber
\end{eqnarray}
where $p$ is the projection $\mathcal{A^+}\xrightarrow{p}\mathbb{F}$. As for the super-representation group, we can verify that a trivial $T\in \mathcal{T}(\mathcal{A})$ with trivialising operator $\mathcal{I}$ gets mapped to a trivial $f_*(T)\in\mathcal{T}(\mathcal{B})$, with trivialising operator $\mathcal{I}':b\,\hat{\otimes}\,t\mapsto (-1)^{|b|}b\,\hat{\otimes}\,\mathcal{I}t$. Thus, we can define, for a non-unital graded $\mathcal{A}$,
\begin{equation}
\mathcal{S}\mathcal{R}(\mathcal{A})\coloneqq \mathrm{ker}(\mathcal{S}\mathcal{R}(\mathcal{A}^+)\xrightarrow{p_*} \mathcal{S}\mathcal{R}(\mathbb{F})).\nonumber
\end{equation}

\begin{remark}\label{leftversusrightremark}
In \cite{landweber2005representation}, the author used the right parity reversal in his definition of $\mathcal{S}\mathcal{R}(\cdot)$, in order to make contact with $KO^{-n}(\star), K^{-n}(\star)$ (the real and complex $K$-theory of a point) more directly; see Remark 4.1 in his paper. His odd trivialising involution is required to commute, rather than graded commute, with the $\mathcal{A}$-action. This has the effect of identifying $\mathcal{S}\mathcal{R}(Cl_{n,0})$ with $K^{-n}(\star)$, whereas our definition leads to $\mathcal{S}\mathcal{R}(Cl_{0,n})\cong K^{-n}(\star)$, see Remark \ref{remark:trivialityinvolutionvsantiinvolution}. We have chosen a different convention for two reasons. First, when $\mathcal{A}$ is the Clifford algebra associated with the symmetry data $(A,\sigma)$ of a $CT$-subgroup, we recover the $d=0$ column of the periodic table of Kitaev in the correct order. Second, the trivialising operator $\mathcal{I}$ has an interpretation in terms of the difference-group $\mathbf{K}_0(\mathcal{A})$ as defined in Section \ref{section:karoubidifferenceconstruction}, see Remark \ref{remark:subdifferencegroup}. 
\end{remark}

\begin{example}
For $\mathcal{A}=\mathbb{R}$, $K_0(\mathbb{R})=\mathbb{Z}$ is generated by the vector space $\mathbb{R}$. If we give $\mathcal{A}=\mathbb{R}$ the purely even grading, its graded modules are $\mathbb{R}^{m|n}, (m,n)\in\mathbb{N}\oplus\mathbb{N}$; thus, $\mathcal{G}\mathcal{R}(\mathbb{R})\cong\mathbb{Z}\oplus\mathbb{Z}$. Trivial graded $\mathbb{R}$-modules are of the form $\mathbb{R}^{s|s}$. Up to the addition of trivial modules, $\mathbb{R}^{m|n}$ can be represented by $\mathbb{R}^{m-n|0}$ or $\mathbb{R}^{0|n-m}$, depending on $m-n\geq 0$ or $n-m>0$. Thus $\mathcal{S}\mathcal{R}(\mathbb{R})\cong\mathbb{Z}\cong K_0(\mathbb{R})$.
\end{example}

\begin{example}
The Clifford algebras $Cl_{r,s}$ and $\mathbb{C}l_n$ are naturally graded $C^*$-algebras, with odd (resp.\ even) products of Clifford generators being odd (resp.\ even), and positive (resp.\ negative) generators being self-adjoint (resp.\ skew-adjoint). They are semisimple, so a finite-dimensional (graded) Clifford module is a f.g.p.\ Clifford module, which can be turned into a (graded) $*$-representation by an appropriate choice of inner product. The group $\mathcal{G}\mathcal{R}(Cl_{r,s})$ (resp.\ $\mathcal{G}\mathcal{R}(\mathbb{C}l_n)$) is freely generated by the equivalence classes of irreducible graded $Cl_{r,s}$ (resp.\ $\mathbb{C}l_n$) representations, and we have (see Chapter I.5 of \cite{lawson1989spin})
\begin{alignat}{3}
	\mathcal{G}\mathcal{R}(Cl_{r,s})&\cong  &\begin{cases} \mathbb{Z}\oplus\mathbb{Z}\;\; & r-s=0,4\, (\mathrm{mod}\,2),\\ \mathbb{Z} & \mathrm{otherwise};\end{cases}\qquad
	\mathcal{G}\mathcal{R}(\mathbb{C}l_n)&\cong &\begin{cases} \mathbb{Z}\oplus\mathbb{Z} \;\;& n\,\,\mathrm{even},\\ \mathbb{Z} & n\,\,\mathrm{odd}.\end{cases}\nonumber
\end{alignat}
A trivial graded $Cl_{r,s}$-module is a graded $Cl_{r,s+1}$-module with the action of the $(s+1)^{\mathrm{th}}$ positive Clifford generator forgotten. The inclusion $i:Cl_{r,s}\hookrightarrow Cl_{r,s+1}$ induces a ``restriction-of-scalars'' homomorphism $i^*:\mathcal{G}\mathcal{R}(Cl_{r,s+1})\rightarrow \mathcal{G}\mathcal{R}(Cl_{r,s})$. 
\begin{lemma}\label{lemma:cliffordsuperrepresentationisomorphism}
There are isomorphisms
\begin{eqnarray}
	\mathcal{S}\mathcal{R}(Cl_{r,s}) & \cong & \mathcal{G}\mathcal{R}(Cl_{r,s})/i^*\mathcal{G}\mathcal{R}(Cl_{r,s+1}),\nonumber\\
	\mathcal{S}\mathcal{R}(\mathbb{C}l_n)  &\cong & \mathcal{G}\mathcal{R}(\mathbb{C}l_n)/i^*\mathcal{G}\mathcal{R}(\mathbb{C}l_{n+1}).\nonumber
\end{eqnarray}
\end{lemma}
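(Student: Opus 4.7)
My plan is to construct an explicit surjective group homomorphism $\pi\colon \mathcal{G}\mathcal{R}(Cl_{r,s})\to \mathcal{S}\mathcal{R}(Cl_{r,s})$ and show that its kernel is exactly $i^*\mathcal{G}\mathcal{R}(Cl_{r,s+1})$. The natural monoid map $\mathcal{G}\mathcal{V}(Cl_{r,s})\to\mathcal{S}\mathcal{R}(Cl_{r,s})$, sending a graded module to its super-equivalence class, lands in a group, so by the universal property of Grothendieck completion it extends uniquely to a group homomorphism $\pi$. Surjectivity is immediate since any class in $\mathcal{S}\mathcal{R}(Cl_{r,s})$ is represented by some graded module $W$, and $\pi([W])$ gives that class.

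For the inclusion $i^*\mathcal{G}\mathcal{R}(Cl_{r,s+1})\subseteq\ker\pi$, I use the observation immediately preceding the lemma: given any graded $Cl_{r,s+1}$-module $\tilde{W}$, the extra positive generator $e_{s+1}$ acts as an odd operator satisfying $e_{s+1}^{\,2}=+1$, and it anticommutes with each of the odd Clifford generators of $Cl_{r,s}$. Therefore $e_{s+1}$ graded commutes with the whole $Cl_{r,s}$-action on $i^*\tilde{W}$ and serves as an odd trivializing involution in the sense of Definition \ref{definition:trivialgradedmodule}. Hence $\pi(i^*[\tilde{W}])=0$.

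For the reverse inclusion, suppose $\xi=[W]-[W']\in\ker\pi$. Then $W$ and $W'$ are super-equivalent, so there exist trivial modules $T,T'\in\mathcal{T}(Cl_{r,s})$ with $W\oplus T\cong W'\oplus T'$ as graded $Cl_{r,s}$-modules. In $\mathcal{G}\mathcal{R}(Cl_{r,s})$ this forces $\xi=[T']-[T]$. Now the trivializing involution $\mathcal{I}$ of $T$ is odd, satisfies $\mathcal{I}^2=+1$, and graded commutes with $Cl_{r,s}$, which (since all Clifford generators are odd) is the same as anticommuting with each Clifford generator. These are exactly the defining relations of a new positive Clifford generator $e_{s+1}$, so declaring $e_{s+1}\cdot w:=\mathcal{I}w$ turns $T$ into a graded $Cl_{r,s+1}$-module $\tilde{T}$ with $i^*\tilde{T}=T$, and analogously $T'\leadsto\tilde{T}'$. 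Therefore $\xi=i^*\bigl([\tilde{T}']-[\tilde{T}]\bigr)\in i^*\mathcal{G}\mathcal{R}(Cl_{r,s+1})$. The complex case follows by the same argument with $\mathbb{C}l_n,\mathbb{C}l_{n+1}$ in place of $Cl_{r,s},Cl_{r,s+1}$.

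The proof is essentially formal once the bijective correspondence between trivial graded $Cl_{r,s}$-modules and restrictions of graded $Cl_{r,s+1}$-modules is in place, and there is no serious obstacle here. The only point that could cause concern is ensuring that the extension in Step 2 is a genuine graded $Cl_{r,s+1}$-action: that $e_{s+1}$ acts as an odd operator is forced by $\mathcal{I}$ being odd, and all Clifford relations are satisfied by construction. No $*$-compatibility question arises because $\mathcal{S}\mathcal{R}$ is defined in terms of f.g.p.\ graded modules rather than $*$-representations.
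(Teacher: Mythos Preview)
Your proof is correct and rests on the same key fact as the paper's: the bijective correspondence between trivial graded $Cl_{r,s}$-modules and restrictions of graded $Cl_{r,s+1}$-modules (stated just before the lemma). The organization differs slightly: you use the first isomorphism theorem by constructing a surjection $\pi\colon\mathcal{G}\mathcal{R}(Cl_{r,s})\to\mathcal{S}\mathcal{R}(Cl_{r,s})$ and identifying its kernel, whereas the paper builds explicit mutually inverse maps $p\colon[W]\mapsto[W\ominus 0]$ and $q\colon[W\ominus 0]\mapsto[W]$, after first reducing every class in the quotient to one of the form $[W\ominus 0]$ via the observation that $W'\oplus{}^\Pi W'$ is always a $Cl_{r,s+1}$-module. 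Your route is a bit more explicit about why the kernel is exactly $i^*\mathcal{G}\mathcal{R}(Cl_{r,s+1})$, while the paper's is terser but requires the reader to unpack ``well-defined''; substantively the two arguments are the same.
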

\begin{proof}
$W'\oplus {}^\Pi W'$ is a graded $Cl_{r,s+1}$-module for any graded $Cl_{r,s}$-module $W'$, so any element $[W\ominus W']\in \mathcal{G}\mathcal{R}(Cl_{r,s})$ can be written as $[(W\oplus {}^\Pi W') \ominus 0]$ after passing to the quotient. Let $[W]$ denote an element of $\mathcal{S}\mathcal{R}(Cl_{r,s})$ represented by a graded $Cl_{r,s}$-module $W$. The required isomorphism is $p:[W]\mapsto [W\ominus 0]$, with the inverse map $q:[W\ominus 0]\mapsto [W]$. It is easy to check that $p$ and $q$ are well-defined; the complex case is similar.
\qed
\end{proof}

A graded $Cl_{r,s}$-module $W$ is also a graded $Cl_{s,r}$-module by a simple adjustment of the actions of the Clifford generators. If $\Gamma$ is the grading operator on $W$, and $\{\mathsf{e}_1,\ldots,\mathsf{e}_r,\mathsf{f}_1,\ldots,\mathsf{f}_s\}$ are the odd anticommuting operators for the $r$ negative and $s$ positive $Cl_{r,s}$ generators, then $\mathsf{e}'_j\coloneqq\Gamma \mathsf{f}_j$ and $\mathsf{f}'_i\coloneqq\Gamma \mathsf{e}_i$ give $s$ negative and $r$ positive odd anticommuting operators on $W$. Furthermore, if a graded $Cl_{r,s}$-module $W$ admits the action of an extra negative Clifford generator $\mathsf{e}_{r+1}$, then the operator $\mathsf{f}'_{r+1}\coloneqq \Gamma \mathsf{e}_{r+1}$ acts as an extra \emph{positive} Clifford generator when we regard $W$ as a graded $Cl_{s,r}$-module. Thus, we have shown that $\mathcal{G}\mathcal{R}(Cl_{r,s})\cong \mathcal{G}\mathcal{R}(Cl_{s,r})$ in a manner that respects the homomorphisms $i^*$,
\begin{equation}
	\mathcal{G}\mathcal{R}(Cl_{r,s})/i^*\mathcal{G}\mathcal{R}(Cl_{r,s+1})\cong \mathcal{G}\mathcal{R}(Cl_{s,r})/i^*\mathcal{G}\mathcal{R}(Cl_{s+1,r}).\label{switchcliffordindices}
\end{equation}

The Atiyah--Bott--Shapiro isomorphisms \cite{atiyah1964clifford},
\begin{alignat}{3}
	K_n(\mathbb{R})&\cong & KO^{-n}(\star)& \cong   \mathcal{G}\mathcal{R}(Cl_{r,s})/i^*\mathcal{G}\mathcal{R}(Cl_{r+1,s}),\quad n=r-s\,\mathrm{(mod\,8)}\nonumber\\
	K_n(\mathbb{C})&\cong & K^{-n}(\star) \,& \cong  \mathcal{G}\mathcal{R}(\mathbb{C}l_n)/i^*\mathcal{G}\mathcal{R}(\mathbb{C}l_{n+1}).\nonumber
\end{alignat}
together with Lemma \ref{lemma:cliffordsuperrepresentationisomorphism} and \eqref{switchcliffordindices}, says that the super-representation groups of the Clifford algebras are the $K$-theory groups of a point:

\begin{corollary}\label{corollary:cliffordsuperrepresentationktheory}
\begin{subequations}
\begin{alignat}{4}
	\mathcal{S}\mathcal{R}(Cl_{r,s}) & \cong & K_{s-r}(\mathbb{R}) & \cong KO^{r-s}(\star),\label{cliffordsuperrepresentationkgroupreal}\\
	\mathcal{S}\mathcal{R}(\mathbb{C}l_n) & \cong & K_n(\mathbb{C})\,\, & \cong K^{-n}(\star)\label{cliffordsuperrepresentationkgroupcomplex}
\end{alignat}
\end{subequations}
\end{corollary}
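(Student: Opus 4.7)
The plan is to assemble the corollary directly from the three pieces already on the table in the excerpt: Lemma \ref{lemma:cliffordsuperrepresentationisomorphism}, the index-switching isomorphism \eqref{switchcliffordindices}, and the Atiyah--Bott--Shapiro presentation of the $K$-theory of a point. No new construction is needed; the only subtlety is lining up the roles of positive versus negative Clifford generators.

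For the real case, I would begin from Lemma \ref{lemma:cliffordsuperrepresentationisomorphism}, which gives
\[
\mathcal{S}\mathcal{R}(Cl_{r,s})\cong \mathcal{G}\mathcal{R}(Cl_{r,s})/i^*\mathcal{G}\mathcal{R}(Cl_{r,s+1}),
\]
that is, a quotient by the subgroup obtained from modules that extend by an extra \emph{positive} generator. The Atiyah--Bott--Shapiro presentation in the excerpt, however, is stated as a quotient by the subgroup coming from an extra \emph{negative} generator, i.e.\ $\mathcal{G}\mathcal{R}(Cl_{r,s})/i^*\mathcal{G}\mathcal{R}(Cl_{r+1,s})\cong K_{r-s}(\mathbb{R})$. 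The bridge between these two is exactly \eqref{switchcliffordindices}: the assignments $\mathsf{e}'_j=\Gamma \mathsf{f}_j$ and $\mathsf{f}'_i=\Gamma \mathsf{e}_i$ (together with $\mathsf{f}'_{r+1}=\Gamma \mathsf{e}_{r+1}$ when an extra generator is present) exchange the roles of positive and negative generators and intertwine the two $i^*$ maps. Thus I would apply \eqref{switchcliffordindices} to get
\[
\mathcal{S}\mathcal{R}(Cl_{r,s})\cong \mathcal{G}\mathcal{R}(Cl_{s,r})/i^*\mathcal{G}\mathcal{R}(Cl_{s+1,r}),
\]
and then invoke Atiyah--Bott--Shapiro with indices $(s,r)$ in place of $(r,s)$, yielding $K_{s-r}(\mathbb{R})\cong KO^{r-s}(\star)$, which establishes \eqref{cliffordsuperrepresentationkgroupreal}.

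For the complex case, the index-switch step is unnecessary because there is no sign distinction among generators of $\mathbb{C}l_n$: one combines the complex half of Lemma \ref{lemma:cliffordsuperrepresentationisomorphism}, $\mathcal{S}\mathcal{R}(\mathbb{C}l_n)\cong \mathcal{G}\mathcal{R}(\mathbb{C}l_n)/i^*\mathcal{G}\mathcal{R}(\mathbb{C}l_{n+1})$, directly with the complex Atiyah--Bott--Shapiro identification of this quotient with $K_n(\mathbb{C})\cong K^{-n}(\star)$, giving \eqref{cliffordsuperrepresentationkgroupcomplex} at once.

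The main obstacle, such as it is, is purely bookkeeping: one must check that the map in \eqref{switchcliffordindices} descends cleanly to the quotients, i.e.\ that it sends graded $Cl_{r,s+1}$-modules to graded $Cl_{s+1,r}$-modules and commutes with the inclusions $i$. This is already claimed in the discussion preceding \eqref{switchcliffordindices}, so no further work is needed beyond writing the chain of isomorphisms carefully and recording the degree conventions, the latter being where it is easy to slip a sign (see Remark \ref{leftversusrightremark}, which notes that the opposite parity-reversal convention would instead pair $\mathcal{S}\mathcal{R}(Cl_{r,s})$ with $KO^{s-r}(\star)$).
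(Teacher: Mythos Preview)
Your proposal is correct and follows exactly the paper's approach: the paper simply states that the Atiyah--Bott--Shapiro isomorphisms, together with Lemma~\ref{lemma:cliffordsuperrepresentationisomorphism} and \eqref{switchcliffordindices}, yield the corollary. You have spelled out the chain of isomorphisms more explicitly than the paper does, including the reason the index-switch \eqref{switchcliffordindices} is needed in the real case but not the complex one, but the logic is identical.
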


\end{example}

\begin{remark}\label{remark:trivialityinvolutionvsantiinvolution}
If $\mathcal{A}$ is complex, it does not matter whether the $\mathcal{I}$ is defined to be an involution $\mathcal{I}^2=+1$ or an anti-involution $\mathcal{I}^2=-1$, since multiplication by $\im$ turns one into the other. This does \emph{not} work if $\mathcal{A}$ is real, since $\im$ is unavailable: if we had asked for an anti-involution (thus an extra \emph{negative} Clifford generator) in Definition \ref{definition:trivialgradedmodule}, we would have arrived at $\mathcal{S}\mathcal{R}(Cl_{r,s})\cong K_{r-s}(\mathbb{R})$ instead. Similarly, if we had required $\mathcal{I}$ to commute (rather than graded commute) with $Cl_{r,s}$, then $\mathcal{I}\Gamma$ becomes an anti-involution which graded commutes with $Cl_{r,s}$, and we would obtain a similar same reversal in the $K$-theory degree.
\end{remark}

\begin{remark}
One can also define higher super-representation groups $\mathcal{S}\mathcal{R}^{-n}(\mathcal{A})\coloneqq \mathcal{S}\mathcal{R}(\mathcal{A}\hat{\otimes}\mathcal{D}_n^\mathbb{F})$, where $\mathcal{D}_n^\mathbb{F}$ are central super-division algebras over $\mathbb{F}$ representing the super-Brauer group \cite{donovan1970graded} $\mathrm{sBr}(\mathbb{F})$ of $\mathbb{F}$. Each $\mathcal{D}_n^\mathbb{F}$ is Morita equivalent (in the super-sense) to a Clifford algebra. Super-Brauer multiplication is compatible with triviality as defined in Definition \ref{definition:trivialgradedmodule}, and $\mathcal{S}\mathcal{R}_\mathrm{total}(\mathcal{A})\coloneqq\bigoplus_n \mathcal{S}\mathcal{R}^{-n}(\mathcal{A})$ can be given the structure of a $\mathrm{sBr}(\mathbb{F})$-graded module over $\mathcal{S}\mathcal{R}_\mathrm{total}(\mathbb{F})$. For $\mathcal{A}=\mathbb{F}$, this recovers the usual real or complex $K$-theory ring of a point. Details can be found in \cite{thiang2014thesis}.
\end{remark}

\subsection{Some notions of topological triviality}\label{quotientingtopologicaltriviality}
It was suggested by Kitaev \cite{kitaev2009periodic}, in the context of band insulators, that we should consider two gapped phases as being ``equivalent'', if they become equivalent (in some pre-defined ordinary sense) upon adding some ``trivial'' bands. He argued that one can generally augment a given system by ``\ldots a set of local, disjoint modes, like inner atomic shells. This corresponds to adding an extra flat band on an insulator.'' This leads to stable equivalence as an equivalence relation for gapped phases modelled on vector bundles.

He also suggested that an admissible system $(X)$ is ``\ldots effectively cancelled by its particle-hole conjugate $(-X)$, resulting in a trivial system''. In Definition \ref{definition:trivialgradedmodule}, we introduced a notion of trivial graded modules for $\mathcal{A}$. In particular, the sum of a graded module with its particle-hole conjugate (left parity reversal) becomes trivial in $\mathcal{S}\mathcal{R}(\mathcal{A})$. Thus, Definition \ref{definition:superrepresentationgroup} formalises algebraically this second (distinct) notion of triviality. We note that Freed--Moore \cite{freed2013twisted} defined ``reduced topological phases'' via a related notion of trivial representations, which only requires the existence of an odd operator $\mathcal{I}$ that graded commutes with a graded group action. They do not impose an involution or anti-involution condition on $\mathcal{I}$. As discussed in Remark \ref{remark:trivialityinvolutionvsantiinvolution}, choosing one or the other yields $K$-theory indices running in \emph{opposite} directions.

Another interpretation of the $K$-theoretic invariants of $\mathcal{A}$, hinted at in \cite{kitaev2009periodic}, is in terms of \emph{differences} of symmetry-compatible phases. Unfortunately, this point of view seems to have been ignored subsequently, although it actually provides the most powerful and consistent way of understanding the role of $K$-theory in the study of gapped phases. This is the subject of the next section.

\section{The $K$-theoretic difference-group of symmetry-compatible gapped Hamiltonians}\label{section:karoubidifferenceconstruction}
Standard presentations of $K$-theory in terms of Grothendieck completions and suspension constructions (e.g., in \cite{lawson1989spin,blackadar1998k,wegge1993k}) do not directly relate to the study of gapped phases, because the latter entails studying \emph{graded} symmetry algebras whenever charge-conjugating symmetries are present. We have already seen this in Section \ref{section:cliffordtenfoldway}, when studying representations of $CT$-subgroups. Super-algebra is actually a unifying idea in $K$-theory: we say this in the relationship between super-representation groups of Clifford algebras and $K$-theory groups of a point (Corollary \ref{corollary:cliffordsuperrepresentationktheory}). Thus, we will instead utilise a version of $K$-theory introduced by Karoubi \cite{karoubi1978k}, which remains well-defined for graded $C^*$-algebras, and is consistent with the Grothendieck group-type definitions for ungraded $C^*$-algebras. The central object is the $K$-theoretic \emph{difference-group} $\mathbf{K}_0(\mathcal{A})$ (Definition \ref{definition:differencegroup}), which has good properties with natural physical interpretations.

Recall that the grading operator $\Gamma$ of a graded f.g.p.\ module $W$ for the graded algebra $\mathcal{A}=\mathcal{B}\rtimes_{(\alpha,\sigma)}G$ can be interpreted as a spectrally-flattened gapped Hamiltonian compatible with the symmetry data $(G,c,\mathcal{B},\alpha,\sigma)$. Strictly speaking, this interpretation requires $W$ to be a graded Hilbert space with a graded $*$-representation of $\mathcal{A}$. Although $\mathbf{K}_0(\mathcal{A})$ will be defined in terms of graded f.g.p.\ modules for $\mathcal{A}$, we may regard the latter as graded Hilbert $\mathcal{A}$-modules\footnote{A \emph{Hilbert $C^*$-module over $\mathcal{A}$} \cite{wegge1993k,blackadar1998k,lance1995hilbert}, is an $\mathcal{A}$-module with an $\mathcal{A}$-valued ``inner product'' $\langle\cdot,\cdot\rangle$, whose associated norm $\normlll{x}=\norm{\langle x,x\rangle}^{1/2}$ is complete. A f.g.p.\ $\mathcal{A}$-module can be endowed with the structure of a Hilbert $\mathcal{A}$-module, see Theorem 15.4.2 in \cite{wegge1993k}.}, on which there is a notion of self-adjointness for the adjointable (and bounded) operators $\mathscr{B}(W)$. In fact, $\mathscr{B}(W)$, along with the grading operator $\Gamma$, form a (evenly) graded $C^*$-algebra under the operator norm. The self-adjoint unitary grading operator for $\mathscr{B}(W)$ is $\Gamma$, and it makes sense to talk about continuous functions of $\Gamma$ and their homotopies.

\begin{example}[Noncommutative Bloch theory \cite{gruber2001noncommutative}] When $\mathcal{A}=\mathbb{C}\rtimes\mathbb{Z}^d\cong C(\mathbb{T}^d)$, a graded f.g.p.\ $C(\mathbb{T}^d)$-module is, as a Hilbert $C(\mathbb{T}^d)$-module, the set of continuous sections of some graded Hermitian vector bundle over $\mathbb{T}^d$, on which there is a continuous (i.e., $C(\mathbb{T}^d)$-valued) fibre-wise inner product. The restriction of the grading operator to a fibre can be viewed as the flattened version of a gapped Bloch Hamiltonian on that fibre. The positively-graded sub-bundle is the conduction band, while the negatively-graded sub-bundle is the valence band. The usual Bloch--Floquet picture of a direct integral decomposition of $L^2(\mathbb{R}^d)$ over the character space $\mathbb{T}^d$, can be recovered by passing to the GNS representation induced by a faithful trace on $C(\mathbb{T}^d)$ (e.g., integration over the Haar measure on $\mathbb{T}^d$). For a general noncommutative graded algebra $\mathcal{A}$, we interpret the grading operator $\Gamma$ on a f.g.p. graded $\mathcal{A}$-module $W$ as a spectrally-flattened gapped Hamiltonian on the noncommutative graded ``vector bundle'' corresponding to $W$.
\end{example}

Given an ungraded $\mathcal{A}$-module $W$, we can consider the set $\mathrm{Grad}_\mathcal{A}(W)$ of possible grading operators on $W$ turning it into a graded $\mathcal{A}$-module. There is a standard Banach space structure on $W$ (either from its Hilbert $\mathcal{A}$-module structure or induced from the free module $\mathcal{A}^n$ which it is a direct summand of), which determines a norm topology on the bounded linear maps $W\rightarrow W$. Thus, there is an induced topology on $\mathrm{Grad}_\mathcal{A}(W)\subset\mathrm{End}_\mathcal{A}(W)\subset\mathrm{End}(W)$ (e.g.\ see I.6.22 of \cite{karoubi1978k}, 11.2 of \cite{blackadar1998k}, or Chapter 15 of \cite{wegge1993k}). We can then talk about the homotopy classes of symmetry-compatible flattened Hamiltonians on $W$: 
\begin{definition}[Symmetry compatible gapped Hamiltonians]\label{definition:symmetrycompatiblehamiltonianshomotopy}
Let $(G,c,\mathcal{B},\alpha,\sigma)$ be a graded twisted $C^*$-dynamical system, and suppose $\mathcal{A}=\mathcal{B}\rtimes_{(\alpha,\sigma)}G$ is unital. Let $W$ be an ungraded f.g.p.\ $\mathcal{A}$-module. We call $\mathrm{Grad}_\mathcal{A}(W)$ the set of \emph{$(G,c,\mathcal{B},\alpha,\sigma)$-compatible}, or \emph{$\mathcal{A}$-compatible}, or simply \emph{symmetry-compatible (flattened) gapped Hamiltonians on $W$}. Two grading operators $\Gamma_1, \Gamma_2\in\mathrm{Grad}_\mathcal{A}(W)$ are said to be \emph{homotopic} if there is a norm-continuous path between $\Gamma_1$ and $\Gamma_2$ within $\mathrm{Grad}_\mathcal{A}(W)$; in this case, we write $\Gamma_1\sim_h\Gamma_2$.
\end{definition}

Recall that $\mathrm{End}_\mathcal{A}(W)$ can be given the structure of a $C^*$-algebra. We may then assume that the Hamiltonians $\Gamma_i$ are self-adjoint and unitary, and that a homotopy between $\Gamma_1$ and $\Gamma_2$ takes place within such self-adjoint grading operators (see 4.6 of \cite{blackadar1998k}). Intuitively, $\Gamma_1\sim_h\Gamma_2$ means that the two Hamiltonians can be continuously deformed into one another, while respecting the symmetries encoded by the algebra $\mathcal{A}$, and maintaining the gapped condition.

The set of homotopy classes of symmetry-compatible gapped Hamiltonians on $W$ (i.e.\ $\pi_0(\mathrm{Grad}_\mathcal{A}(W))$) is of some interest in the literature \cite{kitaev2009periodic,stone2011symmetries,ryu2010topological}, although the explicit reference to $W$ is usually not made. However, these sets do not have any additional structure, much less that of an abelian group, and are difficult to compute. There is a commonly proposed ``solution'': for some special $\mathcal{A}$ and families of modules $W^{(N)}$, the sets $\mathrm{Grad}_\mathcal{A}(W^{(N)})$ from a family of symmetric spaces. Their ``large-$N$'' limit is identified with a loop space of the stable orthogonal or unitary group. These loop spaces are classifying spaces $C_n$ and $R_n$ for (topological) $K$-theory and $KO$-theory respectively. One might be tempted to identify the $K$-theory group $K^{-n}(\star)\cong\pi_0(C_n)$ as the set of ``stable'' homotopy classes of symmetry-compatible Hamiltonians, but such a connection has not been made precise. Furthermore, the correct theory to use in the presence of antiunitary symmetries is Atiyah's $KR$-theory \cite{atiyah1966k}, which differs from $KO$-theory when $X$ is not a point.

In a modified version of Karoubi's $K$-theory, the elements of the $K$-theory of $\mathcal{A}$ represent \emph{differences} or \emph{obstructions} between $\mathcal{A}$-compatible Hamiltonians.
\begin{definition}[Trivial differences between Hamiltonians]\label{definition:trivialtriple}
Let $W$ be a graded f.g.p.\ module for a graded unital $C^*$-algebra $\mathcal{A}$, and let $\Gamma_1,\Gamma_2\in\mathrm{Grad}_\mathcal{A}(W)$ be a pair of compatible gapped Hamiltonians. We call $(W,\Gamma_1,\Gamma_2)$ a \emph{trivial triple} if $\Gamma_1\sim_h\Gamma_2$ in $\mathrm{Grad}_\mathcal{A}(W)$.
\end{definition}
A triple $(W,\Gamma_1,\Gamma_2)$ represents the (ordered) difference between two $\mathcal{A}$-compatible gapped Hamiltonians on $W$, and we do not distinguish between two Hamiltonians which can be continuously deformed into one another. We want to be able to consider all graded modules concurrently, and to combine two or more systems with the same symmetries. The direct sum operation gives a natural commutative monoid structure to the collection $\mathrm{Grad}_\mathcal{A}$ of all triples, where some obvious identifications have been made to ensure commutativity and associativity. The set of trivial triples forms a submonoid $\mathrm{Grad}_\mathcal{A}^t$. 
\begin{definition}[Difference-group of Hamiltonians]\label{definition:differencegroup}
Let $\mathbf{K}_0(\mathcal{A})$ be the quotient monoid of $\mathrm{Grad}_\mathcal{A}$ by the congruence generated by $\mathrm{Grad}_\mathcal{A}^t$, i.e., $[W,\Gamma_1,\Gamma_2]=[W',\Gamma'_1,\Gamma'_2]$ in $\mathbf{K}_0(\mathcal{A})$ iff there are trivial triples $(F,\zeta_1,\zeta_2)$ and $(F',\zeta'_1,\zeta'_2)$ such that $(W\oplus F,\Gamma_1\oplus\zeta_1,\Gamma_2\oplus\zeta_2)=(W'\oplus F',\Gamma'_1\oplus\zeta'_1,\Gamma'_2\oplus\zeta'_2)$ in $\mathrm{Grad}_\mathcal{A}$. We call $\mathbf{K}_0(\mathcal{A})$ the \emph{difference-group of $\mathcal{A}$-compatible gapped Hamiltonians}.
\end{definition}
\begin{proposition}
$\mathbf{K}_0(\mathcal{A})$ is an abelian group, with $[W,\Gamma_1,\Gamma_2]=-[W,\Gamma_2,\Gamma_1]$. Furthermore, two isomorphic triples (in the natural sense) define the same class in $\mathbf{K}_0(\mathcal{A})$.
\end{proposition}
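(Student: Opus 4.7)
The plan is to handle the three assertions in turn, with the workhorse being the standard ``rotation trick'' for unitaries on direct sums. Commutativity and associativity of $\oplus$ hold on $\mathrm{Grad}_\mathcal{A}$ up to canonical module isomorphism, and descend to the quotient since the relation generated by $\mathrm{Grad}_\mathcal{A}^t$ is by definition a congruence; the class of the zero triple $(0,0,0)$ provides the identity, and any trivial triple $(W,\Gamma_1,\Gamma_2)$ gets identified with it by using itself as the congruence witness on one side of the congruence relation.

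For the inverse, I will show directly that $(W\oplus W,\Gamma_1\oplus\Gamma_2,\Gamma_2\oplus\Gamma_1)$ is a trivial triple, by exhibiting a norm-continuous path in $\mathrm{Grad}_\mathcal{A}(W\oplus W)$ from $\Gamma_1\oplus\Gamma_2$ to $\Gamma_2\oplus\Gamma_1$. Take the rotation
\[
R(t)=\begin{pmatrix}\cos(t\pi/2)\cdot 1_W & -\sin(t\pi/2)\cdot 1_W \\ \sin(t\pi/2)\cdot 1_W & \cos(t\pi/2)\cdot 1_W\end{pmatrix},\qquad t\in[0,1].
\]
Each $R(t)$ is a unitary $\mathcal{A}$-module automorphism of $W\oplus W$ whose entries are scalar multiples of $1_W$, so $R(t)$ commutes with the $\mathcal{A}$-action on both even and odd elements. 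A direct matrix computation gives $R(1)^{*}(\Gamma_1\oplus\Gamma_2)R(1)=\Gamma_2\oplus\Gamma_1$, and the continuous family $R(t)^{*}(\Gamma_1\oplus\Gamma_2)R(t)$ supplies the required path, establishing $[W,\Gamma_1,\Gamma_2]+[W,\Gamma_2,\Gamma_1]=0$.

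For the isomorphism claim, take the natural notion: $\phi:W\to W'$ is an $\mathcal{A}$-linear Hilbert-module isomorphism with $\phi\Gamma_i=\Gamma_i'\phi$ for $i=1,2$. Twist the previous rotation by $\phi$, setting
\[
U(t)=\begin{pmatrix}\cos(t\pi/2)\cdot 1_W & -\sin(t\pi/2)\,\phi^{-1} \\ \sin(t\pi/2)\,\phi & \cos(t\pi/2)\cdot 1_{W'}\end{pmatrix}.
\]
Each $U(t)$ is again unitary and $\mathcal{A}$-linear, and commutes with the $\mathcal{A}$-action thanks to the $\mathcal{A}$-linearity of $\phi$; the intertwining properties yield $U(1)^{*}(\Gamma_1\oplus\Gamma_2')U(1)=\Gamma_2\oplus\Gamma_1'$, so $(W\oplus W',\Gamma_1\oplus\Gamma_2',\Gamma_2\oplus\Gamma_1')$ is trivial. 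Combined with the inverse claim just proved, this gives $[W,\Gamma_1,\Gamma_2]=-[W',\Gamma_2',\Gamma_1']=[W',\Gamma_1',\Gamma_2']$.

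The main point to check---really the only non-routine one---is that each conjugated operator along these paths remains in $\mathrm{Grad}_\mathcal{A}$, i.e., is a self-adjoint involution which graded-commutes with $\mathcal{A}$. Self-adjointness and squaring to the identity are automatic because conjugation is by unitaries; the subtlety is preservation of the graded commutation relation. This is handled uniformly by the observation that the conjugating unitaries have entries which are scalar multiples of $1$ or of the $\mathcal{A}$-linear map $\phi$, hence commute with every element of $\mathcal{A}$ regardless of its parity, so the graded commutation relations satisfied by the $\Gamma_i$ are transported along unchanged.
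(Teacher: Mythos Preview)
Your proof is correct and follows essentially the same approach as the paper: the same rotation homotopy on $W\oplus W$ for inverses, and the same $\phi$-twisted rotation on $W\oplus W'$ for the isomorphism claim, concluding via $[W,\Gamma_1,\Gamma_2]+[W',\Gamma_2',\Gamma_1']=0$. Your explicit verification that the conjugating unitaries are $\mathcal{A}$-linear (hence preserve the graded commutation relation along the path) is a welcome detail that the paper leaves implicit.
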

\begin{proof}
$\Gamma_1\oplus\Gamma_2\sim_h \Gamma_2\oplus\Gamma_1$ in $\mathrm{Grad}_\mathcal{A}(W\oplus W)$ via the homotopy
\begin{equation}
	\Gamma(\theta)=\begin{pmatrix}
	\cos \theta & -\sin \theta \\ \sin \theta & \cos \theta
	\end{pmatrix}
	\begin{pmatrix}
	\Gamma_1 & 0 \\ 0 & \Gamma_2
	\end{pmatrix}
	\begin{pmatrix}
	\cos \theta & \sin \theta \\ -\sin \theta & \cos \theta
	\end{pmatrix},\quad \theta\in[0,\frac{\pi}{2}],\label{homotopyexample1}
\end{equation}
so $(W\oplus W,\Gamma_1\oplus\Gamma_2,\Gamma_2\oplus\Gamma_1)$ is trivial, and we can write $[W,\Gamma_2,\Gamma_1]=-[W,\Gamma_1,\Gamma_2]$ in $\mathbf{K}_0(\mathcal{A})$. For isomorphic triples $(W,\Gamma_1,\Gamma_2)$ and $(W',\Gamma'_1,\Gamma'_2)$, let $\alpha:W\rightarrow W'$ be the isomorphism of ungraded $\mathcal{A}$-modules, such that $\Gamma'_i=\alpha\Gamma_i\alpha^{-1}, \,i=1,2$. Then $\Gamma_2\oplus\Gamma'_1\sim_h \Gamma_1\oplus\Gamma'_2$ in $\mathrm{Grad}_\mathcal{A}(W\oplus W')$ via the homotopy
\begin{equation}
		\Gamma(\theta)=\begin{pmatrix}
	\cos \theta & -\alpha^{-1}\sin \theta \\ \alpha\sin \theta & \cos \theta
	\end{pmatrix}
	\begin{pmatrix}
	\Gamma_1 & 0 \\ 0 & \Gamma'_2
	\end{pmatrix}
	\begin{pmatrix}
	\cos \theta & \alpha^{-1}\sin \theta \\ -\alpha\sin \theta & \cos \theta
	\end{pmatrix},\quad \theta\in[0,\frac{\pi}{2}].\nonumber
\end{equation}
Therefore, $0=[W\oplus W',\Gamma_1\oplus\Gamma'_2,\Gamma_2\oplus\Gamma'_1]=[W,\Gamma_1,\Gamma_2]-[W',\Gamma'_1,\Gamma'_2]$. Note that $\mathcal{A}$ acts diagonally, so it is easy to see that it graded commutes with $\Gamma(\theta)$ in both cases. \qed
\end{proof}
\begin{proposition}[Path-independence and homotopy-independence of differences]
The equation \mbox{$[W,\Gamma_1,\Gamma_2]+[W,\Gamma_2,\Gamma_3]=[W,\Gamma_1,\Gamma_3]$} holds in $\mathbf{K}_0(\mathcal{A})$. Furthermore, $[W,\Gamma_1,\Gamma_2]$ depends only on the homotopy class of $\Gamma_i$ in $\mathrm{Grad}_\mathcal{A}(W)$.
\end{proposition}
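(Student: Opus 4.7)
The plan is to establish the additivity identity first, from which the homotopy-independence follows in one line. For additivity, unpacking Definition \ref{definition:differencegroup} gives
\[
[W,\Gamma_1,\Gamma_2]+[W,\Gamma_2,\Gamma_3] = [W\oplus W,\,\Gamma_1\oplus\Gamma_2,\,\Gamma_2\oplus\Gamma_3],
\]
and using the identity $-[W,\alpha,\beta]=[W,\beta,\alpha]$ from the preceding proposition, the target equation is equivalent to
\[
[W^{\oplus 3},\,\Gamma_1\oplus\Gamma_2\oplus\Gamma_3,\,\Gamma_2\oplus\Gamma_3\oplus\Gamma_1]=0\quad\text{in}\;\mathbf{K}_0(\mathcal{A}).
\]
This holds precisely when the displayed triple is trivial, i.e.\ when $\Gamma_1\oplus\Gamma_2\oplus\Gamma_3\sim_h\Gamma_2\oplus\Gamma_3\oplus\Gamma_1$ in $\mathrm{Grad}_\mathcal{A}(W^{\oplus 3})$. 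I would produce such a homotopy by choosing any continuous path $P_\theta$, $\theta\in[0,1]$, in $SO(3)$ from the identity to the $3$-cycle permutation matrix, and letting it act block-wise on $W^{\oplus 3}$. Because $P_\theta$ has real scalar entries, it commutes with the diagonal $\mathcal{A}$-action; consequently $\Gamma(\theta):=P_\theta\,(\Gamma_1\oplus\Gamma_2\oplus\Gamma_3)\,P_\theta^{-1}$ is a norm-continuous path of grading operators in $\mathrm{Grad}_\mathcal{A}(W^{\oplus 3})$ connecting the two desired endpoints. This is simply the three-summand analogue of the rotation \eqref{homotopyexample1} already used in the previous proposition.

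For homotopy-independence, suppose $\Gamma_1\sim_h\Gamma_1'$ in $\mathrm{Grad}_\mathcal{A}(W)$. Then $(W,\Gamma_1,\Gamma_1')$ is a trivial triple by Definition \ref{definition:trivialtriple}, so it represents $0\in\mathbf{K}_0(\mathcal{A})$. By the additivity just established,
\[
[W,\Gamma_1,\Gamma_2]=[W,\Gamma_1,\Gamma_1']+[W,\Gamma_1',\Gamma_2]=[W,\Gamma_1',\Gamma_2],
\]
and the analogous argument in the second slot handles homotopies of $\Gamma_2$.

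The main, rather mild, obstacle is only the explicit construction of the cyclic block-rotation homotopy. It is essentially routine: $SO(3)$ is path-connected, every $P_\theta$ has scalar entries (hence is automatically $\mathcal{A}$-linear and even), and conjugation preserves self-adjointness and the property of squaring to $1$, so the path stays inside $\mathrm{Grad}_\mathcal{A}$. No input beyond the ingredients already used for the $2$-summand swap is required.
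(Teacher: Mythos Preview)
Your proof is correct and follows essentially the same approach as the paper: reduce additivity to the triviality of the triple $(W^{\oplus 3},\Gamma_1\oplus\Gamma_2\oplus\Gamma_3,\Gamma_2\oplus\Gamma_3\oplus\Gamma_1)$ via a path in $SO(3)$ to the cyclic permutation matrix, then derive homotopy-independence from additivity together with the fact that $(W,\Gamma_i,\Gamma_i')$ is trivial when $\Gamma_i\sim_h\Gamma_i'$. Your write-up is in fact slightly more explicit than the paper's in justifying why the conjugated path remains in $\mathrm{Grad}_\mathcal{A}(W^{\oplus 3})$.
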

\begin{proof}
We need to show that $[W\oplus W\oplus W, \Gamma_1\oplus\Gamma_2\oplus\Gamma_3,\Gamma_2\oplus\Gamma_3\oplus\Gamma_1]$ is trivial. Since $\Gamma_2\oplus\Gamma_3\oplus\Gamma_1$ can be obtained from $\Gamma_1\oplus\Gamma_2\oplus\Gamma_3$ by conjugation with the permutation matrix 
\begin{equation}
	\begin{pmatrix}
	0 & 1 & 0 \\ 0 & 0 & 1 \\ 1 & 0 & 0
	\end{pmatrix}\in \mathrm{SO}(3),\nonumber
\end{equation}
and $\mathrm{SO}(3)$ is path-connected, it follows that $\Gamma_2\oplus\Gamma_3\oplus\Gamma_1\sim_h \Gamma_1\oplus\Gamma_2\oplus\Gamma_3$ in $\mathrm{Grad}_\mathcal{A}(W\oplus W\oplus W)$. If $\Gamma_i'\sim_h\Gamma_i,\,i=1,2$, then $(W,\Gamma'_1,\Gamma_1)$ and $(W,\Gamma_2,\Gamma'_2)$ are trivial, so $[W,\Gamma'_1,\Gamma'_2]=[W,\Gamma'_1,\Gamma_1]+[W,\Gamma_1,\Gamma_2]+[W,\Gamma_2,\Gamma'_2]=[W,\Gamma_1,\Gamma_2]$.\qed
\end{proof}

For non-unital $\mathcal{A}$, we define $\mathbf{K}_0(\mathcal{A})$ to be the kernel of the homomorphism $\mathbf{K}_0(\mathcal{A}^+)\rightarrow\mathbf{K}_0(\mathbb{F})\cong\mathbb{Z}$ induced by the projection $p:\mathcal{A}^+\rightarrow \mathbb{F}$. Here, a triple $(W,\Gamma_1,\Gamma_2)$ gets mapped to a triple $(p_*(W),p_*(\Gamma_1),p_*(\Gamma_2))$, where $p_*(W)=\mathbb{F}\hat{\otimes}_{\mathcal{A}^+}W$, $p_*(\Gamma_i)=1\hat{\otimes}\Gamma_i$, and $\Gamma_1\sim_h\Gamma_2$ implies $p_*(\Gamma_1)\sim_h p_*(\Gamma_2)$ as is required for consistency.

\begin{remark}\label{remark:subdifferencegroup}
The classes in the difference-group which can be written as $[W,\Gamma,-\Gamma]$ form a subgroup of $\mathbf{K}_0(\mathcal{A})$, and can be ``represented'' by the single graded $\mathcal{A}$-module $(W,\Gamma)$. The right parity reversal $W^\Pi=(W,-\Gamma)$ ``represents'' the class $[W,-\Gamma,\Gamma]$, which ``cancels'' $(W,\Gamma)$ in the sense that $[W,\Gamma,-\Gamma]=-[W,-\Gamma,\Gamma]$. The left parity reversal $({}^\Pi W,-\Gamma)$ ``represents'' $[{}^\Pi W,-\Gamma,\Gamma]$, which also ``cancels'' $(W,\Gamma)$ using, again, the homotopy \eqref{homotopyexample1}. Of course, if $W$ is homotopic to $W^\Pi$, i.e., $\Gamma\sim_h-\Gamma$, then $[W,\Gamma,-\Gamma]=0$. This occurs, for example, if $(W,\Gamma)$ admits a trivialising operator $\mathcal{I}$ in the sense of Definition \ref{definition:trivialgradedmodule}. Then $\Gamma(\theta)=(\cos\theta) \Gamma+(\sin\theta)\mathcal{I},\,\theta\in[0,\pi]$ provides a homotopy between $\Gamma$ and $-\Gamma$.
\end{remark}

\begin{remark}
When an arbitrary reference Hamiltonian $\Gamma_0$ on $W$ has been chosen, all other $\Gamma\in\mathrm{Grad}_\mathcal{A}(W)$ may be measured in relation to $\Gamma_0$ through the difference class $[W,\Gamma_0,\Gamma]$. Then two homotopic $\Gamma,\Gamma'\in\mathrm{Grad}_\mathcal{A}(W)$ differ from $\Gamma_0$ by the same amount: $0=[W,\Gamma,\Gamma']=[W,\Gamma_0,\Gamma']-[W,\Gamma_0,\Gamma]$. They may be said to be in the same phase relative to $\Gamma_0$. A canonical $\Gamma_0$ can sometimes (but not always) be chosen, for example, if $\mathcal{A}$ is evenly-graded (i.e.\ $\mathcal{A}$ contains a self-adjoint unitary which induces its grading), and $W$ is a free $\mathcal{A}$-module.
\end{remark}

Let us see how $\mathbf{K}_0(\cdot)$ is related to the more familiar $K$-theory groups. For purely even $\mathcal{A}$, our definition of $\mathbf{K}_0(\mathcal{A})$ is equivalent to either of Karoubi's two definitions of $K'^{0,0}(\mathcal{A})$ in III.4.15 and III.4.19 of \cite{karoubi1978k}, and $K'^{0,0}(\mathcal{A})$ is itself isomorphic to his $K^{0,0}(\mathcal{A})$ as defined in III.4.11 and II.2.13 of the same reference. Both $K'^{0,0}(\mathcal{A})$ (and $K^{0,0}(\mathcal{A})$) were shown to be isomorphic to the ordinary $K$-theory group $K_0(\mathcal{A})$ (Theorem III.4.12 of \cite{karoubi1978k}). In particular, a virtual module $[W_0\ominus W_1]\in K_0(\mathcal{A})$ corresponds to the element $[W_0\oplus W_1,1\oplus-1,-1\oplus 1]$ in $\mathbf{K}_0(\mathcal{A})$. Since Karoubi's $K'^{0,0}(\cdot)$ and our $\mathbf{K}_0(\cdot)$ continue to make sense and coincide for graded algebras $\mathcal{A}$, we shall take the difference-group $\mathbf{K}_0(\mathcal{A})$ to be a \emph{definition} of the $K_0$-group of a graded algebra $\mathcal{A}$ \cite{karoubi2008twisted}. We denote this group using bold-faced notation $\mathbf{K}_0(\mathcal{A})$, to avoid confusion with the ordinary $K$-theory group $K_0(\mathcal{A})$ in which $\mathcal{A}$ is regarded as an ungraded algebra.

If we also define $\mathbf{K}_{s,r}(\mathcal{A})\coloneqq\mathbf{K}_0(\mathcal{A}\hat{\otimes} Cl_{r,s})$ for real graded algebras $\mathcal{A}$, we obtain an alternative definition of Karoubi's $K^{r,s}(\mathcal{A})$ (or $K'^{r,s}(\mathcal{A})$) as defined in III.4.11 of \cite{karoubi1978k}. Due to the periodicity properties of the Clifford algebras, the $\mathbf{K}_{s,r}(\mathcal{A})$ and $K'^{r,s}(\mathcal{A})$ groups depend only on $(r-s)\,\mathrm{(mod\,8)}$. Thus, the singly-indexed groups $\mathbf{K}_n(\mathcal{A})\coloneqq\mathbf{K}_0(\mathcal{A}\hat{\otimes} Cl_{0,n})\cong \mathbf{K}_0(C_0(\mathbb{R}^n,\mathcal{A}))$ have a period-$8$ Bott periodicity. In the complex case, we can similarly define $\mathbf{K}_n(\mathcal{A})\coloneqq \mathbf{K}_0(\mathcal{A}\hat{\otimes}\mathbb{C}l_n)$, which only depend on $n\,\mathrm{(mod\,2)}$. Karoubi went on to prove the difficult result that the ``suspension'' operation $\mathcal{A}\mapsto\mathcal{A}\hat{\otimes}Cl_{0,1}$ (or $\mathcal{A}\mapsto\mathcal{A}\hat{\otimes}\mathbb{C}l_1$ in the complex case) is $K$-theoretically compatible with the usual notion of suspension, in the sense that $K'^{0,1}(\mathcal{A})\equiv K'^{0,0}(\mathcal{A}\hat{\otimes}Cl_{0,1})\cong K'^{0,0}(C_0(\mathbb{R},\mathcal{A}))=\mathbf{K}_0(C_0(\mathbb{R},\mathcal{A}))$ \cite{karoubi1968algebres,karoubi2008twisted}. Thus, for purely even real algebras $\mathcal{A}^\mathrm{ev}$,
\begin{equation}
\mathbf{K}_n(\mathcal{A}^\mathrm{ev})=\mathbf{K}_0(\mathcal{A}^\mathrm{ev}\hat{\otimes}Cl_{0,n})\cong K_0(C_0(\mathbb{R}^n,\mathcal{A}^\mathrm{ev}))\equiv K_n(\mathcal{A}^\mathrm{ev}),
\end{equation}
and similarly for the complex case.

\section{Computing $\mathcal{S}\mathcal{R}(\mathcal{A})$ and $\mathbf{K}_0(\mathcal{A})$ by decomposing $\mathcal{A}$}\label{section:decompositioncrossedproducts}
We state a very useful decomposition theorem for twisted crossed products, which facilitates the computation of some super-representation groups and difference-groups arising from physical examples in condensed matter applications. In some special cases, a description of these groups in terms of topological $K$-theory is possible, but this is not generic.

\begin{theorem}[Packer--Raeburn decomposition theorem \cite{packer1989twisted}]\label{theorem:packerraeburndecomposition}
Let $(G,c,\mathcal{A},\alpha,\sigma)$ be a graded twisted $C^*$-dynamical system, and let $N$ be a closed normal subgroup of $G$ in the kernel of $c$. There is an isomorphism of graded $C^*$-algebras
\begin{equation}
\mathcal{A}\rtimes_{(\alpha,\sigma)}G\cong (\mathcal{A}\rtimes_{(\alpha,\sigma)}N)\rtimes_{(\beta,\nu)}G/N,\label{twisteddecompositionformula}
\end{equation}
where the twisting pair $(\beta,\nu)$ is determined by a choice of Borel section $s:G/N\ni p\mapsto s_p\in G$ such that $s_{eN}=1$. For each $x\in G$, there is a $\gamma_x\in\mathrm{Aut}_\mathbb{F}(\mathcal{A}\rtimes_{(\alpha,\sigma)}N)$ such that
\begin{subequations}
\begin{alignat}{3}
	\gamma_x(a)&=\alpha_x(a)\equiv a^x, & a\in\mathcal{A},\label{modifiedautomorphism}\\
	\gamma_x(n)&=\sigma(x,n)\sigma(xnx^{-1},x)^{-1}xnx^{-1}, &\qquad n\in N,\label{modifiedembedding}
\end{alignat}
\end{subequations}
where the canonical embeddings $j_\mathcal{A},j_N$ are implied. The formulae for $(\beta,\nu)$ are, for $p,q\in G/N$,
\begin{subequations}
\begin{eqnarray}
	\beta_p\coloneqq\beta(p)&=&\gamma_{s_p},\label{decomposedtwistingpair1}\\
	\nu(p,q)&=&\sigma(s_p,s_q)\sigma(s_ps_qs_{pq}^{-1},s_{pq})^{-1}s_ps_qs_{pq}^{-1}. \label{decomposedtwistingpair2}
\end{eqnarray}
\end{subequations}
\end{theorem}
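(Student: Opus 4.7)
The plan is to prove \eqref{twisteddecompositionformula} via the universal property of Proposition \ref{prop:twisteduniversalproperty}: I will establish a natural bijection between graded covariant representations of $(G,\mathcal{A},\alpha,\sigma)$ and graded covariant representations of the iterated system $(G/N,c,\mathcal{A}\rtimes_{(\alpha,\sigma)}N,\beta,\nu)$. This splits into two tasks: (i) verify that the data \eqref{modifiedautomorphism}--\eqref{decomposedtwistingpair2} genuinely defines a graded twisted $C^*$-dynamical system, and (ii) exhibit the bijection on covariant representations and upgrade it to a $*$-isomorphism of $C^*$-algebras.

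For (i), the guiding idea is that $\gamma_x$ is morally $\mathrm{Ad}(j_G(x))$ acting on the subalgebra $\mathcal{A}\rtimes_{(\alpha,\sigma)}N\subset\mathcal{M}(\mathcal{A}\rtimes_{(\alpha,\sigma)}G)$; normality of $N$ ensures this subalgebra is preserved under these inner automorphisms, and the explicit formulas \eqref{modifiedautomorphism}--\eqref{modifiedembedding} follow by unwinding the definitions of $j_\mathcal{A},j_N,j_G$ together with the twisted convolution \eqref{twistedconvolutionproduct}. Since $N\subseteq\ker c$, the subalgebra $\mathcal{A}\rtimes_{(\alpha,\sigma)}N$ inherits its grading from $\mathcal{A}$ alone, each $\beta_p=\gamma_{s_p}$ is an even automorphism, and $\nu(p,q)$ is an even unitary multiplier; moreover $c$ descends to a homomorphism $G/N\to\ztwo$. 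The axioms \eqref{multidentity}, \eqref{cocycleassumption}, \eqref{automorphismassumption} for $(\beta,\nu)$ reduce, via the factorisation $s_ps_q=(s_ps_qs_{pq}^{-1})\,s_{pq}$ with $s_ps_qs_{pq}^{-1}\in N$, to the corresponding axioms for $(\alpha,\sigma)$ and the normalisation $s_{eN}=1$.

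For (ii), given a graded covariant representation $(\pi,\theta)$ of $(G,\mathcal{A},\alpha,\sigma)$ on $\hs$, I integrate $\pi$ and $\theta|_N$ to a nondegenerate graded $*$-representation $\rho:\mathcal{A}\rtimes_{(\alpha,\sigma)}N\to\mathscr{B}(\hs)$ by Proposition \ref{prop:twisteduniversalproperty}, and set $T_p\coloneqq\theta_{s_p}$. The relations \eqref{gencovrep1}--\eqref{gencovrep2} together with the decomposition $s_ps_q=(s_ps_qs_{pq}^{-1})\,s_{pq}$ and the definition of $\rho$ yield $T_pT_q=\nu(p,q)\,T_{pq}$ and $T_p\rho(\cdot)T_p^{-1}=\rho(\beta_p(\cdot))$, so $(\rho,T)$ is a graded covariant representation of the iterated system. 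Conversely, given $(\rho,T)$, one recovers $\pi$ and $\theta|_N$ from $\rho$ and defines $\theta_x\coloneqq T_{xN}\,\rho(j_N(n_x))$ using the factorisation $x=n_xs_{xN}$ with $n_x\in N$; these assignments are mutually inverse on equivalence classes and respect the grading. The induced $*$-isomorphism between the integrated forms then extends by continuity to the $C^*$-completions.

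The main obstacle is the cocycle identity \eqref{multidentity2} for $\nu$: one expands both $\nu(p,q)\nu(pq,r)$ and $\beta_p(\nu(q,r))\nu(p,qr)$ using \eqref{decomposedtwistingpair2} and reduces to the cocycle identity for $\sigma$ by repeatedly absorbing the $N$-valued \emph{obstructions} $s_ps_qs_{pq}^{-1}$; this is mechanical but lengthy and has to be done with care about the order of factors in the noncommutative case. A secondary subtlety is that the algebraic bijection of covariant representations \emph{a priori} only matches $*$-representations on the dense $L^1$-subalgebras; extending to a $*$-isomorphism of the $C^*$-completions uses the amenability of $G$ (hence of $N$ and $G/N$), which ensures that the maximal norms agree with the reduced norms and can be computed consistently from matched covariant representations on both sides.
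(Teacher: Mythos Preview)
The paper does not actually prove this theorem: it is quoted from Packer--Raeburn \cite{packer1989twisted}, and the only argument given in the paper is the short paragraph after the statement, which says that the ungraded complex result is in \cite{packer1989twisted}, that the real/graded generalisation goes through, and that the hypothesis $N\subset\ker(c)$ is what makes $c$ descend to $G/N$ and keeps $\beta_p,\nu(p,q)$ even so the gradings on both sides match. So there is nothing substantive to compare your proposal against in this paper itself.

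That said, your sketch is essentially the standard Packer--Raeburn argument: interpret $\gamma_x$ as $\mathrm{Ad}(j_G(x))$ on the $N$-crossed product, verify the $(\beta,\nu)$ axioms, and then match covariant representations on the two sides via the universal property. Two small corrections. First, amenability is not what lets you pass to the $C^*$-completions: the bijection of covariant representations already identifies the two supremum norms defining the \emph{full} crossed products, so the isomorphism holds without any amenability assumption (Packer--Raeburn prove it in that generality). Amenability only enters if you want full and reduced crossed products to coincide, which is a separate issue. Second, in your inverse construction the formula $\theta_x\coloneqq T_{xN}\,\rho(j_N(n_x))$ is not quite right as written: with $x=n_x s_{xN}$ and $n_x\in N$, the relation $\theta_{n_x}\theta_{s_{xN}}=\sigma(n_x,s_{xN})\theta_x$ forces a cocycle correction and the opposite operator ordering, so you should have something like $\theta_x=\sigma(n_x,s_{xN})^{-1}\rho(j_N(n_x))\,T_{xN}$. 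This is easy to fix but worth getting right, since the whole point of the theorem is that these cocycle factors conspire correctly.
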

Part of the theorem says that up to isomorphism, the iterated crossed product does not depend on the choice of section $s$. Theorem \ref{theorem:packerraeburndecomposition} was proved in \cite{packer1989twisted} for ungraded complex twisted crossed products, but the generalisation to the real and/or graded cases still holds. We have required $N\subset \mathrm{ker}(c)$ to ensure that $c$ descends to the quotient group $G/N$, and that the automorphisms $\beta_p$ and cocycle $\nu(\cdot,\cdot)$ are even, independently of $s$. Then one checks that the standard grading on either side of \eqref{twisteddecompositionformula} agrees with the other.

\subsection{Finitely-generated projective modules in equivariant $K$-theory}
We make a short digression to define the notion of a f.g.p.\ $(G,\mathcal{A},\alpha)$-module $W$, following\footnote{The author worked with right modules, but we prefer to use left modules.} Chapter 11.2 of \cite{blackadar1998k}. Here, $\mathcal{A}$ is a (ungraded) unital $C^*$-algebra, and $\alpha$ is a Borel homomorphism (hence continuous) from a compact group $G$ to $\mathrm{Aut}_\mathbb{F}(\mathcal{A})$. Such modules are needed to define the $G$-equivariant $K$-theory of $\mathcal{A}$, and when $\mathcal{A}$ is commutative, they provide the link to the corresponding topological equivariant $K$-theory. 

We write $\mathscr{L}(W)$ for the set of bounded linear operators\footnote{There are a number of equivalent Banach norms determining the same topology on $W$.} on $W$, $\mathscr{G}\mathscr{L}(W)$ for the subgroup of invertible operators, and $\mathscr{B}(W)$ for the subalgebra of module maps. A finitely-generated projective $(G,\mathcal{A},\alpha)$-module is a f.g.p\ $\mathcal{A}$-module $W$, along with a strongly continuous homomorphism $\theta:G\rightarrow \mathscr{G}\mathscr{L}(W)$, such that
\begin{equation}
	\theta_x(aw)=a^x(\theta_x w),\qquad x\in G,\,a\in\mathcal{A},\,w\in W.\nonumber
\end{equation}
The equivariant $K$-theory group $K^{G}_0(\mathcal{A})$ is defined to be the Grothendieck group of the monoid (under the direct sum) of equivalence classes of f.g.p.\ $(G,\mathcal{A},\alpha)$-modules. If $\mathcal{A}$ is non-unital, $K^G_0(\mathcal{A})$ is defined to be the kernel of the the induced map $p_*:K^G_0(\mathcal{A}^+)\rightarrow K^G_0(\mathbb{F})$, where $\mathcal{A}^+$ has the induced action from $\alpha$, $\mathbb{F}$ has the trivial action, and $p$ is the equivariant projection $\mathcal{A}^+\rightarrow\mathbb{F}$. The Green--Julg theorem says that an equivariant $K_0$-group is isomorphic to the ordinary $K_0$-group of the crossed product,
\begin{equation}
	K^G_0(\mathcal{A})\cong K_0(\mathcal{A}\rtimes_{(\alpha,1)}G).\nonumber
\end{equation}
We define graded f.g.p.\ $(G,c,\mathcal{A},\alpha)$-modules $W$ in a similar way, noting that the linear operators on $W$ acquire a natural grading. The operators $\theta_x$ are required to be odd or even according to $c(x)$.

\begin{example}[Decomposing group $C^*$-algebras over an abelian normal subgroup]\label{subsection:decompositionordinarygroupalgebras}
Consider the case where $\alpha=c=\sigma\equiv 1$, $N$ is a discrete abelian group, $G/N$ is compact, and $G$ is a topological semidirect product $G=N\rtimes G/N$. The Fourier transform gives an isomorphism $\mathbb{C}\rtimes N\cong C(\hat{N})$,
where $\hat{N}$ is the Pontryagin dual of $N$. In physical applications, $N$ is a lattice of translations, $G/N$ is a compact point group which may include internal symmetries such as spin $\mathrm{SU}(2)$, and $\hat{N}$ is the Brillouin torus, over which the Bloch bands of solid-state physics reside. Since $c\equiv 1$, $\mathcal{A}=\mathbb{C}\rtimes G$ is purely even and $K_0(\mathcal{A})\cong\mathcal{S}\mathcal{R}(\mathcal{A})\cong\mathbf{K}_0(\mathcal{A})$.

The standard homomorphic section $s:p\mapsto (e,p)\in N\rtimes G/N$ satisfies $(e,p)(n,e)(e,p^{-1})=(p\cdot n,e)$, where $n\mapsto p\cdot n$ is the defining automorphic action of $p\in G/N$ on $N$. Using \eqref{modifiedembedding} and \eqref{decomposedtwistingpair1}, we find that the automorphisms $\beta_p$ act on the canonical generators $\delta_n\in\mathbb{C}\rtimes N$ by $\beta_p(\delta_n)=\delta_{p\cdot n}$. In terms of functions $f:N\rightarrow \mathbb{C}$, this is $\beta_p(f)(n)=f(p^{-1}\cdot n)$. Under the Fourier transform $f\mapsto\hat{f}\in C(\hat{N})$, the automorphism $\beta_p$ becomes $\hat{\beta}_p$, defined by $\hat{\beta}_p(\hat{f})(\chi)\coloneqq\hat{f}(p^{-1}\cdot\chi)$, where $(p\cdot\chi)(n)\coloneqq\chi(p^{-1}\cdot n)$ is the dual $G/N$-action on $\hat{N}$. Also, \eqref{decomposedtwistingpair2} gives $\nu\equiv1$, so we may rewrite $\mathbb{C}\rtimes G\equiv\mathbb{C}\rtimes (N\rtimes G/N)$ as $C(\hat{N})\rtimes_{(\hat{\beta},1)}G/N$. By the Green--Julg theorem, $K_0(\mathcal{A})\cong K_0^{G/N}(C(\hat{N}))\cong K^0_{G/N}(\hat{N})$. A f.g.p.\ $\mathcal{A}$-module corresponds to the sections of some finite-rank $G/N$-equivariant vector bundle over $\hat{N}$. 

If $G$ is furthermore a direct product, the $G/N$ action $\hat{\beta}$ on the base space $\hat{N}$ is trivial, and each fibre becomes a finite-dimensional representation space for $G/N$. On the other hand, things get complicated when $G$ is not a semidirect product of $N$ and $G/N$, as is the case when $G$ is a non-symmorphic space group. There is a non-trivial central cocycle $\nu(p,q)=s_ps_qs_{pq}^{-1}\in N$ according to \eqref{decomposedtwistingpair2}. The automorphism $\hat{\beta}_p$ is the dual of conjugation by $s_p$, i.e.,\ $\hat{\beta}_p(\hat{f})(\chi)=\hat{f}(p^{-1}\cdot\chi)$, where $(p\cdot \chi)(n)\coloneqq\chi(s_p^{-1}ns_p)$. Note that $\hat{N}$ remains a $G/N$-space, since
\begin{eqnarray}
	(p\cdot(q\cdot \chi))(n)	&=&	\chi(s_q^{-1}s_p^{-1}ns_ps_q)\nonumber\\
																&=&	\chi(s_{pq}^{-1}\nu(p,q)^{-1}n\nu(p,q)s_{pq})\nonumber\\
																&=& \chi(s_{pq}^{-1}ns_{pq})	=((pq)\cdot\chi)(n)	\nonumber
\end{eqnarray}
The isomorphism $\mathbb{C}\rtimes G\cong C(\hat{N})\rtimes_{(\hat{\beta},\nu)}G/N$ suggests the interpretation of a ``f.g.p.\ $(G/N,C(\hat{N}),\hat{\beta},\nu)$-module'' as the sections of a vector bundle $E$ over $\hat{N}$, equipped with a ``$\nu$-twisted equivariant $G/N$-action'' on the fibres. There is a family of \emph{projective} representations of $G/N$, with $p\in G/N$ mapping the fibre over an intial basepoint $\chi$ linearly to the fibre over a final basepoint $p\cdot \chi$.

When $N$ is not discrete, $\hat{N}$ is non-compact. Topological equivariant $K$-theory groups must be interpreted using vector bundles trivialised outside a compact subspace of $\hat{N}$, i.e., $K$-theory with ``compact supports''. Such a situation arises, for instance, when $N=\mathbb{R}^d$, which has a very different topological nature to $\mathbb{Z}^d$.

\end{example}

\begin{example}[Decomposing twisted group $C^*$-algebras over an abelian normal subgroup]\label{subsection:decomposingtwistedgroupalgebras}
If we allow $\sigma\not\equiv 1$ in Example \ref{subsection:decompositionordinarygroupalgebras}, then we may not have \mbox{$\mathbb{C}\rtimes_{(1,\sigma)}N\cong C_0(X)$} for any topological space $X$ since $\mathbb{C}\rtimes_{(1,\sigma)}N$ is generally a noncommutative algebra. This situation occurs when there is \emph{magnetic} translational symmetry instead of ordinary commuting translational symmetry, e.g.\ in the Integer Quantum Hall Effect \cite{bellissard1994noncommutative}. Our noncommutative approach bears fruit here, since it still makes sense to study the $K$-theory of \mbox{$\mathcal{A}=\mathbb{C}\rtimes_{(1,\sigma)}G\cong (\mathbb{C}\rtimes_{(1,\sigma)}N)\rtimes_{(\beta,\nu)}G/N$}. If we wish to, we can interpret a f.g.p.\ $\mathcal{A}$-module as the space of ``sections'' of some $\nu$-twisted $G/N$-equivariant ``bundle'' over the noncommutative space corresponding to $\mathbb{C}\rtimes_{(1,\sigma)}N$.

\end{example}

\begin{example}[A Clifford algebra factorises in the crossed product $C^*$-algebra]\label{subsection:decompositionCliffordbundles}
We now consider the symmetry data given by $(G,c,\phi,\sigma)$. We assume that $G=(N\rtimes Q)\times A$, where $A\cong\mathrm{Im}(\phi,c)\subset \ztwo^2$, $N\subset\mathrm{ker}(\phi,c)=N\rtimes Q$ is abelian, and $G/N$ is compact. We will think of $A$ as one of the subgroups of the $CT$-group, by identifying $\ztwo^2$ with $\{1,T,C,S\}$ as in Section \ref{subsection:CTgrouptenfoldway}. Where present, we denote the lifts of $C,T,S$ to $G$ by the same symbols. We also assume that $\sigma(nq,\cdot)=1=\sigma(\cdot,nq)$ for all $nq$ in $N\rtimes Q$, then $\sigma$ is simply specified by its restriction to $A$. This is the setting (with $Q$ usually assumed trivial) that is often considered in the literature when studying band structures with time-reversal and/or charge conjugation symmetry (along with its translation symmetries).

Let $\mathcal{A}=\mathbb{C}\rtimes_{(\alpha,\sigma)}G$, with $\alpha$ determined by $\phi$ as usual. We denote the images, under $j_\mathcal{A}$, of $C,T,S$ in $\mathcal{M}\mathcal{A}$ by $\mathsf{C},\mathsf{T},\mathsf{S}$. There is a non-trivial grading on $\mathcal{A}$ if $c\not\equiv 1$. The commutation relations amongst $\mathsf{C},\mathsf{T},\mathsf{S}$ are the same as those amongst the representatives of $C,T,S$ in a graded PUA-rep of $(G,c,\phi,\sigma)$. As in Section \ref{subsection:CTgrouptenfoldway}, we can choose $\sigma$ such that $\mathsf{C}\mathsf{T}=\mathsf{T}\mathsf{C}=\mathsf{S}$ as elements of the crossed product, then $\sigma$ is simply specified by $\mathsf{T}^2=\pm 1, \mathsf{C}^2=\pm 1$, while $\mathsf{S}^2=+1$ can be assumed. There are ten possibilities for $(A,\sigma)$, each with a corresponding Clifford algebra, as listed in Table \ref{table:ctcliffordalgebras}.

We first decompose $\mathcal{A}$ with respect to the subgroup $N\rtimes Q$, noting that $\nu$ reduces to $\sigma$ in \eqref{decomposedtwistingpair2}. We obtain
\begin{eqnarray}
	\mathcal{A}\cong(\mathbb{C}\rtimes (N\rtimes Q))\rtimes_{(\gamma,\sigma)}A &\cong &\left(\mathbb{C}\rtimes (N\rtimes Q)\right)\rtimes_{(\gamma,\sigma)}A\nonumber\\
	&\cong &\left(C_0(\hat{N})\rtimes_{(\hat{\beta},1)}Q\right)\rtimes_{(\gamma,\sigma)}A,\nonumber
\end{eqnarray}
where $\hat{\beta}$ is determined as in Examples \ref{subsection:decompositionordinarygroupalgebras} and \ref{subsection:decomposingtwistedgroupalgebras}, and $\gamma_r, r\in A$ are some automorphisms of $\mathbb{C}\rtimes (N\rtimes Q)$. Since $A$ appears as a direct product factor in $G$, it acts only on $\mathbb{C}$ in $\mathbb{C}\rtimes (N\rtimes Q)$, so $\gamma_r$ effects complex conjugation if $\phi(r)=-1$ and does nothing otherwise.

In the complex case, i.e.\ $A\subset\{1,S\}$, we have
\begin{equation}
	\mathcal{A}	\cong \left(C_0(\hat{N})\rtimes_{(\hat{\beta},1)}Q\right)\hat{\otimes}_\mathbb{C}\left(\mathbb{C}\rtimes_{(1,\sigma)}A\right)\cong \left(C_0(\hat{N})\rtimes_{(\hat{\beta},1)}Q\right)\hat{\otimes}\,\mathbb{C}l_n,\label{complexcliffordsimplification}
\end{equation}
where the complex Clifford algebra is $\mathbb{C}l_1$ if $A=\{1,S\}$ and $\mathbb{C}l_0$ if $A=\{1\}$. For discrete $N$, we can understand a graded f.g.p.\ $(Q,C(\hat{N}),\hat{\beta})$-module as in Section \ref{subsection:decompositionordinarygroupalgebras}, in terms of a graded $Q$-equivariant complex vector bundle over $\hat{N}$, which is just the direct sum of two ungraded such bundles. When $n=1$, there is an additional graded action of $\mathbb{C}l_1$ on the fibres which commutes with the $Q$-action. 

In the real case, i.e., either of $C$ and $T$ is present, we first write $\mathbb{C}\rtimes (N\rtimes Q)=(\mathbb{R}\rtimes (N\rtimes Q))\otimes_\mathbb{R}\mathbb{C}$. Then we obtain
\begin{eqnarray}
	\mathcal{A}\cong \left(\left(\mathbb{R}\rtimes (N\rtimes Q)\right)\otimes_\mathbb{R}\mathbb{C}\right)\rtimes_{(\gamma,\sigma)}A
	&=&\left(\mathbb{R}\rtimes (N\rtimes Q)\right)\otimes_\mathbb{R}\left(\mathbb{C}\rtimes_{(\alpha,\sigma)}A\right ),
	\nonumber\\
	&\cong &\left(\mathbb{R}\rtimes (N\rtimes Q)\right)\hat{\otimes}\,Cl_{r,s},\label{realcliffordsimplification}
\end{eqnarray}
where the Clifford algebra $Cl_{r,s}$ is determined by $(A,\sigma)$, according to Section \ref{subsection:CTgrouptenfoldway}. Actually, when $A=\{1,T\}$, the factor $\mathbb{C}\rtimes_{(\alpha,\sigma)}A$ is purely even, so we have actually made a modification (see Remark \ref{gradedversusungradedclifford}) in replacing it by a graded Clifford algebra $Cl_{r,s}$. This detail is actually quite important, see Section \ref{bandinsulatorsktheory}.

It is possible to formulate things in terms of Real bundles in the sense of Atiyah \cite{atiyah1966k} with Clifford modules as fibres \cite{freed2013twisted,thiang2014thesis}. However, doing this directly by Fourier transforming $\mathbb{C}\rtimes N$ to $C(\hat{N})$ requires a fairly complicated and opaque auxiliary construction in the real case. The underlying reason is because the real $C^*$-algebra $\mathbb{R}\rtimes N$ does not simply translate into $C_0(\hat{N},\mathbb{R})$ under the Fourier transform. Instead, we have to consider $\hat{N}$ as a Real space with involution given by the map taking $\chi$ to the complex conjugate character $\bar{\chi}$. The algebra $\mathbb{R}\rtimes N$ is the real subalgebra of $\mathbb{C}\rtimes N$ which is fixed under complex conjugation. Upon taking the Fourier transform $\mathbb{C}\rtimes N\cong C_0(\hat{N})$, complex conjugation turns into the antilinear involution $\bar{\hat{f}}(\chi)\coloneqq\overline{\hat{f}(\bar{\chi})}$, and the correct isomorphism is \begin{equation}
	\mathbb{R}\rtimes N\cong C_0(\im\hat{N})\coloneqq\left\{\hat{f}\in C_0(\hat{N})\,:\,\overline{\hat{f}(\chi)}=\hat{f}(\bar{\chi})\right\}.\label{correctrealfourierisomorphism}
\end{equation}
If we had performed a Fourier transform in \eqref{realcliffordsimplification}, a Clifford algebra cannot be nicely factorised, and the analysis becomes unnecessarily obscured.

Equations \eqref{complexcliffordsimplification} and \eqref{realcliffordsimplification} express $\mathcal{A}$ as the tensor product of a purely even algebra with a graded Clifford algebra. According to Section \ref{section:karoubidifferenceconstruction}, this effects a degree shift in $K$-theory, so the difference-group $\mathbf{K}_0(\mathcal{A})$ is easy to compute. In the complex case, we have
\begin{equation}
\mathbf{K}_0(\mathcal{A})\cong K_n(C_0(\hat{N})\rtimes_{(\hat{\beta},1)}Q)\cong K^{-n}_Q(\hat{N}),\label{complexdifferencegroup}
\end{equation}
where the last expression is an equivariant topological $K$-theory group. In the real case, we have
\begin{equation}
	\mathbf{K}_0(\mathcal{A})\cong K_{s-r}(C_0(\im\hat{N})\rtimes_{(\hat{\beta},1)} Q).\nonumber
\end{equation}
We can use the correspondence between $C_0(\im\hat{N})$-modules and sections of a Real bundle over $\hat{N}$ (with antilinear involution $\overline{(\cdot)}$ lifting $\chi\mapsto\bar{\chi}$) which are fixed under the induced involution $\overline{s(\chi)}=s(\bar{\chi})$. The result is
\begin{equation}
	\mathbf{K}_0(\mathcal{A})\cong KR^{r-s\,\mathrm{(mod\,8)}}_Q(\hat{N}).\label{realdifferencegroup}
\end{equation}

\begin{remark}
In this example, we have assumed that $\mathrm{ker}(\phi,c)=N\rtimes Q$ in order to make a connection to topological $K$-theory. This splitting assumption is not necessary in our noncommutative framework, and is also not generic physically. There are other elaborate versions of Real $K$-theory, and even Quaternionic $K$-theory which may be useful for a geometrical picture, but such a picture can quickly become unwieldy in general.
\end{remark}

\end{example}

\section{Applications to topological band insulators}\label{section:bandinsulators}
\subsection{Band insulators and $K$-theory}\label{bandinsulatorsktheory}
The computations in Examples \ref{subsection:decompositionordinarygroupalgebras}-\ref{subsection:decompositionCliffordbundles} hold for the special case  of topological band insulators, in which $N=\mathbb{Z}^d$. Thus, the $\mathbb{Z}^d$ lattice translations are realised non-projectively, complex-linearly, and do not have any charge-reversing or time-reversing properties. The Brillouin torus is nothing but $\hat{N}=\mathbb{T}^d$. When we assumed the splitting $G=\mathbb{Z}^d\rtimes P$ (where $P=G/N$) as well as $\sigma\equiv 1$, we obtained $\mathcal{A} =\mathbb{C}\rtimes_{(\alpha,1)}G\cong C(\mathbb{T}^d)\rtimes_{(\hat{\beta},1)}P$. 

Underlying a graded f.g.p.\ $\mathcal{A}$-module is a graded f.g.p.\ $C(\mathbb{T}^d)$-module, and thus a graded complex vector bundle $E\rightarrow\mathbb{T}^d$ which we call a graded \emph{Bloch bundle}. The homomorphisms $(\phi,c)$ descend to $P$, telling us whether $p\in P$ acts complex linearly/antilinearly or preserves/reverses the ``particle-hole'' distinction. Thus $p$ acts real-linearly between fibres, moving points on the base space in a manner which depends on the action of $p$ on $\mathbb{Z}^d$ and whether $\phi(p)=\pm 1$. In a graded Bloch bundle, the positively-graded subbundle can be interpreted as the conduction band lying above the Fermi level $\mathcal{E}_F$ (which is taken to be $0$), while the negatively-graded subbundle is the valence band, lying below $\mathcal{E}_F$. Note that this description does not require any distinguished involutary time-reversal, charge-conjugation, or chiral symmetry element in $P$.

In this Bloch bundle picture, a trivial graded $\mathcal{A}$-module in the sense of Definition \ref{definition:trivialgradedmodule} is one which admits an odd complex bundle automorphism $\mathcal{I}$ satisfying $\mathcal{I}^2=1$, and $\mathcal{I}p=c(p)p\mathcal{I}$ for all $p\in P$. Such a Bloch bundle is called a \emph{topologically trivial $(G,c,\phi)$-compatible band insulator}, and is a commutative instance of Definition \ref{definition:trivialgradedmodule}. The super-representation group of $\mathcal{A}$ thus classifies the $(G,c,\phi)$-compatible band insulators, modulo the topologically trivial ones.

In general, we can allow $\sigma\not\equiv 1$, but require further splitting assumptions. In Section \ref{subsection:decompositionCliffordbundles}, we found that a significant simplification occurs when $G$ has the form $G=(\mathbb{Z}^d\rtimes Q)\times A$ with $A\cong \mathrm{Im}(\phi,c)\subset\{1,T,C,S\}$, and $\sigma$ is only non-trivial between elements of $A$. The data of $(A,\sigma)$ is associated with one of ten possible Clifford algebras, which factorises in $\mathcal{A}=\mathbb{C}\rtimes_{(\alpha,\sigma)} G$. In the two complex cases, the graded Bloch bundle corresponding to a graded f.g.p.\ $\mathcal{A}$-module is a graded $Q$-equivariant complex bundle $E\rightarrow\mathbb{T}^d$. If $A=\{1,S\}$, there is an additional \emph{commuting} graded $\mathbb{C}l_1$-action on each fibre, generated by the complex-linear, odd, and involutary map $\mathsf{I}_\mathsf{S}$ representing $\mathsf{S}$. A topologically-trivial $(G,c,\phi,\sigma)$-compatible band insulator (in a complex class) is one which admits an odd $Q$-equivariant complex bundle automorphism $\mathcal{I}$ satisfying $\mathcal{I}^2=+1$, and $\mathcal{I}\mathsf{I}_\mathsf{S}=-\mathsf{I}_\mathsf{S}\mathcal{I}$ if $S$ is present.

For the remaining eight real cases, the elements $T,C$ of $A$ (where present) act through the antilinear bundle maps $\mathsf{I}_\mathsf{T},\mathsf{I}_\mathsf{C}$ representing $\mathsf{T},\mathsf{C}$, which are even and odd respectively. The squares of $\mathsf{I}_\mathsf{T}$ and $\mathsf{I}_\mathsf{C}$ are $\pm 1$ according to $\sigma(T,T)$ and $\sigma(C,C)$. Unlike $\mathsf{I}_\mathsf{S}$, the bundle maps $\mathsf{I}_\mathsf{T},\mathsf{I}_\mathsf{C}$ do \emph{not} commute with the bundle projection, but instead take the fibre over $\chi\in\mathbb{T}^d$ to the fibre over $\bar{\chi}$. The map $\mathsf{I}_\mathsf{T}$ is the standard time-reversal operation on the Bloch bundle $E$, and both cases $\mathsf{I}_\mathsf{T}^2=\pm 1$ are considered in the literature. Similarly, $\mathsf{I}_\mathsf{C}$ is the standard particle-hole conjugation operation on $E$. A topologically-trivial $(G,c,\phi,\sigma)$-compatible band insulator is one which admits an odd $Q$-equivariant complex bundle automorphism, satisfying $\mathcal{I}^2=1$, $\mathcal{I}\mathsf{I}_\mathsf{T}=\mathsf{I}_\mathsf{T}\mathcal{I}$, and $\mathcal{I}\mathsf{I}_\mathsf{C}=-\mathsf{I}_\mathsf{C}\mathcal{I}$. Since $\mathsf{I}_\mathsf{T},\mathsf{I}_\mathsf{C}$ do \emph{not} commute with $C(\mathbb{T}^d)$, the graded Bloch bundle is not a bundle of Clifford modules. 

The difference-groups of $(G,c,\phi,\sigma)$-compatible band insulators are generally distinct from their super-representation groups. The former can be read off from \eqref{complexdifferencegroup} and \eqref{realdifferencegroup},
\begin{equation}
	\mathbf{K}_0(\mathcal{A})=\begin{cases}K_Q^{-n}(\mathbb{T}^d)\quad &\text{complex case},\\
	KR_Q^{r-s\,\mathrm{(mod\,8)}}(\mathbb{T}^d) \quad &\text{real case}, \nonumber\end{cases}
\end{equation}
with $n,r,s$ determined by Table \ref{table:ctcliffordalgebras}.

\subsection{The three special purely even cases}\label{subsection:threespecialcases}
Although $\mathcal{S}\mathcal{R}(\mathcal{A})$ and $\mathbf{K}_0(\mathcal{A})$ do not generally coincide for graded $\mathcal{A}$, they do when $\mathcal{A}$ is purely even. Important examples are when $G$ is of the form $(N\rtimes Q)\times A$ with $A\subset\{1,T\}$, and only $\sigma(T,T)$ may be non-trivial. In each of these cases, $\mathbb{C}\rtimes_{(\alpha,\sigma)}A$ is purely even: it is either the complex algebra $\mathbb{C}$, the real algebra $M_2(\mathbb{R})$, or the real algebra $\mathbb{H}$, with the latter two real algebras generated by $\im$ and $\mathsf{T}$. This means that \eqref{complexcliffordsimplification} and \eqref{realcliffordsimplification} should give 
\begin{equation}
\mathbb{C}\rtimes_{(\alpha,\sigma)}G=\begin{cases}
\mathbb{C}\rtimes (N\rtimes Q) \qquad & A=\{1\} ,\\ \mathbb{R}\rtimes (N\rtimes Q)\otimes_\mathbb{R} M_2(\mathbb{R}) \quad\quad & A=\{1,T\}, \mathsf{T}^2=+1,\\ \left(\mathbb{R}\rtimes (N\rtimes Q)\right)\otimes_\mathbb{R}\mathbb{H} \quad\quad & A=\{1,T\}, \mathsf{T}^2=-1.
\end{cases}\nonumber
\end{equation}
Recall that in the third case, we replaced $\mathbb{H}$ by the graded Clifford algebra $Cl_{4,0}\cong Cl_{0,4}$ in \eqref{realcliffordsimplification}, which we used to arrive at \eqref{realdifferencegroup}. We could also use Theorem III.4.12 of \cite{karoubi1978k}, which says that $K'^{0,0}(\mathcal{B}\otimes_\mathbb{R}\mathbb{H})\cong K'^{0,4}(\mathcal{B})$ for any ungraded $\mathcal{B}$. Taking $\mathcal{B}=\mathbb{R}\rtimes (N\rtimes Q)$, we obtain $\mathbf{K}_0(\mathcal{B}\otimes_\mathbb{R}\mathbb{H})\equiv K'^{0,0}(\mathcal{B}\otimes_\mathbb{R}\mathbb{H})\cong K'^{0,4}(\mathcal{B})\equiv \mathbf{K}_0(\mathcal{B}\hat{\otimes}Cl_{0,4})$, which justifies our replacement. The super-representation groups for the three cases where $\mathcal{A}=\mathbb{C}\rtimes_{(\alpha,\sigma)}G$ is purely even, are
\begin{equation}
\mathcal{S}\mathcal{R}(\mathcal{A})\cong\mathbf{K}_0(\mathcal{A})=\begin{cases}
K_Q^0(\hat{N}) \quad\quad & A=\{1\} ,\\ KR_Q^0(\hat{N}) \quad\quad & A=\{1,T\}, \mathsf{T}^2=+1,\\ KR_Q^{-4}(\hat{N}) \quad\quad & A=\{1,T\}, \mathsf{T}^2=-1,
\end{cases}\nonumber
\end{equation}
which should be compared with Corollary 10.28 of \cite{freed2013twisted}. Specialising to $N=\mathbb{Z}^d$ yields the super-representation/difference-groups of $(G,\phi,\sigma)$-compatible band insulators. These three classes of insulators are usually labelled by $A, AI$ and $AII$ repsectively.

We remark that the somewhat awkward treatment of the third case ($\mathsf{T}^2=-1$), which  resulted in a $KR^{-4}$ group, can be modified to resemble the first two cases more closely. In that case, we identify a f.g.p.\ $\mathcal{A}$-module with the sections of a $Q$-equivariant ``Quaternionic'' vector bundle over $\hat{N}$. Like Real vector bundles, ``Quaternionic'' vector bundles are complex vector bundles over a Real space $(X,\varsigma)$, but with a lift of $\varsigma$ to an antilinear \emph{anti-involution} $\Theta$ such that $\Theta^2=-1$. Such bundles were considered in a classification scheme for Type $AII$ topological insulators in \cite{de2014classification2}, in which detailed definitions and references for ``Quaternionic'' bundles can be found. The corresponding topological $K$-theory is called $KQ$-theory, and there is a useful isomorphism $KR^{-4}(X,\varsigma)\cong KQ^0(X,\varsigma)$ (see \cite{dupont1969symplectic} and the Appendix of \cite{de2014classification2}). This establishes a $KR^{-4}$ group as a Grothendieck group $KQ^0$ of ``Quaternionic'' bundles. Therefore, $\mathbf{K}_0(\mathcal{A})$ may be interpreted as a Grothendieck group of vector bundles in the three cases where $\mathcal{A}=\mathbb{C}\rtimes_{(\alpha,\sigma)}G$ is purely even \emph{but not in the other seven}.

\subsection{Finite versus infinite rank bundles}\label{subsection:finiteversusinfinite}
In many realistic Hamiltonians compatible with a subgroup $N=\mathbb{Z}^d\subset G$ of translational symmetries of a lattice, the Bloch Hamiltonians are unbounded operators on a bundle of infinite-dimensional Hilbert spaces over the Brillouin torus $\hat{N}=\mathbb{T}^d$. In our application of $K$-theory, we have confined ourselves to the finite-rank situation. We can motivate this physically by imposing an energy cut-off, or by assuming that there is a finite-rank conduction band which we restrict attention to. As such, $\Gamma$ refers not to the flattened version of the full Hamiltonian, but to the flattened version of its restriction to the relevant low energy modes. Likewise, the symmetries act only on this restricted (invariant) subspace. Also, in certain discretised models, the Bloch Hamiltonians do have finite-rank, so the $K$-theory classification makes sense for the full Hamiltonian. It is actually very important to distinguish the finite and infinite rank cases.

In the Hilbert $C^*$-module description of Bloch theory \cite{gruber2001noncommutative}, the Bloch Hamiltonians act on a continuous field of infinite-dimensional separable Hilbert spaces over $\mathbb{T}^d$, whose sections form a countably-generated Hilbert $C(\mathbb{T}^d)$-module. In \cite{freed2013twisted}, the authors considered graded bundles $E^+\oplus E^-$ over $\mathbb{T}^d$, such that $E^+$ is infinite-dimensional, while $E^-$ is a finite-rank bundle. They called such bundles ``Type $I$'' insulators, while graded bundles with both $E^+$ and $E^-$ having finite rank were called ``Type $F$'' insulators. We can make this distinction in more general noncommutative terms.
\begin{definition}[Type $I$ and Type $F$ insulators]
Let $\mathcal{A}$ be a graded unital $C^*$-algebra. A \emph{Type $I$ insulator} is a graded $\mathcal{A}$-module $E=E_0\oplus E_1\eqqcolon E^+\oplus E^-$ such that $E^-$ is an ungraded f.g.p.\ $\mathcal{A}$-module and $E^+$ is a countably-generated (but not finitely-generated) Hilbert $\mathcal{A}$-module. A \emph{Type $F$ insulator} is a graded f.g.p.\ $\mathcal{A}$-module.
\end{definition}
Suppose $\mathcal{A}$ arises from the symmetry data $(G,c,\mathcal{B},\phi,\sigma)$. For Type $I$ insulators, there is no possibility of an invertible odd operator taking $E^+$ to the f.g.p.\ submodule $E^-$, so $c\equiv 1$ and $\mathcal{A}$ is purely even. We define $\mathcal{G}\mathcal{V}^I(\mathcal{A})$ to be the commutative monoid, under the direct sum, of equivalence classes of Type $I$ insulators. Note that we have to formally introduce the zero module as a zero element, since it is not actually in $\mathcal{G}\mathcal{V}^I(\mathcal{A})$. This monoid is really the direct sum of the monoid $\mathcal{V}^+(\mathcal{A})$ of ungraded countably-generated Hilbert $\mathcal{A}$-modules $E^+$, and the monoid $\mathcal{V}^-(\mathcal{A})\coloneqq\mathcal{V}(\mathcal{A})$ of ungraded f.g.p.\ $\mathcal{A}$-modules. The notion of triviality in Definition \ref{definition:trivialgradedmodule} is vacuous for Type $I$ insulators, and we cannot construct its ``super-representation group''. Instead, an abelian group is obtained via the Grothendieck completion $\mathcal{G}\mathcal{R}^I(\mathcal{A})$ of $\mathcal{G}\mathcal{V}^I(\mathcal{A})=\mathcal{V}^+(\mathcal{A})\oplus \mathcal{V}^-(\mathcal{A})$. We can perform an ``Eilenberg swindle'' on the $\mathcal{V}^+$ part of $\mathcal{G}\mathcal{V}^I(\mathcal{A})$:
\begin{proposition}
The Grothendieck completion of $\mathcal{V}^+(\mathcal{A})$ is the trivial group.
\end{proposition}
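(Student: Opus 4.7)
The plan is to carry out the classical Eilenberg swindle. Given any $E\in\mathcal{V}^+(\mathcal{A})$ generated by a countable set $\{e_i\}_{i\in\mathbb{N}}$, I form the countable Hilbert $C^*$-module direct sum $E^\infty \coloneqq \bigoplus_{n=1}^\infty E$. The family $\{e_i^{(n)}\}_{(i,n)\in\mathbb{N}\times\mathbb{N}}$ generates $E^\infty$ and is countable, so $E^\infty$ is again a countably-generated Hilbert $\mathcal{A}$-module. Provided $E\neq 0$, the module $E^\infty$ is manifestly not finitely generated, and hence lies in $\mathcal{V}^+(\mathcal{A})$; the case $E=0$ is handled by the adjoined zero element.

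The key identity is the obvious reassociation of the countable direct sum, $E\oplus E^\infty \cong E^\infty$ as Hilbert $\mathcal{A}$-modules. Passing to the Grothendieck completion converts this isomorphism into the equation $[E]+[E^\infty]=[E^\infty]$ in the ambient abelian group, and cancellation gives $[E]=0$. Since $E\in\mathcal{V}^+(\mathcal{A})$ was arbitrary, every generator of the Grothendieck completion vanishes, so the completion is itself the zero group.

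The only point worth verifying carefully is that the category of countably-generated Hilbert $\mathcal{A}$-modules genuinely admits countable direct sums (one completes the algebraic direct sum with respect to the natural $\mathcal{A}$-valued inner product) and that this operation preserves countable generation. These are standard facts from the theory of Hilbert $C^*$-modules, so I do not foresee any real obstacle. The result is just the Hilbert $C^*$-module analogue of the classical observation that the Grothendieck group of the monoid of separable infinite-dimensional Hilbert spaces under direct sum is trivial.
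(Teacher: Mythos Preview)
Your argument is correct and is genuinely more elementary than the paper's. The paper invokes the Kasparov stabilisation theorem: there is a single ``universal'' countably-generated Hilbert $\mathcal{A}$-module $\hs_\mathcal{A}$ with the absorbing property $E^+\oplus\hs_\mathcal{A}\cong\hs_\mathcal{A}$ for every countably-generated $E^+$, and then $[E^+\ominus E'^+]=[\hs_\mathcal{A}\ominus\hs_\mathcal{A}]=0$. You instead run the swindle module-by-module with $E^\infty$, which avoids citing a nontrivial external theorem at the cost of using a different absorbing object for each $E$. One small point you gloss over: since $\mathcal{V}^+(\mathcal{A})$ consists of modules that are countably but \emph{not} finitely generated, you should say why $E^\infty$ stays in that class. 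This is immediate, though: $E$ is a direct summand of $E^\infty$, and a direct summand of a f.g.p.\ Hilbert $\mathcal{A}$-module is again f.g.p., so if $E^\infty$ were finitely generated then $E$ would be too, contrary to $E\in\mathcal{V}^+(\mathcal{A})$. With that remark your proof is complete. The paper's route has the advantage of naming a single canonical absorber and tying into standard $C^*$-module machinery; yours is self-contained and closer in spirit to the classical Hilbert-space swindle you mention.
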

\begin{proof}
There is a ``standard'' countably-generated Hilbert $\mathcal{A}$-module $\hs_\mathcal{A}$ (see 15.1.7 of \cite{wegge1993k}), defined by
\begin{equation} \hs_\mathcal{A}\coloneqq\left\{(a_k)\in\prod_{k=1}^\infty A\,:\,\sum_{k=1}^\infty a_k^*a_k\,\, \mathrm{converges\,\,in\,\,norm\,\,in}\,\,\mathcal{A}\right\},\nonumber
\end{equation}
with component-wise operations and $\mathcal{A}$-valued inner product $\langle (a_k),(b_k)\rangle\coloneqq \sum_k a_k^*b_k$. The Kasparov stabilisation theorem (see 15.4.6 of \cite{wegge1993k} for a proof and original references), says that $\hs_\mathcal{A}$ is absorbing in the sense that every countably-generated Hilbert $\mathcal{A}$-module $E^+$ satisfies $E^+\oplus\hs_\mathcal{A}\cong\hs_\mathcal{A}$. Then any virtual module $[E^+\ominus E'^+]$ in the Grothendieck completion of $\mathcal{V}^+(\mathcal{A})$ satisfies $[E^+\ominus E'^+]=[(E^+\oplus\hs_\mathcal{A}) \ominus (E'^+\oplus\hs_\mathcal{A})]=[\hs_\mathcal{A}\ominus \hs_\mathcal{A}]=0$.\qed
\end{proof}
\begin{corollary}\label{corollary:grothendieckinfinite}
$\mathcal{G}\mathcal{R}^I(\mathcal{A})$ is isomorphic to the Grothendieck completion of $\mathcal{V}^-(\mathcal{A})=\mathcal{V}(\mathcal{A})$, i.e.,\ $\mathcal{G}\mathcal{R}^I(\mathcal{A})\cong K_0(\mathcal{A})$.
\end{corollary}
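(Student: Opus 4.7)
The proof is essentially a direct consequence of the previous proposition combined with the functoriality of Grothendieck completion. My plan is as follows.

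First, I would observe that by the very definition of a Type $I$ insulator, the monoid $\mathcal{G}\mathcal{V}^I(\mathcal{A})$ splits as an internal direct sum $\mathcal{V}^+(\mathcal{A}) \oplus \mathcal{V}^-(\mathcal{A})$ of commutative monoids, where the two summands consist of the ``infinite'' piece $E^+$ and the finite-rank piece $E^-$ respectively, and the direct sum operation on graded modules acts componentwise. Because $c \equiv 1$ in the Type $I$ case and $\mathcal{A}$ is purely even, there is no mixing between homogeneous components, so this splitting is honest.

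Next, I would invoke the standard fact that the Grothendieck completion functor, being left adjoint to the forgetful functor from abelian groups to commutative monoids, preserves finite coproducts. Applied to the splitting above, this yields
\begin{equation*}
	\mathcal{G}\mathcal{R}^I(\mathcal{A}) \cong \mathrm{Groth}\bigl(\mathcal{V}^+(\mathcal{A})\bigr) \oplus \mathrm{Groth}\bigl(\mathcal{V}^-(\mathcal{A})\bigr).
\end{equation*}
By the preceding proposition, the first summand is trivial via the Kasparov-stabilisation/Eilenberg swindle argument. The second summand is, by definition of $K_0$ for a unital $C^*$-algebra as the Grothendieck group of isomorphism classes of f.g.p.\ modules under direct sum, exactly $K_0(\mathcal{A})$.

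Combining these two steps gives $\mathcal{G}\mathcal{R}^I(\mathcal{A}) \cong K_0(\mathcal{A})$, as required. I do not anticipate any real obstacle: the only point that requires minor care is confirming that the isomorphism class of a Type $I$ insulator $E = E^+ \oplus E^-$ is genuinely determined by the pair of isomorphism classes of $E^+$ and $E^-$ separately. This follows because any graded $\mathcal{A}$-module isomorphism between two Type $I$ insulators must restrict to isomorphisms of the individual homogeneous summands (since $\mathcal{A}$ is purely even, module maps automatically preserve grading up to an obvious decomposition), so the splitting of monoids is well-defined on isomorphism classes.
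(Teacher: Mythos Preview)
Your proposal is correct and matches the paper's implicit reasoning: the paper states the splitting $\mathcal{G}\mathcal{V}^I(\mathcal{A})=\mathcal{V}^+(\mathcal{A})\oplus \mathcal{V}^-(\mathcal{A})$ just before the proposition, and then treats the corollary as immediate from the proposition (no separate proof is given). Your explicit invocation of the fact that Grothendieck completion preserves finite direct sums, together with your verification that graded isomorphisms of Type~$I$ insulators decompose componentwise, simply spells out what the paper leaves tacit.
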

Corollary \ref{corollary:grothendieckinfinite} provides yet another interpretation of $K_0(\mathcal{A})$ (with $\mathcal{A}$ necessarily purely even). Namely, a virtual class $[E^-\ominus 0]\in K_0(\mathcal{A})\cong \mathcal{G}\mathcal{R}^I(\mathcal{A})$ can be represented by a gapped Type $I$ insulator whose ``bundle of negative eigenstates'' is $E^-$. Therefore, $\mathcal{G}\mathcal{R}^I(\mathcal{A})$ retains only the information of $E^-$. More generally, $[E^-\ominus E'^-]\in K_0(\mathcal{A})$ is represented by the \emph{formal difference} of (the negative eigenstates of) two Type $I$ insulators. This is actually more familiar than it looks. The Hall conductivity in the Integer Quantum Hall Effect is related to the $K$-theory element $[E^-\ominus 0]$ associated to a Type $I$ Landau Hamiltonian \cite{bellissard1994noncommutative}. Here, it is important to realise that there is generally no commutative Brillouin zone, so our definition of a ``noncommutative'' Type $I$ insulator has genuine physical relevance. Similarly, the Kane--Mele $\mathbb{Z}_2$ invariant for $T$-invariant insulators is determined by (formal differences of) the negative energy bundles of Type $I$ insulators, and was studied in \cite{freed2013twisted}.

On the other hand, the group $\mathcal{G}\mathcal{R}^I(\mathcal{A})$ does not make sense for Type $F$ insulators. For purely-even $\mathcal{A}^\mathrm{ev}$, the ordinary $K_0(\mathcal{A}^\mathrm{ev})$ group exists and may be interpreted as a virtual class of $\mathcal{A}^\mathrm{ev}$-modules, or as $\mathcal{S}\mathcal{R}(\mathcal{A}^\mathrm{ev})\cong\mathbf{K}_0(\mathcal{A}^\mathrm{ev})\cong K_0(\mathcal{A}^\mathrm{ev})$. However, the collection of Type $F$ insulators includes those with $\mathcal{A}$ having non-trivial grading, so only the super-representation/difference-group has general applicability.

In summary: \emph{the interpretation of ordinary $K$-theory groups either as $\mathcal{G}\mathcal{R}^I(\mathcal{A})$ in the Type $I$ case, or as virtual classes of ungraded $\mathcal{A}^\mathrm{ev}$-modules in the Type $F$ case, cannot be used in a unified picture which includes charge-conjugating symmetries ($c\not\equiv 1$)}; the super-representation group or difference-group is more appropriate.

\section{Periodic Table of difference-groups and dimension shifts}\label{section:dimensionshift}
\subsection{Zero-dimensional gapped phases}
In Section \ref{subsection:CTgrouptenfoldway}, we showed that the graded PUA-reps of each of the ten choices of $(A,\sigma)$ are in $1\textrm{--}1$ correspondence with the graded representations of an associated Clifford algebra according to Table \ref{table:ctcliffordalgebras}. The super-representation group of $\mathbb{C}\rtimes_{(\alpha,\sigma)}A$ and that of its corresponding Clifford algebra coincide, and are precisely one of the $K$-theory groups of a point. Therefore, we say that $0$-dimensional gapped topological phases compatible with the symmetries specified by $(A,\sigma)$, modulo topologically trivial phases, are classified by $K^{-n}(\star)\cong K_n(\mathbb{C})$ or $KO^{-n}(\star)\cong K_n(\mathbb{R})$, according to Table \ref{table:0dgappedphasestable}. The same $K$-theory groups also classify the differences of $(A,\sigma)$-compatible gapped Hamiltonians, due to the isomorphisms $\mathbf{K}_0(Cl_{r,s})\cong K_{s-r}(\mathbb{R})$ and $\mathbf{K}_0(\mathbb{C}l_n)\cong K_n(\mathbb{C})$.

\begin{table}
\begin{center}
\begin{tabular}{l | l | l |c c |  c  c}
	$n$	&	\parbox[c]{1.4cm}{Cartan label} & \parbox[c]{2.3cm}{Generators of $A$} & $\mathsf{C}^2$ & $\mathsf{T}^2$  & \parbox[c]{2.3cm}{$Cl_{0,n}$ or $\mathbb{C}l_n$}	&	\parbox[l]{1.7cm}{$K_n(\mathbb{R})$ or $K_n(\mathbb{C})$}\\
	\hline
	$0$	&	$AI$	&	$T$	    &	 	& $+1$	& $Cl_{0,0}$	& $\mathbb{Z}$ \\
	$1$	&	$BDI$	&	$C,T$ 	& $+1$  & $+1$	& $Cl_{0,1}$ 	& $\mathbb{Z}_2$ \\	
	$2$	&	$D$		&	$C$		& $+1$  &		& $Cl_{0,2}$ 	& $\mathbb{Z}_2$ \\	
	$3$	&	$DIII$	&	$C,T$ 	& $+1$  & $-1$	& $Cl_{0,3}$ 	& $0$ \\	
	$4$	&	$AII$	&	$T$		&    	& $-1$	& $Cl_{0,4}$ 	& $\mathbb{Z}$ \\	
	$5$	&	$CII$	&	$C,T$ 	& $-1$ 	& $-1$	& $Cl_{0,5}$ 	& $0$ \\	
	$6$	&	$C$		&	$C$		& $-1$ 	&		& $Cl_{0,6}$ 	& $0$ \\	
	$7$	&	$CI$	&	$C,T$ 	& $-1$ 	& $+1$	& $Cl_{0,7}$	& $0$ \\
	
	\hline\hline
	$0$	&	$A$		&	N/A		& \multicolumn{2}{c|}{N/A}   	& $\mathbb{C}l_0$ & $\mathbb{Z}$ \\
	$1$	&	$AIII$	&	$S$	& \multicolumn{2}{c|}{$\mathsf{S}^2=+1$}		& $\mathbb{C}l_1$ & $0$ \\
\end{tabular}
\caption{Classification of $0$-dimensional gapped topological phases and their difference-classes. The next-to-last column lists the Clifford algebra $Cl_{0,n}$ or $\mathbb{C}l_n$ in the graded Morita class of the algebra $\mathcal{A}$ associated with $(A,\sigma)$. The $K$-theory group in the last column is isomorphic to the super-representation group $\mathcal{S}\mathcal{R}(\mathcal{A}$), as well as the difference-group $\mathbf{K}_0(\mathcal{A})$. The second column lists the Cartan label of the symmetric space of quantum mechanical time evolutions with corresponding $C,T$ symmetries \cite{ryu2010topological,heinzner2005symmetry}.}
\label{table:0dgappedphasestable}
\end{center}
\end{table}

\subsection{Higher dimensional gapped phases}
In the literature, it has been suggested \cite{kitaev2009periodic,stone2011symmetries} that the $K$-theoretic classification of gapped topological phases in $d$ spatial dimensions is the same as that in $0$ dimensions, except for a shift in the $K$-theory index by $d$. We have located the appropriate $K$-theoretic object (the difference-group) for which such a phenomenon might be plausible. We will prove a robust version of this dimension shift using some powerful results from the $K$-theory of crossed product $C^*$-algebras.

\begin{theorem}[Packer--Raeburn stabilisation trick \cite{packer1989twisted}]\label{theorem:packerraeburnstabilisationtrick}
Let $(G,\mathcal{A},\alpha,\sigma)$ be a twisted $C^*$-dynamical system, and let $\mathcal{K}$ denote the compact operators on the Hilbert space $L^2(G)$. There is an isomorphism
\begin{equation}
(\mathcal{A}\rtimes_{(\alpha,\sigma)}G)\otimes\mathcal{K}\cong(\mathcal{A}\otimes\mathcal{K})\rtimes_{(\alpha',1)}G,\nonumber
\end{equation}
for some untwisted action $\alpha'$ of $G$ on $\mathcal{A}\otimes\mathcal{K}$.
\end{theorem}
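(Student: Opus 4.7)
The plan is to untwist $\sigma$ after stabilisation by exhibiting it as a coboundary inside the larger multiplier algebra $\mathcal{U}\mathcal{M}(\mathcal{A}\otimes\mathcal{K})$, then invoke exterior equivalence to identify the crossed products. Concretely, first I would take $\mathcal{K}=\mathcal{K}(L^2(G))$, let $\rho$ denote the right regular representation of $G$ on $L^2(G)$, and form the enlarged twisting pair $(\alpha\otimes\mathrm{Ad}\,\rho,\,\sigma\otimes 1)$ on $\mathcal{A}\otimes\mathcal{K}$. A direct application of the universal characterisation in Proposition \ref{prop:twisteduniversalproperty}, together with a standard Fubini-type manipulation on $L^1(G,\mathcal{A})\otimes\mathcal{K}$, yields
\begin{equation*}
(\mathcal{A}\otimes\mathcal{K})\rtimes_{(\alpha\otimes\mathrm{Ad}\,\rho,\,\sigma\otimes 1)}G\;\cong\;(\mathcal{A}\rtimes_{(\alpha,\sigma)}G)\otimes\mathcal{K},
\end{equation*}
which is a twisted analogue of the well-known fact that the $\mathrm{Ad}\,\rho$ action of $G$ on $\mathcal{K}(L^2(G))$ is absorbed by the crossed product.

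Next I would construct a Borel map $\lambda:G\to\mathcal{U}\mathcal{M}(\mathcal{A}\otimes\mathcal{K}(L^2(G)))$ whose value at $x$ acts on the Hilbert $\mathcal{A}$-module $L^2(G,\mathcal{A})$ by an expression of the form
\begin{equation*}
(\lambda(x)\xi)(y)\;=\;\sigma(y,y^{-1}x)\,\xi(y^{-1}x),
\end{equation*}
i.e.\ the $\sigma$-twisted left-regular representation. Using the cocycle identity \eqref{multidentity2} and the $\alpha$-equivariance of $\sigma$, one verifies that $\lambda$ satisfies the twisted cocycle relation
\begin{equation*}
\lambda(x)\,(\alpha_x\otimes\mathrm{Ad}\,\rho_x)(\lambda(y))\;=\;(\sigma(x,y)\otimes 1)\,\lambda(xy),\qquad x,y\in G.
\end{equation*}
Comparing with \eqref{possibletwistingautomorphism}--\eqref{possibletwistingcocycle}, this says exactly that $\lambda$ implements an exterior equivalence between $(\alpha\otimes\mathrm{Ad}\,\rho,\,\sigma\otimes 1)$ and a new twisting pair $(\alpha',\sigma')$ with $\sigma'\equiv 1$, where $\alpha'_x=\mathrm{Ad}(\lambda(x))\circ(\alpha_x\otimes\mathrm{Ad}\,\rho_x)$ is now an honest $G$-action on $\mathcal{A}\otimes\mathcal{K}$. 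By the correspondence $\theta_x\mapsto\lambda(x)\theta_x$ between covariant representations of exterior equivalent twisting pairs (Section \ref{subsection:ungradedcovariantrepresentation}) and Proposition \ref{prop:twisteduniversalproperty}, this correspondence descends to a $C^*$-isomorphism
\begin{equation*}
(\mathcal{A}\otimes\mathcal{K})\rtimes_{(\alpha\otimes\mathrm{Ad}\,\rho,\,\sigma\otimes 1)}G\;\cong\;(\mathcal{A}\otimes\mathcal{K})\rtimes_{(\alpha',1)}G,
\end{equation*}
and combining with the identification from the previous paragraph finishes the proof.

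The hard part will be the bookkeeping in step two: verifying that $\lambda(x)$ genuinely defines an element of $\mathcal{U}\mathcal{M}(\mathcal{A}\otimes\mathcal{K}(L^2(G)))$ in the strict topology (rather than merely a bounded operator on the Hilbert module), that $x\mapsto\lambda(x)$ is Borel measurable as required, and that the twisted cocycle identity above holds on the nose. The combinatorics is routine but delicate because $\sigma$ takes values in $\mathcal{U}\mathcal{M}\mathcal{A}$ rather than in $\mathrm{U}(1)$, so one must carefully track the interaction between the automorphisms $\alpha_x$ and the insertions $\sigma(y,z)$ at each step, and use \eqref{cocycleassumption}--\eqref{automorphismassumption} to normalise at the identity. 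The remaining assertions — that $\alpha'$ really is a homomorphism, and that the resulting $C^*$-algebras are isomorphic rather than just abstractly Morita equivalent — follow formally from the general exterior equivalence machinery already reviewed in Section \ref{subsection:ungradedcovariantrepresentation} of the excerpt.
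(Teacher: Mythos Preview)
The paper does not prove this theorem at all: it is stated with a citation to Packer--Raeburn \cite{packer1989twisted} and used as a black box. So there is no ``paper's own proof'' to compare against. Your sketch is essentially the argument Packer--Raeburn give in their Theorem~3.4: stabilise by $\mathcal{K}(L^2(G))$, realise $\sigma\otimes 1$ as a coboundary via a twisted regular representation, and invoke exterior equivalence to trivialise the cocycle. That is the right strategy and your identification of the delicate points (strict measurability of $\lambda$, the fact that $\sigma$ is $\mathcal{U}\mathcal{M}\mathcal{A}$-valued rather than scalar) is accurate.

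One caution on the bookkeeping you flag: the precise formula you write for $\lambda(x)$ mixes a translation $\xi(y^{-1}x)$ with a cocycle factor $\sigma(y,y^{-1}x)$, but in Packer--Raeburn's construction the untwisting unitary is typically a \emph{multiplication} operator of the shape $(\lambda(x)\xi)(y)=\alpha_y(\sigma(y^{-1},x))^*\,\xi(y)$ (up to conventions), acting alongside the regular representation rather than absorbing it. Your displayed cocycle identity for $\lambda$ will not verify cleanly with the formula as written; you should revisit the ansatz so that the $\alpha$-twist on $\sigma$ appears explicitly and the translation part is carried entirely by the regular representation already present in $\alpha\otimes\mathrm{Ad}\,\rho$. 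The first isomorphism $(\mathcal{A}\otimes\mathcal{K})\rtimes_{(\alpha\otimes\mathrm{Ad}\,\rho,\sigma\otimes 1)}G\cong(\mathcal{A}\rtimes_{(\alpha,\sigma)}G)\otimes\mathcal{K}$ also deserves more than a ``Fubini-type manipulation'' --- in the twisted setting this is itself a lemma in \cite{packer1989twisted} and uses that $\mathrm{Ad}\,\rho$ is inner in $\mathcal{M}(\mathcal{K})$. None of this is a gap in the strategy, only in the execution you have correctly anticipated as ``routine but delicate.''
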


\begin{corollary}[Dimension shifts]\label{corollary:dimension shifts}
Let $(\mathbb{R}^d,\mathcal{A},\alpha,\sigma)$ be a twisted $C^*$-dynamical system. Then \mbox{$K_n(\mathcal{A}\rtimes_{(\alpha,\sigma)}\mathbb{R}^d)\cong  K_{n-d}(\mathcal{A})$}.
\end{corollary}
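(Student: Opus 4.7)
The strategy is to combine the Packer--Raeburn stabilisation trick (Theorem \ref{theorem:packerraeburnstabilisationtrick}) with the Connes--Thom isomorphism. The former untwists the cocycle $\sigma$ at the price of a $\mathcal{K}$-stabilisation invisible to $K$-theory; the latter then supplies the desired dimension shift by $d$.

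First I would apply Theorem \ref{theorem:packerraeburnstabilisationtrick} to the twisted dynamical system $(\mathbb{R}^d,\mathcal{A},\alpha,\sigma)$ to obtain an isomorphism
\[
(\mathcal{A}\rtimes_{(\alpha,\sigma)}\mathbb{R}^d)\otimes\mathcal{K} \;\cong\; (\mathcal{A}\otimes\mathcal{K})\rtimes_{(\alpha',1)}\mathbb{R}^d
\]
for some untwisted action $\alpha'$ of $\mathbb{R}^d$ on $\mathcal{A}\otimes\mathcal{K}$. Stability of $K$-theory, $K_n(\mathcal{B}\otimes\mathcal{K})\cong K_n(\mathcal{B})$, then reduces the claim to establishing
\[
K_n\bigl((\mathcal{A}\otimes\mathcal{K})\rtimes_{(\alpha',1)}\mathbb{R}^d\bigr) \;\cong\; K_{n-d}(\mathcal{A}\otimes\mathcal{K}).
\]

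Next I would invoke the Connes--Thom isomorphism: for any $C^*$-algebra $\mathcal{B}$ equipped with a strongly continuous action $\beta$ of $\mathbb{R}$, there is a natural isomorphism $K_n(\mathcal{B}\rtimes_\beta\mathbb{R})\cong K_{n-1}(\mathcal{B})$. Iterating via the splitting $\mathbb{R}^d\cong\mathbb{R}\times\mathbb{R}^{d-1}$ (concretely, by applying Theorem \ref{theorem:packerraeburndecomposition} to the closed normal subgroup $\mathbb{R}^{d-1}\subset\mathbb{R}^d$ and noting that the cocycle $\nu$ on the abelian quotient $\mathbb{R}$ is trivial in the untwisted situation) produces the higher-dimensional version $K_n(\mathcal{B}\rtimes\mathbb{R}^d)\cong K_{n-d}(\mathcal{B})$. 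Taking $\mathcal{B}=\mathcal{A}\otimes\mathcal{K}$ and composing with the preceding step completes the argument.

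The only substantive input is the Connes--Thom isomorphism itself, which is a genuinely deep theorem and would be the main obstacle if one were proving the corollary from scratch; granting it, the remaining work is essentially bookkeeping. A minor point to verify is that the iterated Packer--Raeburn decomposition really yields untwisted crossed products at each stage, which is automatic because the relevant Borel cocycles on the contractible group $\mathbb{R}$ are cohomologically trivial; alternatively, one could repeatedly invoke Theorem \ref{theorem:packerraeburnstabilisationtrick} to absorb any residual twist into a further $\mathcal{K}$-stabilisation without affecting $K$-theory.
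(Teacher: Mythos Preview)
Your proposal is correct and uses essentially the same ingredients as the paper's proof: the Packer--Raeburn stabilisation trick, the Packer--Raeburn decomposition, stability of $K$-theory, and the Connes--Thom isomorphism. The only difference is the order of operations: you stabilise once upfront to untwist the full $\mathbb{R}^d$-action and then iterate Connes--Thom on the resulting untwisted system, whereas the paper first decomposes $\mathcal{A}\rtimes_{(\alpha,\sigma)}\mathbb{R}^d$ as an iterated crossed product by $\mathbb{R}$ (each factor carrying its own twisting pair $(\alpha_i,\sigma_i)$) and then untwists each factor separately via stabilisation before applying Connes--Thom $d$ times. Your ordering is arguably cleaner, since a single stabilisation suffices and the subsequent decomposition of the untwisted $\mathbb{R}^d$-action is manifestly untwisted at every stage; the paper's alternative of re-stabilising at each step is exactly what you mention in your final sentence.
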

\begin{proof}
Iterating Theorem \ref{theorem:packerraeburndecomposition} yields
\begin{eqnarray}
\mathcal{A}\rtimes_{(\alpha,\sigma)}\mathbb{R}^d &\cong &(\mathcal{A}\rtimes_{(\alpha_1,\sigma_1)}\mathbb{R})\rtimes_{(\alpha_2,\sigma_2)}\mathbb{R}^{d-1}\nonumber\\
&\vdots &\nonumber\\
&\cong &\left(\ldots\left((\mathcal{A}\rtimes_{(\alpha_1,\sigma_1)}\mathbb{R})\rtimes_{(\alpha_2,\sigma_2)}\mathbb{R}\right)\rtimes_{(\alpha_3,\sigma_3)}\ldots\right)\rtimes_{(\alpha_d,\sigma_d)}\mathbb{R},\nonumber
\end{eqnarray}
for some sequence of twisting pairs $(\alpha_i,\sigma_i),\,i=1,\ldots,d$. Then, Theorem \ref{theorem:packerraeburnstabilisationtrick} says that we can untwist each of the iterated crossed products, provided we stabilise the algebras. Since $K$-theory is invariant under stabilisation, we have
\begin{eqnarray}
	K_n(\mathcal{A}\rtimes_{(\alpha,\sigma)}\mathbb{R}^n)&\cong & K_n\left(\left(\ldots\left((\mathcal{A}\rtimes_{(\alpha'_1,1)}\mathbb{R})\rtimes_{(\alpha'_2,1)}\mathbb{R}
	\right)\rtimes_{(\alpha'_3,1)}\ldots\right)\rtimes_{(\alpha'_d,1)}\mathbb{R}\right)\nonumber\\
	&\cong & K_{n-d}(\mathcal{A}),\nonumber
\end{eqnarray}
so the effect on $K$-theory is to lower the index by $d$.\qed
\end{proof}

\begin{theorem}[Connes--Thom isomorphism \cite{connes1981analogue,kasparov1995k,schroder1993k}]\label{theorem:connesthom}
Let $(\mathbb{R},\mathcal{A},\alpha)$ be a $C^*$-dynamical system, with $\mathcal{A}$ a real or complex (ungraded) $C^*$-algebra. Then \mbox{$K_n(\mathcal{A}\rtimes_{(\alpha,1)}\mathbb{R})\cong K_{n-1}(\mathcal{A})$}.
\end{theorem}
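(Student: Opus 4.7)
The plan is to follow Connes' original strategy, which proceeds in two distinct stages: first establish the isomorphism for trivial actions, then reduce the general case to the trivial one via a deformation/KK-theoretic argument.

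First I would handle the trivial action case $\alpha \equiv \mathrm{id}$. Here the crossed product decouples as a tensor product $\mathcal{A} \rtimes_{(\mathrm{id},1)} \mathbb{R} \cong \mathcal{A} \otimes C^*(\mathbb{R})$. Since $\mathbb{R}$ is abelian, Fourier transform gives $C^*(\mathbb{R}) \cong C_0(\hat{\mathbb{R}}) \cong C_0(\mathbb{R})$ (and an analogous real $C^*$-algebra isomorphism in the real case, reminiscent of \eqref{correctrealfourierisomorphism}). The conclusion is then just the standard suspension isomorphism $K_n(\mathcal{A} \otimes C_0(\mathbb{R})) \cong K_{n-1}(\mathcal{A})$, obtainable from the six-term exact sequence of $0 \to C_0(\mathbb{R}) \to C_0([0,\infty)) \to \mathbb{C} \to 0$ using contractibility of $C_0([0,\infty))$, or equivalently from Bott periodicity.

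Second, I would reduce an arbitrary action $\alpha$ to the trivial case. The natural tool is the one-parameter family of $C^*$-dynamical systems $(\mathbb{R},\mathcal{A},\alpha^{(s)})$ for $s\in[0,1]$, where $\alpha^{(s)}_t \coloneqq \alpha_{st}$. This assembles into a continuous field of $C^*$-algebras $\{\mathcal{A}\rtimes_{\alpha^{(s)}}\mathbb{R}\}_{s\in[0,1]}$, whose fibres at $s=0$ and $s=1$ realise the trivial and original crossed products respectively. If the evaluations at the two endpoints induce isomorphisms on $K$-theory, combining with the trivial case finishes the proof. A cleaner formulation uses Kasparov's $KK$-theoretic machinery: construct a canonical class $\mathfrak{t}_\alpha \in KK^1(\mathcal{A},\, \mathcal{A}\rtimes_\alpha\mathbb{R})$ (the ``Thom element''), show by continuity that $\mathfrak{t}_{\alpha^{(s)}}$ varies continuously in $s$, and verify invertibility at $s=0$ using Step 1.

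The main obstacle is proving the homotopy invariance of $K$-theory along the family $\alpha^{(s)}$ — this is not automatic for continuous fields, since $K$-theory is only a priori homotopy-invariant under continuous paths of $*$-homomorphisms, not arbitrary continuous fields. One can sidestep this via Kasparov's trick: construct $\mathfrak{t}_\alpha$ explicitly using an unbounded operator (essentially $-i\partial_t$) on the Hilbert module completion of $C_c(\mathbb{R},\mathcal{A})$, and exhibit an inverse $\hat{\mathfrak{t}}_\alpha$ using Takai duality $(\mathcal{A}\rtimes_\alpha\mathbb{R})\rtimes_{\hat{\alpha}}\hat{\mathbb{R}} \cong \mathcal{A}\otimes\mathcal{K}(L^2(\mathbb{R}))$, which relates the dual action back to a second Thom element. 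The Kasparov products $\mathfrak{t}_\alpha \otimes \hat{\mathfrak{t}}_\alpha$ and the reverse are then shown to be the identity classes, possibly after another deformation through the family $\alpha^{(s)}$ so that one reduces the identification to the trivial-action case where everything is explicit. The real case requires minor additional care to ensure the construction respects the real structure, but the strategy is identical.
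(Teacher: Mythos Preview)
The paper does not give its own proof of this theorem; it is simply cited from the literature \cite{connes1981analogue,kasparov1995k,schroder1993k} and then used as a black box (notably inside Corollary~\ref{corollary:dimension shifts}). So there is no in-paper argument against which to compare your proposal.

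That said, your outline is a faithful sketch of the standard proofs in the cited references: Connes' original argument proceeds exactly via the rescaling homotopy $\alpha^{(s)}_t=\alpha_{st}$ together with an axiomatic characterisation of the map, and the later $KK$-theoretic version constructs the Thom class and its inverse via Takai duality as you describe. One point worth sharpening in the real case: the Fourier isomorphism gives $\mathbb{R}\rtimes\mathbb{R}\cong C_0(\im\hat{\mathbb{R}})$ rather than $C_0(\mathbb{R},\mathbb{R})$, so in the trivial-action step you are invoking $K_n(\bar{S}\mathcal{A})\cong K_{n-1}(\mathcal{A})$ (the ``inverse'' suspension of Appendix~\ref{appendix:ktheory}), not the ordinary real suspension $S$, which would shift the index the wrong way. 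Your parenthetical remark hints at this, but the distinction is exactly the one the paper emphasises elsewhere, so it deserves to be made explicit.
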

Remarkably, Theorem \ref{theorem:connesthom} holds for any continuous $\alpha:\mathbb{R}\rightarrow \mathrm{Aut}_\mathbb{F}(\mathcal{A})$.

\begin{lemma}\label{lemma:pm1cocycle}
Let $\sigma$ be a generalised $\mathrm{U}(1)$-valued 2-cocycle for $(G,\phi)$ (i.e.\ $\sigma$ satisfies equation \eqref{generalised2cocycle}) and suppose there is an element $w\in G$ with $\phi(w)=-1$. Then the restriction of $\sigma$ to the centraliser $\mathcal{Z}_G(w)$ of $w$ in $G$ is equivalent to one which takes only $\ztwo$ values.
\end{lemma}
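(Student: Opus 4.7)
The plan is to find a Borel $1$-cochain $\lambda \colon \mathcal{Z}_G(w) \to \mathrm{U}(1)$ such that the equivalent cocycle
\[\sigma'(x,y) \coloneqq \lambda(x)\lambda(y)^x \sigma(x,y)\lambda(xy)^{-1}\]
satisfies $\sigma'(x,y)^2 = 1$ for all $x,y \in \mathcal{Z}_G(w)$. Since this amounts to demanding that $\sigma^2$, restricted to the centraliser $H \coloneqq \mathcal{Z}_G(w)$, equals the coboundary $x,y \mapsto \lambda(x)^{-2}\lambda(y)^{-2x}\lambda(xy)^2$, the task reduces to exhibiting a Borel coboundary for $\sigma^2|_H$ and then extracting a Borel square root.

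The natural candidate for the cobounding $1$-cochain is the ``commutator'' $\mu \colon H \to \mathrm{U}(1)$ defined by $\mu(x) \coloneqq \sigma(x,w)\sigma(w,x)^{-1}$, which records the scalar by which a representative of $x$ fails to commute with a representative of $w$. The central calculation is the identity
\[\sigma(x,y)^2 = \mu(x)\,\mu(y)^x\,\mu(xy)^{-1}, \qquad x,y \in H.\]
I would derive this by applying the generalised $2$-cocycle relation \eqref{generalised2cocycle} to the three triples $(w,x,y)$, $(x,w,y)$ and $(x,y,w)$. The first two triples, combined using $xw = wx$ to identify $\sigma(wx,y) = \sigma(xw,y)$, give $\mu(x)\,\overline{\sigma(x,y)}\,\sigma(w,xy) = \sigma(w,y)^x \sigma(x,wy)$, where the complex conjugate arises because $\phi(w) = -1$ forces $\sigma(x,y)^w = \overline{\sigma(x,y)}$. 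The third triple, after rewriting $\sigma(xy,w) = \mu(xy)\sigma(w,xy)$ and $\sigma(y,w) = \mu(y)\sigma(w,y)$ and using $yw = wy$ to replace $\sigma(x,yw)$ by $\sigma(x,wy)$, yields $\mu(xy)\sigma(x,y)\sigma(w,xy) = \mu(y)^x\sigma(w,y)^x\sigma(x,wy)$. Dividing these two identities cancels $\sigma(w,y)^x\sigma(x,wy)$ and leaves $\sigma(x,y)/\overline{\sigma(x,y)} = \sigma(x,y)^2$ (using $|\sigma|=1$) on one side and $\mu(x)^{-1}\mu(y)^x\mu(xy)$ on the other, as required.

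To conclude, I would pick a Borel section $s \colon \mathrm{U}(1) \to \mathrm{U}(1)$ of the squaring map (e.g.\ $e^{\im\vartheta}\mapsto e^{\im\vartheta/2}$ for $\vartheta \in [0,2\pi)$) and set $\lambda(x) \coloneqq s(\mu(x)^{-1})$, which is Borel because $\mu$ is. A direct substitution gives
\[\sigma'(x,y)^2 = \lambda(x)^2\,\lambda(y)^{2x}\,\sigma(x,y)^2\,\lambda(xy)^{-2} = \mu(x)^{-1}\mu(y)^{-x}\cdot\mu(x)\mu(y)^x\mu(xy)^{-1}\cdot\mu(xy) = 1,\]
so $\sigma'$ is $\ztwo$-valued on $H$, and $\sigma'$ is exterior equivalent to $\sigma|_H$ by construction.

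The main obstacle is the middle paragraph: the coboundary identity is a purely algebraic consequence of \eqref{generalised2cocycle}, but the three-triple expansion generates many $\sigma$-terms and the conjugations induced by $\phi(w)=-1$ must be tracked carefully. Once that identity is in hand, the Borel square root step and the final verification are formal.
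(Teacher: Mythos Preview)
Your proof is correct and essentially the same as the paper's: your cochain $\mu(x)=\sigma(x,w)\sigma(w,x)^{-1}$ is precisely the inverse of the paper's $\varsigma(w,x)=\sigma(w,x)/\sigma(xwx^{-1},w)$ on the centraliser, and both arrive at the coboundary identity $\sigma(x,y)^2=\mu(x)\mu(y)^x\mu(xy)^{-1}$ before taking a Borel square root; the only difference is that the paper packages the three cocycle instances as operator identities $\theta_w\theta_y\theta_w^{-1}=\varsigma(w,y)\theta_y$ in a PUA-rep rather than manipulating $\sigma$ directly. One small slip: in your division step you wrote $\mu(x)^{-1}\mu(y)^x\mu(xy)$, but dividing your two displayed identities actually gives $\mu(x)\mu(y)^x\mu(xy)^{-1}$, matching the target you correctly stated at the outset.
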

\begin{proof}
Let $\varsigma(x,y)\coloneqq\sigma(x,y)/\sigma(xyx^{-1},x)$, so that $\theta_x\theta_y\theta_x^{-1}=\varsigma(x,y)\theta_{xyx^{-1}}$ is an identity in any PUA-rep $\theta$ of $(G,\phi,\sigma)$, and let $y,z\in\mathcal{Z}_G(w)$. Then
\begin{eqnarray}
	\theta_w\theta_y\theta_w^{-1}\theta_w\theta_z\theta_w^{-1}&=& \varsigma(w,y)\theta_{wyw^{-1}}\varsigma(w,z)\theta_{wzw^{-1}}\nonumber\\
	&=& \varsigma(w,y)\varsigma(w,z)^y\theta_y\theta_z\nonumber\\
	&=& \varsigma(w,y)\varsigma(w,z)^y\sigma(y,z)\theta_{yz}.\label{pm1cocycleproofequation1}
\end{eqnarray}
The left-hand-side of \eqref{pm1cocycleproofequation1} can also be written as
\begin{equation}
\theta_w\theta_y\theta_z\theta_w^{-1}=\sigma(y,z)^w\theta_w\theta_{yz}\theta_w^{-1}=
\sigma(y,z)^{-1}\varsigma(w,yz)\theta_{yz},\nonumber
\end{equation}
so $\sigma(y,z)^{-2}=\varsigma(w,y)\varsigma(w,z)^y\varsigma(w,yz)^{-1}$ is the coboundary of the function $\lambda:y\mapsto\varsigma(w,y)$. The function $\lambda^{\frac{1}{2}}$ acts on $\sigma$ to give an equivalent 2-cocycle $\sigma_w$, i.e.\ $\sigma_w(y,z)\coloneqq \{\lambda(y)\lambda(z)^y\lambda(yz)^{-1}\}^{\frac{1}{2}}\sigma(y,z)$, which corresponds to the phase modification $\theta_y\mapsto\lambda^{\frac{1}{2}} \theta_y$ in terms of PUA-reps. Then
\begin{equation}
	\sigma_w(y,z)^2=\frac{\lambda(y)\lambda(z)^y}{\lambda(yz)}\sigma(y,z)^2=1,
\end{equation}
and the new 2-cocyle $\sigma_w$ is $\ztwo$-valued when restricted to $\mathcal{Z}_G(w)$.
\end{proof}

We can now state the main result of this paper. In Theorem \ref{theorem:generaldimensionshift}, $\mathcal{B}^\mathrm{ev}$ is a purely even complex $C^*$-algebra if $\phi\equiv 1$, or the complexification of a purely even real $C^*$-algebra $\mathcal{B}^\mathrm{ev}=\mathcal{B}^\mathrm{ev}_\mathbb{R}\otimes_\mathbb{R}\mathbb{C}$ if $\phi\not\equiv 1$. The automorphism $\alpha_x, x\in \tilde{G}$ is either complex conjugation or the identity according to $\phi(x)=\pm 1$.

\begin{theorem}[General periodicity and dimension shift theorem]\label{theorem:generaldimensionshift}
Let $\tilde{\mathcal{A}}$ be the graded twisted crossed product for the twisted dynamical system $(\tilde{G},c,\mathcal{B}^\mathrm{ev},\alpha,\sigma)$. Assume that $\tilde{G}=\tilde{G}_0\times A$ where $\tilde{G}_0=\mathrm{ker}(\phi,c)$, and that $\sigma$ is $\mathrm{U}(1)$-valued and trivial between elements of $\tilde{G}_0$ and $A$. Suppose $\tilde{G}_0$ is an extension of a group $G_0$ by $\mathbb{R}^d$. Let $\mathcal{A}^\mathrm{ev}_{(\mathbb{R})}$ be the (even) twisted crossed product associated with the subsystem $(G_0,\mathcal{B}^\mathrm{ev}_{(\mathbb{R})},1,\sigma)$. Then
\begin{equation}
	\mathbf{K}_0(\tilde{\mathcal{A}})\cong \left\{\begin{alignedat}{2}
	& K_{s-r-d}(\mathcal{A}^\mathrm{ev}_\mathbb{R}),&&\qquad \phi\equiv 1,\\ & K_{n-d}(\mathcal{A}^\mathrm{ev}),&&\qquad \phi\not\equiv 1,\end{alignedat}\right.\label{generaldimensionshift}
\end{equation}
where $(r,s)$ or $n$ is determined by the Clifford algebra associated with $(A,c,\phi,\sigma)$.
\end{theorem}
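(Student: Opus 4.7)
The plan is to factorise $\tilde{\mathcal{A}}$ in two independent steps: first peel off the direct-product factor $A$, which contributes the Clifford algebra responsible for the $(r,s)$ or $n$ shift, and then absorb the normal $\mathbb{R}^d\subset \tilde{G}_0$ into a dimension shift by $d$.

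First I would apply the Packer--Raeburn decomposition (Theorem~\ref{theorem:packerraeburndecomposition}) to $\tilde{\mathcal{A}}$ with respect to the normal subgroup $\tilde{G}_0=\mathrm{ker}(\phi,c)$. Since $\tilde{G}=\tilde{G}_0\times A$ is a direct product and $\sigma$ is trivial between the factors, the natural section $A\hookrightarrow\tilde{G}_0\times A$ is a group homomorphism. The induced automorphisms $\beta_r$, $r\in A$, on $\mathcal{B}^\mathrm{ev}\rtimes_{(1,\sigma)}\tilde{G}_0$ then act only on the $\mathbb{C}$ factor of $\mathcal{B}^\mathrm{ev}$ (as the identity or complex conjugation according to $\phi(r)$) and fix the embedded copy of $\tilde{G}_0$ pointwise, while the induced cocycle reduces to $\sigma\vert_A$. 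Exactly as in Example~\ref{subsection:decompositionCliffordbundles}, the outer $A$-crossed product then factors as a graded tensor product with the appropriate Clifford algebra:
\begin{equation*}
	\tilde{\mathcal{A}}\cong \bigl(\mathcal{B}^\mathrm{ev}_{(\mathbb{R})}\rtimes_{(1,\sigma)}\tilde{G}_0\bigr)\,\hat{\otimes}\,\mathscr{C},\qquad\mathscr{C}=\mathbb{C}l_n\text{ or }Cl_{r,s},
\end{equation*}
with $(r,s)$ or $n$ read off from Table~\ref{table:ctcliffordalgebras}; the two purely even cases $A=\{1,T\}$ require the graded Morita replacement of Remark~\ref{gradedversusungradedclifford} to fit this pattern without affecting $K$-theory. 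Applying the definition $\mathbf{K}_n(\cdot)\coloneqq\mathbf{K}_0(\cdot\,\hat{\otimes}\,\mathbb{C}l_n)$ (and its real analogue) together with the isomorphism $\mathbf{K}_0\cong K_0$ for purely even algebras reduces the whole computation to
\begin{equation*}
	\mathbf{K}_0(\tilde{\mathcal{A}})\cong K_m\bigl(\mathcal{B}^\mathrm{ev}_{(\mathbb{R})}\rtimes_{(1,\sigma)}\tilde{G}_0\bigr),\qquad m\in\{n,s-r\}.
\end{equation*}

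Next I would absorb $\mathbb{R}^d\subset\tilde{G}_0$ into a further degree shift. A second application of Packer--Raeburn to the exact sequence $1\to\mathbb{R}^d\to\tilde{G}_0\to G_0\to 1$ rewrites this inner algebra as an iterated twisted crossed product with $\mathbb{R}^d$ on the innermost level and $G_0$ on the outside. I would then iterate along a normal filtration of $\mathbb{R}^d$ by copies of $\mathbb{R}$: at each step the Packer--Raeburn stabilisation trick (Theorem~\ref{theorem:packerraeburnstabilisationtrick}) untwists the innermost $\mathbb{R}$-crossed product, and the Connes--Thom isomorphism (Theorem~\ref{theorem:connesthom}) lowers the $K$-theory degree by one. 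After $d$ iterations, and recognising $\mathcal{A}^\mathrm{ev}_{(\mathbb{R})}=\mathcal{B}^\mathrm{ev}_{(\mathbb{R})}\rtimes_{(1,\sigma)}G_0$, the desired isomorphism \eqref{generaldimensionshift} follows.

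The hard part will be this last step, since Corollary~\ref{corollary:dimension shifts} as stated only treats the case where there is no outer group; to push a Connes--Thom shift through the outer $G_0$-crossed product one needs the equivariant refinement, a Kasparov descent of Connes--Thom asserting $K_m((\mathcal{B}\rtimes\mathbb{R})\rtimes G_0)\cong K_{m-1}(\mathcal{B}\rtimes G_0)$ whenever the relevant actions are compatible \cite{kasparov1995k}. The amenability assumption on $\tilde{G}$ ensures that full and reduced crossed products agree throughout, so no reduced-versus-full ambiguity complicates the descent. A secondary bookkeeping point, handled by the fact that Theorem~\ref{theorem:connesthom} places no constraint on the $\mathbb{R}$-action beyond continuity, is that the outer twisting pair $(\beta,\nu)$ produced by Packer--Raeburn at each iteration presents no obstruction once stabilisation is performed before the Connes--Thom step.
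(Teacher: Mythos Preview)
Your two-step plan (peel off the Clifford factor from $A$, then absorb the $\mathbb{R}^d$) is exactly the paper's strategy, but there are two points where you diverge from what the paper does, and both matter.

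First, in the real case ($\phi\not\equiv 1$) you assert that after decomposing over $\tilde{G}_0$ the outer $A$-automorphisms act only on the $\mathbb{C}$ factor of $\mathcal{B}^\mathrm{ev}$, and hence the inner algebra splits as $\bigl(\mathcal{B}^\mathrm{ev}_\mathbb{R}\rtimes_{(1,\sigma)}\tilde{G}_0\bigr)\otimes_\mathbb{R}\mathbb{C}$. This is not automatic: the hypothesis only says $\sigma$ is trivial \emph{between} $\tilde{G}_0$ and $A$, while $\sigma$ restricted to $\tilde{G}_0$ may take arbitrary $\mathrm{U}(1)$ values. In that case complex conjugation (from an antiunitary $r\in A$) also conjugates those cocycle values, so $\im$ is not decoupled from the $\tilde{G}_0$-crossed product and the real splitting fails. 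The paper addresses this with Lemma~\ref{lemma:pm1cocycle}: since $\tilde{G}_0\subset\mathcal{Z}_{\tilde{G}}(w)$ for any $w\in A$ with $\phi(w)=-1$, the cocycle may be replaced by an equivalent $\ztwo$-valued one on $\tilde{G}_0$, after which your factorisation goes through. You should invoke this lemma explicitly.

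Second, and more structurally, the paper reads the extension the other way: $G_0$ is the \emph{normal} subgroup of $\tilde{G}_0$ and $\mathbb{R}^d$ is the \emph{quotient}. Packer--Raeburn then gives
\[
\mathcal{B}^\mathrm{ev}_{(\mathbb{R})}\rtimes_{(1,\sigma)}\tilde{G}_0\;\cong\;\bigl(\mathcal{B}^\mathrm{ev}_{(\mathbb{R})}\rtimes_{(1,\sigma)}G_0\bigr)\rtimes_{(\beta,\nu)}\mathbb{R}^d\;=\;\mathcal{A}^\mathrm{ev}_{(\mathbb{R})}\rtimes_{(\beta,\nu)}\mathbb{R}^d,
\]
with $\mathbb{R}^d$ on the \emph{outside}. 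Corollary~\ref{corollary:dimension shifts} then applies directly and yields the $d$-shift with no further work. Your ordering, with $\mathbb{R}^d$ innermost and $G_0$ outermost, forces you into the ``hard part'' you flag---pushing Connes--Thom through an outer crossed product via an equivariant/Kasparov-descent argument---which the paper simply avoids. So the difficulty you identify is self-inflicted: reverse the decomposition order and it disappears.
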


\begin{proof}
In the $\phi\not\equiv 1$ case, there is a $w\in A$ such that $\phi(w)=-1$. By Lemma \ref{lemma:pm1cocycle}, we may assume that $\sigma$ is $\ztwo$-valued on $\tilde{G}_0\subset\mathcal{Z}_{\tilde{G}}(w)$, so the imaginary unit $\im\in\mathcal{B}^\mathrm{ev}$ is completely decoupled from the sub-dynamical system $(\tilde{G}_0,\mathcal{B}^\mathrm{ev}_\mathbb{R},1,\sigma)$. Thus $\tilde{A}$ has the form
\begin{equation}
	\tilde{\mathcal{A}}\cong\left(\mathcal{A}^\mathrm{ev}_\mathbb{R}\rtimes_{(\beta,\nu)}\mathbb{R}^d\right)\hat{\otimes}\,Cl_{r,s}\nonumber
\end{equation}
for some $Cl_{r,s}$ determined by $(A,c,\phi,\sigma)$ as in Section \ref{section:cliffordtenfoldway}, and some twisting pair $(\beta,\nu)$ arising from the $\mathbb{R}^d$ extension of $G_0$ as in Theorem \ref{theorem:packerraeburndecomposition}. Similarly, a complex Clifford algebra $\mathbb{C}l_n$ can be factorised in the $\phi\equiv 1$ case, giving $\tilde{\mathcal{A}}\cong\left(\mathcal{A}^\mathrm{ev}\rtimes_{(\beta,\nu)}\mathbb{R}^d\right)\hat{\otimes}\,\mathbb{C}l_n$. Using the dimension shift in $\mathbf{K}_\star$ effected by tensoring with a Clifford algebra, along with $\mathbf{K}_\star(\mathcal{A}^\mathrm{ev}_{(\mathbb{R})})\cong K_\star(\mathcal{A}^\mathrm{ev}_{(\mathbb{R})})$ and Corollary \ref{corollary:dimension shifts}, we arrive at \eqref{generaldimensionshift}.
\end{proof}

In Theorem \ref{theorem:generaldimensionshift}, $\mathcal{B}^\mathrm{ev}$ could be an algebra used to model disorder \cite{bellissard1994noncommutative}, and $(\tilde{G},c,\mathcal{B}^\mathrm{ev},\alpha,\sigma)$ can be interpreted as the full set of symmetry data for the gapped dynamics in question. Even when the additional $\mathbb{R}^d$ symmetries are realised projectively (e.g.\ as \emph{magnetic} translations in the presence of a constant magnetic field as in the Integer Quantum Hall Effect), equation \eqref{generaldimensionshift} continues to hold. Therefore, under fairly general assumptions about additional $\mathbb{R}^d$ symmetries and disorder, the difference-group for the symmetry algebra becomes that for $d=0$, provided we \emph{lower} the $K$-theory index by $d$.

Some other arguments for the dimension shift phenomenon are model-dependent and utilise suspensions \cite{kitaev2009periodic,stone2011symmetries}, which only works if the extra $\mathbb{R}^d$ symmetries are assumed to enter in a trivial way. Crucially, the ordinary suspension operation \emph{raises} rather than lowers the index in $K$-theory. This is not a problem in the complex case because Bott periodicity is 2-periodic there, but it matters greatly in the real case where there are in fact two distinct notions of suspension (see Appendix \ref{appendix:ktheory}). A restricted notion of degree shifts can be explained by using the correct type of suspension\footnote{In the topological approach, $KR$-theory does admit two types of suspensions, which are related to $S$ and $\bar{S}$ in real operator $K$-theory (see Appendix \ref{appendix:ktheory}). However, Real vector bundles in $KR$-theory are not directly models for gapped phases, and an auxiliary construction is required \cite{freed2013twisted}.}, and is a special case of our general result.

There is a weaker discretised version of the dimension shift phenomenon. To understand this, we first consider the simplest example of the graded twisted group $C^*$-algebra of $G=\mathbb{Z}^d\times A$, where $\mathbb{Z}^d$ acts trivially on $\mathbb{C}$, and $\sigma$ is only non-trivial between elements of $A$. Then 
\begin{equation}
\mathcal{A}=\mathbb{C}\rtimes_{(\alpha,\sigma)}(\mathbb{Z}^d\times A)\cong\begin{cases} (\mathbb{R}\rtimes\mathbb{Z}^d)\hat{\otimes}Cl_{r,s}\cong C(\im\mathbb{T}^d)\hat{\otimes}Cl_{r,s}\quad & \text{real case},\,\\
(\mathbb{C}\rtimes\mathbb{Z}^d)\hat{\otimes}\mathbb{C}l_n\cong C(\mathbb{T}^d,\mathbb{C})\hat{\otimes}\mathbb{C}l_n & \text{complex case},\end{cases}\nonumber
\end{equation}
where $C(\im\mathbb{T}^d)$ is defined in Appendix \ref{appendix:ktheory}. The $\mathbf{K}_0(\mathcal{A})$ groups reduce to ordinary $K$-theory groups, and when $d=1$, we can use \eqref{ktheorycrossedproductwithzreal}-\eqref{ktheorycrossedproductwithzcomplex} to obtain
\begin{equation}
	\mathbf{K}_0(\mathbb{C}\rtimes_{(\alpha,\sigma)}(\mathbb{Z}\times A))\cong\begin{cases}K_{s-r}(\mathbb{R})\oplus K_{s-r-1}(\mathbb{R})\quad &\text{real case},\\
K_n(\mathbb{C})\oplus K_{n-1}(\mathbb{C})\quad &\text{complex case}.	\end{cases}\label{discretedimensionshift1}
\end{equation}
Thus, a trivial crossed product with $\mathbb{Z}$ results in $\mathbf{K}_0(\cdot)$ acquiring an extra $K$-theory group with index \emph{lowered} by 1. As in Theorem \ref{theorem:generaldimensionshift}, we can replace $\mathbb{C}$ by $\mathcal{B}^\mathrm{ev}$, but it will then be necessary to assume that $\mathcal{B}^\mathrm{ev}$ is trivially acted upon by $\mathbb{Z}$. Assuming this and using \eqref{ktheorycrossedproductwithzreal}-\eqref{ktheorycrossedproductwithzcomplex} repeatedly, we obtain
\begin{equation}
	\mathbf{K}_0\left(\mathcal{B}^\mathrm{ev}\rtimes_{(\alpha,\sigma)}(\mathbb{Z}^d\times A)\right)\cong\begin{cases}\bigoplus\limits_{k=0}^d \binom{d}{k}K_{s-r-k}(\mathcal{B}_\mathbb{R}^\mathrm{ev})\quad &\text{real case},\\
\bigoplus\limits_{k=0}^d \binom{d}{k}K_{n-k}(\mathcal{B}^\mathrm{ev})\quad &\text{complex case}.	\end{cases}\label{differencegroupcrossedproductwithz}
\end{equation}
Note that there is always a single extra $K_{s-r-d}(\mathcal{B}_\mathbb{R}^\mathrm{ev})$ or $K_{n-d}(\mathcal{B}^\mathrm{ev})$ factor, just as in the case of a crossed product with $\mathbb{R}^d$. Equation \eqref{differencegroupcrossedproductwithz} generalises the topological $KR$-theory calculations of Kitaev in equations 25--27 of \cite{kitaev2009periodic}. In particular, if we take $A=\{1,T\},\,\mathsf{T}^2=-1,\, \mathcal{B}^\mathrm{ev}=\mathbb{C}$ and $d=3$, we obtain the difference-group for 3D $T$-invariant insulators,
\begin{eqnarray}
	\mathbf{K}_0(\mathbb{C}\rtimes_{(\alpha,\sigma)}(\mathbb{Z}^d\times \{1,T\}))&\cong & \bigoplus\limits_{k=0}^3 \binom{3}{k}K_{4-k}(\mathbb{R})\nonumber\\
	&=&K_4(\mathbb{R})\oplus\, 3K_3(\mathbb{R})\oplus\,3 K_2(\mathbb{R})\oplus\, K_1(\mathbb{R})\nonumber\\
	&=&\mathbb{Z}\oplus 4\mathbb{Z}_2,\nonumber
\end{eqnarray}
a result obtained by $KR$-theory methods\footnote{Note that we are in the Class AII situation, and $\mathbb{C}\rtimes_{(\alpha,\sigma)}(\mathbb{Z}^d\times \{1,T\})$ is purely even. Thus its difference-group coincides its ordinary $K$-theory group.} in Theorem 11.14 of \cite{freed2013twisted}. We stress that the expression \eqref{differencegroupcrossedproductwithz} assumes some specific structure of the symmetry data $(G,c,\phi,\sigma)$, in particular, the way in which $\mathbb{Z}^d$ sits inside $G$. \emph{These assumptions do \emph{not} hold when there is spatial inversion symmetry, which acts on $\mathbb{Z}^d$ non-trivially.}

\begin{table}
\begin{center}
\begin{tabular}{l | l | l  l| c c c c}
	\multirow{2}{*}{$n$}	&	\multirow{2}{*}{\parbox[c]{1.4cm}{Cartan label}} & \multirow{2}{*}{$\mathsf{C}^2$} & \multirow{2}{*}{$\mathsf{T}^2$}  &	\multicolumn{4}{c}{$\mathbf{K}_0(\mathcal{A})\cong K_{n-d}(\mathbb{R})$ or $K_{n-d}(\mathbb{C})$} \\\cline{5-8}
	& & & & $d=0$ & $d=1$& $d=2$& $d=3$ \\
	\hline
	$0$	&	$AI$	&		& $+1$	& $\mathbb{Z}$ 	& $0$			& $0$			& $0$	\\
	$1$	&	$BDI$	& $+1$ 	& $+1$	& $\mathbb{Z}_2$& $\mathbb{Z}$ 	& $0$			& $0$	\\	
	$2$	&	$D$		& $+1$ 	&		& $\mathbb{Z}_2$& $\mathbb{Z}_2$& $\mathbb{Z}$	& $0$	\\	
	$3$	&	$DIII$	& $+1$ 	& $-1$	& $0$ 			& $\mathbb{Z}_2$& $\mathbb{Z}_2$	& $\mathbb{Z}$\\	
	$4$	&	$AII$	&    	& $-1$	& $\mathbb{Z}$ 	& $0$			& $\mathbb{Z}_2$& $\mathbb{Z}_2$\\	
	$5$	&	$CII$	& $-1$ 	& $-1$	& $0$ 			& $\mathbb{Z}$  & $0$			& $\mathbb{Z}_2$\\
	$6$	&	$C$		& $-1$ 	&		& $0$			& $0$ 			& $\mathbb{Z}$ 	& $0$	\\	
	$7$	&	$CI$	& $-1$ 	& $+1$	& $0$ 			& $0$			& $0$			& $\mathbb{Z}$\\
	
	\hline\hline
	$0$	&	$A$		&\multicolumn{2}{c|}{N/A}& $\mathbb{Z}$ 	& $0$			& $\mathbb{Z}$	& $0$\\
	$1$	&	$AIII$	&\multicolumn{2}{c|}{$\mathsf{S}^2=+1$}& $0$			& $\mathbb{Z}$	& $0$			& $\mathbb{Z}$\\
\end{tabular}\label{table:periodictable}
\caption{Periodic Table of difference-groups for gapped topological phases in $d$ dimensions \emph{with only $CT$-type symmetries}, showing how they are related to the $K$-theory groups of a point (first column). The $K$-theory degree shifts in both the vertical and horizontal directions are direct manifestations of isomorphisms in $K$-theory (Theorem \ref{theorem:generaldimensionshift}. In the vertical direction, it is due to the effect on $\mathbf{K}_0(\mathcal{A})$ of tensoring with a Clifford algebra; in the horizontal direction, it is due to the Connes--Thom isomorphism and the stabilisation and decomposition theorems of Packer--Raeburn. The twofold and eightfold periodicities are due to Bott periodicity in the Clifford algebras and $K$-theory.}
\end{center}
\end{table}

We have treated all symmetries on an equal footing in this paper: they include time/charge preserving/reversing symmetries, projectively-realised symmetries, $\mathbb{Z}^d$-symmetries underlying band insulators, and $\mathbb{R}^d$ translations in extra spatial dimensions. Furthermore, Theorem \ref{theorem:generaldimensionshift} shows that the phenomenon of ``dimension shifts'' is very robust, and does not depend on the details of how the extra dimensions come into play. Table \ref{table:periodictable} summarises the periodicities in the difference-groups, \emph{in the special case where $G$ is a $CT$-subgroup}. The groups appearing there are the same as those in various Periodic Tables in the literature \cite{kitaev2009periodic,schnyder2008classification,ryu2010topological}, \emph{but their physical interpretation is very different}.

Despite the projectively-realised $\mathbb{R}^2$ symmetry in the physically important case of the Integer Quantum Hall Effect, it fits nicely into our version of the Periodic Table. We also see, in a model-independent way, why time-reversal symmetry needs to be broken (by a magnetic field or otherwise) in order to exhibit a quantized Hall conductivity. For Type $I$ insulators in two dimensions with $\mathsf{T}^2=+1$ (assuming spin-$0$), the relevant $K$-theory group is trivial; however, a $\mathbb{Z}_2$-invariant is possible if the spin-$\frac{1}{2}$ nature of electrons are taken into account, so that $\mathsf{T}^2=-1$.

\section*{Acknowledgements}
The author wishes to thank Keith Hannabuss for his advice and the proof of Lemma \ref{lemma:pm1cocycle}, Christopher Douglas and Thomas Wasserman for helpful discussions, as well as the Clarendon Fund and Balliol College for financial support. He also acknowledges helpful discussions with Hermann Schulz--Baldes, Gregory Moore, Johannes Kellendonk, Emil Prodan, Martin Zirnbauer, Giuseppe de Nittis, Shinsei Ryu, and Terry Loring, as well as the organisers of the \emph{Topological Phases of Quantum Matter} workshop at the Erwin Schr\"{o}dinger Institute where part of this work was done.

\appendix
\section{Some remarks on $K$-theory}\label{appendix:ktheory}
A reference for the topological $K$-theory of spaces, with an exposition of Clifford algebras and the Atiyah--Bott--Shapiro isomorphisms, is \cite{lawson1989spin}. For the ordinary (ungraded) $K$-theory of $C^*$-algebras, we refer to \cite{wegge1993k} which also treats Hilbert $C^*$-modules and f.g.p.\  modules, and \cite{blackadar1998k} which discusses graded $C^*$-algebras and $KK$-theory. The $K$-theory of real $C^*$-algebras is covered in detail in \cite{schroder1993k}; many of the results from complex $K$-theory carry over to the real case, but require rather different proofs. A different approach emphasising Clifford algebras, from which we have borrowed heavily, can be found in Karoubi's textbook \cite{karoubi1978k}, especially Chapter III, as well as references therein. The following results are taken from these references.

Let $C_0(\mathbb{R},\mathbb{C})$ (resp.\ $C_0(\mathbb{R},\mathbb{R})$) be the complex (resp.\ real) non-unital $C^*$-algebra of continuous functions $\mathbb{R}\rightarrow\mathbb{C}$ (resp.\ $\mathbb{R}\rightarrow\mathbb{R}$) vanishing at infinity. The \emph{suspension} $S\mathcal{A}$ of a complex (resp.\ real) $C^*$-algebra $\mathcal{A}$ is defined to be $S\mathcal{A}\coloneqq\mathcal{A}\otimes C_0(\mathbb{R},\mathbb{C})$ (resp.\ $S\mathcal{A}\coloneqq\mathcal{A}\otimes C_0(\mathbb{R},\mathbb{R})$). The higher $K$-theory groups are defined to be $K_n(\mathcal{A})\equiv K_0(S^n\mathcal{A})$. In the complex case, we have $\mathcal{A}\rtimes\mathbb{R}\cong \mathcal{A}\otimes C_0(\hat{\mathbb{R}},\mathbb{C})\cong S\mathcal{A}$, where $\hat{\mathbb{R}}$ is the Fourier transform (dual) of $\mathbb{R}$. In the real case, we have $\mathcal{A}\rtimes\mathbb{R}\cong \mathcal{A}\otimes C_0(\im \hat{\mathbb{R}})$ instead (see Equation \eqref{correctrealfourierisomorphism}), which suggests the definition $\bar{S}\mathcal{A}\coloneqq \mathcal{A}\otimes C_0(\im\hat{\mathbb{R}})$. It turns out that $\bar{S}$ is the $K$-theoretic ``inverse'' operation to $S$,
\begin{equation}
	K_n(S\bar{S}\mathcal{A})\cong K_n(\bar{S}S\mathcal{A})= K_n(\mathcal{A}\otimes C_0(\im\hat{\mathbb{R}})\otimes C_0(\mathbb{R},\mathbb{R}))\cong K_n(\mathcal{A}),\nonumber
\end{equation}
and so $K_n(\bar{S}\mathcal{A})\cong K_{n-1}(\mathcal{A})$, which is a special case of Theorem \ref{theorem:connesthom}. Bott periodicity in $K$-theory says that
\begin{alignat}{3}
	K_n(\mathcal{A})&\cong & K_{n+8}(\mathcal{A})\qquad &\mathrm{real\,\,case},\nonumber\\
	K_n(\mathcal{A})&\cong & K_{n+2}(\mathcal{A})\qquad &\mathrm{complex\,\,case},\nonumber
\end{alignat}
and leads to cyclic long exact sequences in $K$-theory, with six terms in the complex case and 24 terms in the real case.

Let $(\mathbb{T}^d,\varsigma)$ be the Real space $\mathbb{T}^d$ with involution $\mathbf{z}\mapsto\varsigma(\mathbf{z})\coloneqq\bar{\mathbf{z}}$, and let $C(\im\mathbb{T}^d)$ be the real $C^*$-algebra of continuous functions $f:\mathbb{T}^d\rightarrow\mathbb{C}$ such that $f(\varsigma(\mathbf{z}))=\overline{f(\mathbf{z})}$. The group $C^*$-algebras of $\mathbb{Z}^d$ are $\mathbb{R}\rtimes \mathbb{Z}^d \cong C(\im\mathbb{T}^d)$ and $\mathbb{C}\rtimes \mathbb{Z}^d \cong C(\mathbb{T}^d,\mathbb{C})$. There are isomorphisms
\begin{subequations}
\begin{alignat}{1}
	K_n(\mathcal{A}\otimes C(\im\mathbb{T}^1))\cong K_n(\mathcal{A})\oplus K_{n-1}(\mathcal{A})\qquad & \mathrm{real\,\,case},\label{ktheorycrossedproductwithzreal}\\
	K_n(\mathcal{A}\otimes C(\mathbb{T}^1,\mathbb{C}))\cong K_n(\mathcal{A})\oplus K_{n-1}(\mathcal{A})\qquad & \mathrm{complex\,\,case}\label{ktheorycrossedproductwithzcomplex},
\end{alignat}
\end{subequations}
which can be shown using the Pimsner--Voiculescu (PV) exact sequence (Theorem 10.2.1 of \cite{blackadar1998k} and Theorem 1.5.5 of \cite{schroder1993k}) for the $K$-theory of crossed products of $\mathcal{A}$ by $\mathbb{Z}$.

One way in which Clifford algebras are related to higher $K$-theory groups is through the approach of Karoubi. We have followed his definitions closely in defining $\mathbf{K}_n(\cdot)$ in Section \ref{section:karoubidifferenceconstruction}. For commutative $\mathcal{A}\cong C(X)$, Karoubi developed another model for $K$-theory \cite{karoubi1970algebres}, defining his $\bar{K}^{r,s}(X)$ groups in terms of homotopy classes of graded bundles, over $X$, of odd self-adjoint Fredholm operators which are $Cl_{r,s}$-antilinear (i.e.\ anticommutes with the Clifford generators). He showed that his $\bar{K}^{r,s}(X)$ are isomorphic to his $K'^{r,s}(C(X))$ and hence to our $\mathbf{K}_{s,r}(C(X))$. This should be compared to the use of $Cl_{n-1,0}$-antilinear skew-adjoint Fredholm operators (or odd $Cl_{n,0}$-linear self-adjoint Fredholm operators) as classifying spaces for $K^{-n}(\cdot)$ in \cite{atiyah1969index}. Some of the connections between the ABS constructions in \cite{atiyah1966k}, Karoubi's $K$-theory, as well as the twisted $K$-theory of \cite{donovan1970graded}, are decribed in \cite{karoubi2008clifford}. Finally, there is the most general $KK$-theory description\footnote{\emph{Right} Hilbert $C^*$-modules are more standard in $KK$-theory, whereas we have used \emph{left} modules throughout this paper.} (Definition 2.4.11 of \cite{schroder1993k},\cite{kasparov1995k})
\begin{alignat}{2}
	K_n(\mathcal{A})\cong KKO_{n,0}(\mathbb{R},\mathcal{A})&\coloneqq KKO(\mathbb{R},Cl_{n,0}\hat{\otimes}\mathcal{A})\qquad & \mathrm{real\,\,case},\nonumber\\
	K_n(\mathcal{A})\cong KK(\mathbb{C},\mathcal{A})&\coloneqq KK_n(\mathbb{C},\mathbb{C}l_n\hat{\otimes}\mathcal{A})\qquad & \mathrm{complex\,\,case}.\nonumber
\end{alignat}

%
%


%
\bibliographystyle{plain}
\bibliography{topologicalmattercmprevised}
%
%
%

\end{document}